%% file: main.tex
\documentclass[letterpaper,twocolumn,10pt]{article}
\usepackage{usenix2019_v3}

\usepackage{amsmath,amssymb,amsfonts,amsthm}
\usepackage{centernot}
\usepackage{graphicx}
\usepackage{booktabs}
\usepackage{tabularx}
\usepackage{array}
\usepackage{float}

\makeatletter
\newcommand{\FloatBarrier}{%
  \ifx\@deferlist\@empty\else
    \clearpage
  \fi}
\makeatother

\usepackage{listings}
\usepackage{algorithm}
\usepackage{algpseudocode}
\usepackage{subcaption}
\usepackage{xspace}
\usepackage{xcolor}
\usepackage{tikz}
\usetikzlibrary{arrows.meta, backgrounds, positioning, calc, shapes.geometric, fit, shadows, matrix, patterns, decorations.pathreplacing}
\microtypesetup{spacing=false}
\AtBeginDocument{\DeclareMathAlphabet{\mathcal}{OMS}{cmsy}{m}{n}}

\newcolumntype{L}[1]{>{\raggedright\arraybackslash}p{#1}}
\newcolumntype{C}[1]{>{\centering\arraybackslash}p{#1}}
\newcolumntype{R}[1]{>{\raggedleft\arraybackslash}p{#1}}
\newcolumntype{Y}{>{\centering\arraybackslash}X}

\newcommand{\aas}{\mbox{AAS}\xspace}
\newcommand{\cac}{\mbox{CAC}\xspace}
\newcommand{\tct}{\mbox{TCT}\xspace}

\newcommand{\pdds}{\mbox{PDDS}\xspace}
\newcommand{\efd}{\mbox{EFD}\xspace}

\theoremstyle{definition}
\newtheorem{theorem}{Theorem}
\newtheorem{definition}{Definition}
\newtheorem{proposition}{Proposition}

\theoremstyle{definition}
\newtheorem{remark}{Remark}

\newtheorem{example}{Example}

\definecolor{slate}{RGB}{112,128,144}
\definecolor{emerald}{RGB}{16,163,127}
\definecolor{navy}{RGB}{24,49,83}
\definecolor{crimson}{RGB}{190,30,45}
\definecolor{amber}{RGB}{217,119,6}
\definecolor{cobalt}{RGB}{29,78,216}
\definecolor{teal}{RGB}{13,148,136}
\definecolor{purple}{RGB}{126,34,206}
\definecolor{softgray}{RGB}{247,249,252}
\definecolor{bordergray}{RGB}{218,224,233}
\definecolor{darkslate}{RGB}{60,72,88}
\definecolor{linegray}{RGB}{140,155,175}

\hypersetup{
  pdftitle={Before Agents Act: Assurance-Aware Semantic Scheduling for Evidence Acquisition in Distributed Systems},
  pdfauthor={Jun He; Deying Yu},
  pdfsubject={Scheduling evidence acquisition under risk, freshness, and epistemic dependencies},
  pdfkeywords={semantic scheduling, assurance obligations, cognitive admission control, epistemic fault domains, evidence acquisition, post-deterministic distributed systems}
}

\begin{document}
\raggedbottom

\title{\bf Before Agents Act: Assurance-Aware Semantic Scheduling\\for Evidence Acquisition in Distributed Systems}

\ifdefined\AASAnonymous
\author{Anonymous authors}
\hypersetup{pdfauthor={}}
\else
\author{
  {\rm Jun He}\\
  OpenKedge.io
  \and
  {\rm Deying Yu}\\
  OpenKedge.io
}
\fi

\maketitle

\begin{abstract}
Tool-using agents can initiate consequential infrastructure changes, yet evidence required for admission may expire while other checks run or depend on a shared fault domain. We formulate evidence acquisition as joint witness selection and scheduling under quorum, diversity, freshness, deadline, and resource constraints. \textbf{Assurance-Aware Semantic Scheduling (\aas)} combines integer-program selection, dispatch-aware temporal scheduling, bounded diagnostic expansion, and receipt-aware repair. Formal results state the assumptions needed for dispatch-time freshness and finite diagnostic expansion. In three generated infrastructure workloads, \aas produces 1,075/1,200 valid candidates versus 647/1,200 for constraint-aware forward scheduling; stale candidates fall from 440 to 12. Paired sensitivity studies reuse the same instances and operation latency draws across parameter settings. A corrected timeout intervention finds 18/20 admissions with repair or full resynthesis versus 0/20 for a static plan, with lower committed cost when receipts are reused. On 20 constructed cases requiring a certified decomposition cut, refinement recovers an oracle-matching feasible plan every time. These are controlled simulation results; the bounded oracle shares a temporal search component, and transfer to deployed systems remains untested.
\end{abstract}

\input{sections/01-introduction}
\input{sections/02-background-motivation}
\input{sections/03-problem-formulation}
\input{sections/04-temporal-freshness}
\input{sections/05-epistemic-scheduling}
\input{sections/06-consequential-acquisition}
\input{sections/07-adaptive-replanning}
\input{sections/08-reference-scheduler}
\input{sections/09-evaluation-design}
\input{sections/10-discussion-limitations}
\input{sections/11-related-work}
\input{sections/12-conclusion}

\smallskip
\noindent\textbf{AI-Use Disclosure.} OpenAI Codex and Google Antigravity assisted with draft structuring, language and notation editing, implementation debugging, experiment auditing, figure preparation, and reference formatting. The authors are responsible for verifying the code, proofs, sources, and reported results.

\bibliographystyle{unsrt}
\bibliography{refs}

\appendix
\input{appendices/a-formal-proofs}
\input{appendices/b-operation-catalogue}
\input{appendices/c-open-problems}

\end{document}

%% file: sections/01-introduction.tex
\section{Introduction}
\label{sec:introduction}

Agents that operate shared infrastructure may propose database failover, network isolation, access-control changes, or deployment rollouts. Authorization alone does not establish that the current system state supports a proposed action. Cognitive Admission Control (\cac)~\cite{he2026cac} expresses this additional requirement as risk-conditioned obligations $\Omega(q,s) = \{\omega_1, \ldots, \omega_k\}$ for action $q$ in modeled state $s$. The admission gateway discharges each obligation against typed evidence before allowing dispatch.

The gateway defines the evidence required for admission. It does not select verifiers or schedule their execution. The resulting systems question is:
\begin{center}
\emph{Given a set of assurance obligations, how should an autonomous system acquire, schedule, refresh, and compose the evidence needed to satisfy them under cost, latency, freshness, dependency, and fault-domain constraints?}
\end{center}

\begin{figure}[t]
\centering
\begin{tikzpicture}[
  box/.style={draw=navy,rounded corners=2pt,fill=softgray,align=center,text width=6.8cm,inner sep=4pt,font=\small},
  arr/.style={-{Latex[length=2.5mm]},thick,darkslate},
  annot/.style={font=\scriptsize,text=cobalt,align=center}]
  \node[box] (cac) {\textbf{Cognitive Admission Control (\cac)}\\\footnotesize\textit{``What evidence is required?''}\\\scriptsize Maps risk $\rho(q,s)$ to obligations $\Omega(q,s)$};
  \node[box,below=0.7cm of cac] (aas) {\textbf{Assurance-Aware Semantic Scheduling (\aas)}\\\footnotesize\textit{``How is that evidence acquired in time?''}\\\scriptsize Selects and schedules operations in plan $\pi$};
  \node[box,below=0.7cm of aas] (tct) {\textbf{Transactional Context Tracking \& Execution (\tct)}\\\footnotesize\textit{``When can the action commit?''}\\\scriptsize Enforces atomic dispatch and single-use admission};
  \draw[arr] (cac) -- node[midway,right=2pt,annot] {Obligations $\Omega(q,s)$} (aas);
  \draw[arr] (aas) -- node[midway,right=2pt,annot] {Witness Manifest $\mathcal{W}_q$ \& Cert $\mathcal{C}_q$} (tct);
\end{tikzpicture}
\caption{The Post-Deterministic Distributed Systems (\pdds) control stack. \aas bridges the architectural gap between declarative admission specification (\cac) and transactional execution (\tct).}
\label{fig:pdds-progression}
\end{figure}

Consider a concrete operational scenario: an autonomous agent proposes a failover of a primary PostgreSQL cluster. \cac dictates that admission requires:
(i)~a verified cluster topology snapshot,
(ii)~an independent confirmation of replication lag $\le 100$\,ms,
(iii)~a cryptographically signed node fencing receipt,
(iv)~an Epistemic Fault Domain (\efd) diversity cut $\kappa_E \ge 2$, and
(v)~evidence freshness $\Delta t \le 5$\,s.
Although the admission requirements are explicit, the execution runtime faces a complex web of operational dilemmas:
\begin{itemize}
    \item \textbf{Operation Selection \& Multi-Obligation Coverage:} Should the runtime query low-level telemetry, invoke a heavyweight formal verification tool, run an in-memory simulation, or query a redundant model verifier? An attestation from a consensus coordinator might satisfy both topology and fencing obligations simultaneously.
    \item \textbf{Temporal Freshness Intersection:} If fencing verification requires 6\,s, while replication lag evidence expires after 5\,s, a serial schedule $A \to B \to C$ will arrive at dispatch with stale replication evidence. The scheduler must coordinate operations such that their validity intervals overlap at dispatch time: $\bigcap_{i} \operatorname{ValidityInterval}(e_i) \neq \emptyset$.
    \item \textbf{Epistemic Dependencies vs. Cost:} An optimizer selecting the cheapest verifiers may select two distinct services that secretly depend on the same faulty upstream telemetry broker, violating $\kappa_E \ge 2$ and inducing common-mode admission failure.
    \item \textbf{Consequential Diagnostic Actions:} To verify rollback feasibility, the system might need to execute an active database probe or snapshot lock. But that diagnostic probe is itself a mutating, consequential operation that requires its own admission certificate, risking recursive deadlocks.
    \item \textbf{Adaptive Replanning Under Shifted Risk:} If early evidence reveals that the candidate replica is partially degraded, the risk profile shifts from $\rho_0$ to $\rho_1$, altering the required obligation set $\Omega_0 \to \Omega_1$ and requiring dynamic schedule adaptation.
\end{itemize}

The timing problem matters because agent tool use increasingly crosses an authorization boundary at execution time~\cite{toolguardian2026}. Additional reasoning cannot replace a current observation of a partitioned switch, and an early observation can expire before a slow independent check completes. Verification-portfolio work, including VP-CONTROL~\cite{vpcontrol2026}, studies verifier selection, evidence-source diversity, and commit-time guards; \aas adds an explicit schedule for expiring operational evidence and for diagnostics that themselves require admission. Classical task scheduling supplies precedence and capacity constraints, but the admissibility of an evidence plan also depends on witness independence and validity at dispatch.

\subsection{Contributions}
\label{sec:contributions}

We study \textbf{Assurance-Aware Semantic Scheduling (\aas)} through five contributions:
\begin{enumerate}
    \item \textbf{Problem formulation.} ASP selects and schedules evidence-producing operations under hard admission, budget, and deadline constraints. It separates prospective feasibility from the gateway's decision on realized receipts (Section~\ref{sec:problem-formulation}).
    \item \textbf{Temporal scheduling.} A dispatch-time freshness criterion and backward scheduler place short-lived observations after slower checks when precedence permits. The guarantee is conditional on modeled latency and clock bounds; runtime receipts are checked again at the gateway (Section~\ref{sec:temporal-freshness}).
    \item \textbf{Fault-domain-aware selection.} An integer program assigns witnesses to obligations while enforcing modeled \efd cuts and charging shared operations once. A decomposed solver coordinates selection with temporal feasibility; cost optimality requires exact subproblem search and full cost modeling (Section~\ref{sec:epistemic-scheduling}).
    \item \textbf{Consequential diagnostics and repair.} Risk stratification bounds recursive sub-admission under explicit timeout assumptions. A generation-aware controller replans from the live clock, excludes failed operations, and reuses receipts only while their original observations remain valid (Sections~\ref{sec:consequential-acquisition} and~\ref{sec:adaptive-replanning}).
    \item \textbf{Controlled evaluation.} Generated infrastructure workloads and paired sensitivity studies measure admission, freshness, diversity, and cost. Selected-operation timeout trials isolate receipt reuse from resynthesis and a static control; constructed cut-producing cases test both supported refinement certificates against a bounded exact oracle (Section~\ref{sec:evaluation-design}).
\end{enumerate}

Section~\ref{sec:reference-scheduler} gives the reference controller and algorithms; Section~\ref{sec:evaluation-design} describes the evaluation.

%% file: sections/02-background-motivation.tex
\section{Background and Motivation}
\label{sec:background-motivation}

\subsection{Cognitive Admission Control (\cac) Recap}
\label{sec:cac-recap}

Cognitive Admission Control (\cac)~\cite{he2026cac} defines an admission boundary for agentic systems operating in high-consequence environments. When an agent proposes a mutating action $q$ at modeled state $s$, the control plane evaluates a multidimensional risk vector:
\begin{equation}
\rho(q,s) = \langle C, B_{\mathrm{blast}}, I, U, D_{\mathrm{exp}}, P \rangle \in \mathcal{R},
\label{eq:risk-vector}
\end{equation}
capturing consequence severity ($C$), blast radius ($B_{\mathrm{blast}}$), irreversibility ($I$), observational uncertainty ($U$), dependency exposure ($D_{\mathrm{exp}}$), and adverse plausibility ($P$). Based on $\rho(q,s)$, the admission policy $\Pi$ resolves a set of mandatory assurance obligations:
\begin{equation}
\Omega_{\Pi}(q,s) = F_{\Pi}(\rho(q,s), q, s) = \{\omega_1, \omega_2, \ldots, \omega_k\}.
\label{eq:obligation-derivation}
\end{equation}
Each obligation $\omega \in \Omega$ defines a formal predicate $\phi_\omega$, acceptable evidence classes $\mathcal{E}_\omega$, structural quorum constraints, validity duration $\Delta t_\omega$, and epistemic diversity requirements $\kappa_E(\omega) \ge k$.

To satisfy an obligation $\omega$, the system must provide a set of cryptographic evidence receipts $W \subseteq E$. The evaluator computes a deterministic ternary discharge:
\begin{equation}
\begin{aligned}
\operatorname{Discharge}(\omega, E, s, \chi) \in \{&\textsc{Satisfied}(W),\\
&\textsc{Violated}(W), \textsc{Unknown}(c)\},
\end{aligned}
\label{eq:ternary-discharge}
\end{equation}
where $\chi$ denotes the evaluation context. If all required obligations return $\textsc{Satisfied}$, the controller mints a single-use admission certificate $\mathcal{C}_q$ containing an immutable proposal digest, the witness manifest $\mathcal{W}_q = \{(\omega, W_\omega)\}$, state version guards $G_q$, and an expiry bound $t_{\mathrm{exp}}$. The execution gateway validates $\mathcal{C}_q$ and performs an atomic compare-and-swap on the certificate nonce before forwarding the action to external infrastructure.

\subsection{Reasoning vs. Evidence Acquisition}
\label{sec:reasoning-vs-evidence}

A critical theoretical finding in \cac is the non-fungibility of cognitive resources and empirical evidence. Let $R$ denote a cognitive resource budget encompassing model reasoning tokens $T$, context window $C_{\mathrm{ctx}}$, tool invocations $A_{\mathrm{tool}}$, and verification compute $V_{\mathrm{cpu}}$. The epistemic lifecycle follows a strict two-stage transformation:
\begin{equation}
R \xrightarrow{\text{acquisition}} E \xrightarrow{\text{discharge under }\Pi} \text{Admission Verdict}.
\label{eq:two-stage}
\end{equation}
Allocating additional cognitive compute (e.g., test-time reasoning tokens~\cite{snell2024scaling} or self-reflection~\cite{shinn2023reflexion}) can synthesize more sophisticated proof strategies or optimize query parameters. However, \emph{compute cannot conjure external truth}: no amount of internal reasoning can prove that an external replication stream is unpartitioned without acquiring an authentic, current receipt from an external observation channel.

Consequently, cognitive resource planning must not be treated as an isolated prompt-engineering problem. Instead, cognitive compute is simply one resource category within a general systems scheduling problem that allocates both computational reasoning and empirical observation bandwidth.

\begin{figure*}[t]
\centering
\begin{tikzpicture}[
  x=1.04cm,y=1cm,
  timeaxis/.style={-{Latex[length=2mm]},thick,darkslate},
  task/.style={draw=navy,fill=softgray,rounded corners=2pt},
  validity/.style={fill=teal!15,draw=teal},
  expired/.style={fill=crimson!12,draw=crimson}]

  \node[font=\bfseries\small,anchor=west] at (0,6.65) {A. Serial forward schedule: lag evidence expires before dispatch};
  \draw[timeaxis] (0,5.52) -- (12.85,5.52) node[right,font=\scriptsize] {$t$ (s)};
  \draw[task] (0,5.72) rectangle (1.2,6.22);
  \draw[task] (1.2,5.72) rectangle (2.4,6.22);
  \draw[task] (2.4,5.72) rectangle (8.4,6.22);
  \node[font=\tiny,align=center] at (0.6,5.97) {$a_{\mathrm{topo}}$\\$0$--$1.2$};
  \node[font=\tiny,align=center] at (1.8,5.97) {$a_{\mathrm{lag}}$\\$1.2$--$2.4$};
  \node[font=\scriptsize] at (5.4,5.97) {$a_{\mathrm{fence}}$ (fencing verification), $2.4$--$8.4$\,s};
  \draw[expired] (2.4,4.92) rectangle (7.4,5.22);
  \node[font=\scriptsize,text=crimson!85!black] at (4.9,5.07) {$e_{\mathrm{lag}}$ valid: $[2.4,7.4]$\,s};
  \draw[validity] (8.4,4.47) rectangle (12.4,4.77);
  \draw[teal,->] (12.4,4.62) -- (12.7,4.62);
  \node[font=\scriptsize,text=teal!70!black] at (10.4,4.62) {$e_{\mathrm{fence}}$ valid after $8.4$\,s};
  \draw[crimson,thick,dashed] (7.4,4.86) -- (7.4,5.45);
  \draw[darkslate,thick,dotted] (8.4,4.43) -- (8.4,5.45);
  \node[font=\scriptsize,text=crimson,anchor=west] at (2.4,4.27) {$e_{\mathrm{lag}}$ expires at $7.4$\,s $<$ dispatch at $8.4$\,s};

  \node[font=\bfseries\small,anchor=west] at (0,3.57) {B. Assurance-aware schedule: all receipts are fresh at dispatch};
  \draw[timeaxis] (0,1.69) -- (12.85,1.69) node[right,font=\scriptsize] {$t$ (s)};
  \draw[task] (0,2.62) rectangle (1.2,3.12);
  \draw[task] (1.2,2.62) rectangle (7.2,3.12);
  \draw[task] (6.0,1.92) rectangle (7.2,2.42);
  \node[font=\tiny,align=center] at (0.6,2.87) {$a_{\mathrm{topo}}$\\$0$--$1.2$};
  \node[font=\scriptsize] at (4.2,2.87) {$a_{\mathrm{fence}}$ (long check), $1.2$--$7.2$\,s};
  \node[font=\tiny,align=center] at (6.6,2.17) {$a_{\mathrm{lag}}$\\$6$--$7.2$};
  \draw[validity] (1.2,1.12) rectangle (12.2,1.42);
  \draw[teal,->] (12.2,1.27) -- (12.55,1.27);
  \node[font=\scriptsize,text=teal!70!black] at (4.1,1.27) {$e_{\mathrm{topo}}$ valid};
  \draw[validity] (7.2,0.68) rectangle (12.5,0.98);
  \draw[teal,->] (12.5,0.83) -- (12.8,0.83);
  \node[font=\scriptsize,text=teal!70!black] at (9.85,0.83) {$e_{\mathrm{fence}}$ valid};
  \draw[validity] (7.2,0.24) rectangle (12.2,0.54);
  \node[font=\scriptsize,text=teal!70!black] at (9.7,0.39) {$e_{\mathrm{lag}}$ valid};
  \draw[decorate,decoration={brace,amplitude=4pt,mirror},thick,cobalt] (7.2,0.08) -- (12.2,0.08)
    node[midway,below=6pt,font=\scriptsize,text=cobalt] {Common validity: $[7.2,12.2]$\,s};

\end{tikzpicture}
\caption{Motivating failure under naive scheduling versus Assurance-Aware Semantic Scheduling. In Case~A, serial forward execution causes short-lived evidence ($e_{\mathrm{lag}}$, validity 5\,s, finished at $t=2.4$\,s) to expire at $t=7.4$\,s while long-latency fencing verification ($a_{\mathrm{fence}}$, latency 6\,s) completes at $t=8.4$\,s, aborting dispatch. In Case~B, \aas pre-fetches long-lived topology evidence, schedules long-latency fencing early, and fires $a_{\mathrm{lag}}$ Just-in-Time on $[6.0, 7.2]$\,s, guaranteeing a valid dispatch freshness intersection $[7.2, 12.2]$\,s.}
\label{fig:motivating-timeline}
\end{figure*}

\subsection{Motivating Failures in Naive Schedulers}
\label{sec:motivating-failures}

To demonstrate why existing schedulers fail to satisfy assurance obligations, consider three concrete pathologies:

\paragraph{Pathology 1: Dispatch-Time Freshness Decay.}
Suppose a database failover requires three obligations: $\omega_{\mathrm{topo}}$ (cluster topology, valid for 30\,s), $\omega_{\mathrm{lag}}$ (replication lag under 100\,ms, valid for 5\,s), and $\omega_{\mathrm{fence}}$ (independent confirmation that the dead primary is isolated, valid for 10\,s). Measuring topology takes 1.2\,s, measuring lag takes 1.2\,s, while executing fencing verification requires 6.0\,s. A standard serial workflow engine executes tasks in declared sequence ($\omega_{\mathrm{topo}} \to \omega_{\mathrm{lag}} \to \omega_{\mathrm{fence}}$), as illustrated in Figure~\ref{fig:motivating-timeline}(Case~A). Under serial dispatch, $a_{\mathrm{topo}}$ executes on $[0, 1.2]$\,s, $a_{\mathrm{lag}}$ on $[1.2, 2.4]$\,s (yielding receipt $e_{\mathrm{lag}}$ valid on $[2.4, 7.4]$\,s), and $a_{\mathrm{fence}}$ on $[2.4, 8.4]$\,s. By the time fencing verification concludes at $t=8.4$\,s, the replication lag receipt issued at $t=2.4$\,s has already expired at $t=7.4$\,s ($7.4 < 8.4$). At the execution gateway, the certificate is rejected for stale evidence. In contrast, \aas pre-fetches long-lived topology evidence, executes long-latency fencing on $[1.2, 7.2]$\,s, and launches $a_{\mathrm{lag}}$ Just-in-Time on $[6.0, 7.2]$\,s, ensuring all validity intervals overlap at dispatch $t_{\mathrm{dispatch}} = 7.2$\,s (Case~B).

\paragraph{Pathology 2: Common-Mode Epistemic Correlation.}
Assume $\omega_{\mathrm{lag}}$ requires an Epistemic Fault Domain cut $\kappa_E \ge 2$, mandating two structurally independent witnesses. The scheduler can query three verifiers:
\begin{itemize}
    \item Verifier $V_A$: cost \$0.01, latency 400\,ms, queries Prometheus node-exporter $S_1$.
    \item Verifier $V_B$: cost \$0.01, latency 450\,ms, queries CloudWatch agent $S_2$.
    \item Verifier $V_C$: cost \$0.10, latency 2000\,ms, executes a direct Postgres replication query $S_3$.
\end{itemize}
A cost-minimizing scheduler (such as classic knapsack or greedy cost solvers) selects $\{V_A, V_B\}$ for a total cost of \$0.02. However, unbeknownst to the scheduler, both $S_1$ and $S_2$ pull their metrics from a shared telemetry proxy that has stalled due to memory pressure. Both verifiers return identical, stale lag reports. Because they share an underlying epistemic fault domain ($\mathcal{X}(V_A) \cap \mathcal{X}(V_B) \neq \emptyset$), their structural diversity cut is $\kappa_E = 1$. The admission controller rejects the pair. A semantic scheduler must understand epistemic dependency topology: it must recognize that $\{V_A, V_C\}$ (\$0.11) is the minimal \emph{admissible} set, whereas $\{V_A, V_B\}$ is an inadmissible waste of resources.

\paragraph{Pathology 3: Remediation Deadlock from Consequential Diagnostics.}
To satisfy an obligation regarding storage rollback feasibility, the scheduler selects a diagnostic operation $a_{\mathrm{diag}}$ that executes an ephemeral filesystem freeze and snapshot probe. However, freezing production storage is itself a consequential operation that carries high risk ($\rho(a_{\mathrm{diag}}, s) \ge \tau_{\Pi}$). Under \cac, $a_{\mathrm{diag}}$ cannot execute without its own admission certificate, which demands evidence of cluster quiescence! If the scheduler spawns recursive diagnostics without well-foundedness constraints, the system enters an infinite admission cycle or deadlocks under shared budget depletion.

%% file: sections/03-problem-formulation.tex
\section{The Assurance Scheduling Problem}
\label{sec:problem-formulation}

The \textbf{Assurance Scheduling Problem (ASP)} maps declarative \cac obligations to a feasible, cost-aware evidence-acquisition plan. Optimality applies only under the exact-search and full-model conditions stated in Section~\ref{sec:joint-cost-efd}.

\subsection{Formal Problem Inputs}
\label{sec:formal-inputs}

An instance of the Assurance Scheduling Problem is defined by the 5-tuple:
\begin{equation}
\mathcal{P} = \langle \Omega, \mathcal{A}, B, \mathcal{D}, \mathcal{T} \rangle,
\label{eq:asp-tuple}
\end{equation}
where each component is specified as follows:

\begin{definition}[Assurance Obligations $\Omega$]
$\Omega = \{\omega_1, \ldots, \omega_k\}$ is the finite set of required obligations emitted by the \cac policy resolver $\Omega_{\Pi}(q,s) = F_\Pi(\rho(q,s), q, s)$. Each obligation $\omega = \langle \phi_\omega, \mathcal{E}_\omega, \Delta t_\omega, \kappa_E^{\min}(\omega), \mathcal{Q}_\omega \rangle$ specifies:
\begin{itemize}
    \item a semantic predicate $\phi_\omega: \mathcal{S}^* \times \mathbb{E}_{\mathrm{inst}}^* \to \{\text{true}, \text{false}\}$,
    \item acceptable evidence classes $\mathcal{E}_\omega \subseteq \mathbb{E}$ (e.g., attestation, telemetry receipt, proof trace),
    \item a maximum allowable evidence age $\Delta t_\omega \in \mathbb{R}^+$,
    \item a minimum epistemic diversity cut threshold $\kappa_E^{\min}(\omega) \in \mathbb{N}_{\ge 1}$,
    \item a quorum rule $\mathcal{Q}_\omega = \langle k_\omega, \mathrm{pol}_\omega \rangle$ specifying required witness counts ($k_\omega \in \mathbb{N}_{\ge 1}$) and authorization polarities.
\end{itemize}
\end{definition}

\begin{definition}[Assurance Operations $\mathcal{A}$]
$\mathcal{A} = \{a_1, \ldots, a_m\}$ represents the catalogue of available evidence-producing operations. Each operation $a \in \mathcal{A}$ is a typed systems action characterized by:
\begin{equation}
\begin{aligned}
a = \langle &c(a), \ell(a), \hat{\ell}(a), \operatorname{tok}(a), \operatorname{Cov}(a), \\
            &\Delta t(a), \operatorname{Pre}(a), \mathcal{X}(a), \rho(a) \rangle,
\end{aligned}
\label{eq:assurance-op-signature}
\end{equation}
where:
\begin{itemize}
    \item $c(a) \in \mathbb{R}_{\ge 0}$ is financial or monetary cost (e.g., API charges, egress fees),
    \item $\ell(a) \in \mathbb{R}^+$ is nominal execution latency, and $\hat{\ell}(a) \ge \ell(a)$ is a conservative upper bound accounting for network jitter and tail variance,
    \item $\operatorname{tok}(a) \in \mathbb{N}_0$ is cognitive token consumption (for LLM verifiers or reasoning-based checkers),
    \item $\operatorname{Cov}(a) \subseteq \Omega \times \mathbb{E}$ denotes the modeled potential coverage: pairs $(\omega, \varepsilon)$ indicating that $a$ can produce evidence of class $\varepsilon \in \mathcal{E}_\omega$ relevant to evaluating obligation $\omega$,
    \item $\Delta t(a) \in \mathbb{R}^+$ is the intrinsic physical lifespan of evidence emitted by $a$,
    \item $\operatorname{Pre}(a) \subseteq \mathcal{A}$ is the set of operational prerequisites that must successfully terminate before $a$ can execute,
    \item $\mathcal{X}(a) \subseteq \mathbb{F}$ is the epistemic exposure set, enumerating all underlying shared fault domains (e.g., host kernels, DNS resolvers, metrics daemons, model weights) in the fault domain universe $\mathbb{F} = \{f_1, \ldots, f_p\}$,
    \item $\rho(a) \in \mathcal{R}$ is the operational risk vector of $a$, determining whether $a$ is non-consequential ($\rho(a) \prec \tau_{\Pi}$) or consequential ($\rho(a) \succeq \tau_{\Pi}$).
\end{itemize}
When operation $a$ executes, it produces a non-empty set of concrete evidence receipts $\operatorname{Emit}(a) \subseteq \mathbb{E}_{\mathrm{inst}}$. A single operation $a$ may emit multiple receipts, and a single receipt $e \in \operatorname{Emit}(a)$ may supply witness evidence supporting multiple obligations $\omega \in \Omega$. Each receipt $e$ records an observation timestamp $t_{\mathrm{obs}}(e) \le C(a)$, where $C(a)$ is the completion timestamp of $a$.
\end{definition}

\begin{definition}[Resource Budgets $B$]
$B = \langle B_{\$}, B_{\mathrm{lat}}, B_{\mathrm{tok}}, B_{\mathrm{risk}}, K_{\max} \rangle$ bounds total allowable dollar expenditure, end-to-end wall-clock latency, cognitive reasoning tokens, cumulative operational risk perturbation, and the maximum number of concurrent active operations $K_{\max} \in \mathbb{N}_{\ge 1} \cup \{\infty\}$.
\end{definition}

\begin{definition}[Assurance Dependencies $\mathcal{D}$]
$\mathcal{D} = \langle D_{\mathrm{op}}, D_{\mathrm{epi}} \rangle$ encodes:
\begin{itemize}
    \item Operational precedence constraints $D_{\mathrm{op}} \subseteq \mathcal{A} \times \mathcal{A}$, where $(a_i, a_j) \in D_{\mathrm{op}}$ implies $a_j$ cannot start until $a_i$ successfully completes ($\bigcup_{a \in \mathcal{A}} \{(u, a) \mid u \in \operatorname{Pre}(a)\} \subseteq D_{\mathrm{op}}$),
    \item Epistemic dependency topology $D_{\mathrm{epi}} = (\mathcal{A} \cup \mathbb{F}, \mathcal{E}_{\mathrm{dep}})$, defining the bipartite graph linking operations to shared failure domains ($\mathcal{E}_{\mathrm{dep}} = \{(a, f) \mid a \in \mathcal{A}, f \in \mathcal{X}(a)\}$).
\end{itemize}
\end{definition}

\begin{definition}[Temporal Constraints $\mathcal{T}$]
$\mathcal{T} = \langle t_0, T_{\mathrm{dead}}, \delta_{\mathrm{exec}}, \delta_{\mathrm{verify}}, \delta_{\mathrm{mint}}, \delta_{\mathrm{disp}}, \epsilon_{\mathrm{skew}} \rangle$ specifies the scheduling epoch, proposal deadline, protected execution window, receipt-verification time, certificate-minting time, gateway dispatch delay, and clock uncertainty bound, respectively. All delays are nonnegative. Define the post-completion readiness delay once as $\delta_{\mathrm{ready}}=\delta_{\mathrm{verify}}+\delta_{\mathrm{mint}}+\delta_{\mathrm{disp}}$.
\end{definition}

\subsection{Assurance Plan Representation}
\label{sec:plan-representation}

An assurance plan $\pi$ is a scheduled directed acyclic graph:
\begin{equation}
\pi = \langle V_\pi, E_\pi, \sigma_\pi, \mathcal{M}_\pi \rangle,
\label{eq:plan-dag}
\end{equation}
where:
\begin{itemize}
    \item $V_\pi \subseteq \mathcal{A}$ is the subset of scheduled operations,
    \item $E_\pi \subseteq V_\pi \times V_\pi$ is the set of precedence edges satisfying $D_{\mathrm{op}} \cap (V_\pi \times V_\pi) \subseteq E_\pi$,
    \item $\sigma_\pi: V_\pi \to [t_0, T_{\mathrm{dead}}]$ assigns a scheduled start time to each operation,
    \item $\mathcal{M}_\pi: \Omega \to 2^{V_\pi}$ assigns candidate witness operations $\mathcal{M}_\pi(\omega) \subseteq V_\pi$ to each obligation $\omega \in \Omega$.
\end{itemize}
Under conservative latency bounds $\hat{\ell}(a)$, the modeled completion time of operation $a \in V_\pi$ is $\hat{C}_\pi(a) = \sigma_\pi(a) + \hat{\ell}(a)$. The nominal completion time is $C_\pi(a) = \sigma_\pi(a) + \ell(a)$. The proposed target dispatch time of the primary action $q$ is:
\begin{equation}
t_{\mathrm{dispatch}}(\pi) \ge \max_{a \in V_\pi} \hat{C}_\pi(a) + \delta_{\mathrm{ready}}.
\label{eq:dispatch-time}
\end{equation}

\subsection{Prospective Feasibility vs. Realized CAC Discharge}
\label{sec:prospective-vs-realized}

\textbf{Pre-execution plan feasibility does not establish actual obligation discharge}:
\begin{enumerate}
    \item \textbf{Prospective Plan Feasibility ($\pi \models_{\mathrm{pros}} \mathcal{P}$):} Evaluated at compile time by \aas. A plan is prospectively feasible if, under modeled capabilities $\operatorname{Cov}(a)$, latency bounds $\hat{\ell}(a)$, and exposure sets $\mathcal{X}(a)$, its schedule satisfies precedence, budgets, deadline windows, quorum cardinalities, and epistemic diversity cuts. Actual execution can still exceed the latency bound or fail to emit affirmative evidence.
    \item \textbf{Realized Obligation Discharge ($\mathcal{W}_q \models_{\mathrm{CAC}} \Omega$):} Evaluated at runtime by \cac. When operations execute, they emit concrete evidence receipts $E_{\mathrm{store}} = \bigcup_{a \in V_\pi} \operatorname{Emit}(a)$. \cac evaluates the formal predicate $\operatorname{Discharge}(\omega, E_{\mathrm{store}}, s, \chi)$ for each $\omega \in \Omega$. An operation $a$ with $(\omega, \varepsilon) \in \operatorname{Cov}(a)$ may fail, timeout, or return evidence refuting the predicate ($\textsc{Violated}$). Only when all obligations return $\textsc{Satisfied}(W_\omega)$ does \cac mint certificate $\mathcal{C}_q$, granting dispatch eligibility.
\end{enumerate}

\begin{definition}[Prospectively Admissible Assurance Plan]
\label{def:prospective-admissibility}
A plan $\pi = \langle V_\pi, E_\pi, \sigma_\pi, \mathcal{M}_\pi \rangle$ is prospectively admissible for $\mathcal{P}$, written $\pi \models_{\mathrm{pros}} \mathcal{P}$, if and only if it strictly satisfies all of the following hard invariants:
\begin{enumerate}
    \item \textbf{Precedence Feasibility:}
    \begin{equation}
    \forall (u, v) \in E_\pi: \quad \sigma_\pi(u) + \hat{\ell}(u) \le \sigma_\pi(v).
    \label{eq:precedence-constraint}
    \end{equation}
    \item \textbf{Deadline and Execution Window Feasibility:}
    \begin{equation}
    t_{\mathrm{dispatch}}(\pi) + \delta_{\mathrm{exec}} \le T_{\mathrm{dead}}.
    \label{eq:deadline-constraint}
    \end{equation}
    \item \textbf{Resource Budget Invariants:}
    \begin{align}
    \operatorname{Cost}_{\mathrm{total}}(\pi) &= \sum_{a \in V_\pi} c(a) \le B_{\$}, \label{eq:cost-budget} \\
    \operatorname{Latency}(\pi) &= t_{\mathrm{dispatch}}(\pi) - t_0 \le B_{\mathrm{lat}}, \label{eq:latency-budget} \\
    \operatorname{Tokens}(\pi) &= \sum_{a \in V_\pi} \operatorname{tok}(a) \le B_{\mathrm{tok}}, \label{eq:token-budget} \\
    \operatorname{Exposure}(\pi) &= \sum_{a \in V_\pi} \|\rho(a)\| \le B_{\mathrm{risk}}, \label{eq:risk-budget}
    \end{align}
    where $V_\pi$ includes each operation of any admitted diagnostic sub-plan exactly once. Shared prerequisites are likewise charged once.
    If a concurrency limit $K_{\max} < \infty$ is specified, then for all $t \in [t_0, T_{\mathrm{dead}}]$:
    \begin{equation}
    \left| \{a \in V_\pi \mid \sigma_\pi(a) \le t < \sigma_\pi(a) + \hat{\ell}(a)\} \right| \le K_{\max}.
    \label{eq:concurrency-constraint}
    \end{equation}
    \item \textbf{Prospective Witness Quorum Coverage:}
    For each obligation $\omega \in \Omega$, the assigned operation set $\mathcal{M}_\pi(\omega) \subseteq V_\pi$ satisfies:
    \begin{equation}
    \forall a \in \mathcal{M}_\pi(\omega): \; \exists \varepsilon \in \mathcal{E}_\omega \text{ s.t. } (\omega, \varepsilon) \in \operatorname{Cov}(a),
    \end{equation}
    and the candidate witness cardinality meets the quorum rule:
    \begin{equation}
    |\mathcal{M}_\pi(\omega)| \ge k_\omega.
    \label{eq:quorum-constraint}
    \end{equation}
    \item \textbf{Prospective Dispatch Freshness Overlap:}
    For every obligation $\omega \in \Omega$ and assigned operation $a \in \mathcal{M}_\pi(\omega)$, let the effective lifespan be $\Delta t_{\mathrm{eff}}(a, \omega) = \min(\Delta t_\omega, \Delta t(a))$. The target dispatch time $t_{\mathrm{dispatch}}(\pi)$ and execution window $\delta_{\mathrm{exec}}$ must be covered by the prospective validity window:
    \begin{equation}
    t_{\mathrm{dispatch}}(\pi) + \delta_{\mathrm{exec}} \le \sigma_\pi(a) + \Delta t_{\mathrm{eff}}(a, \omega) - \epsilon_{\mathrm{skew}}.
    \label{eq:freshness-invariant}
    \end{equation}
    \item \textbf{Epistemic Fault Domain Diversity Cut:}
    For every obligation $\omega \in \Omega$, candidate witnesses $\mathcal{M}_\pi(\omega)$ satisfy:
    \begin{equation}
    \kappa_E(\mathcal{M}_\pi(\omega),\Gamma_\omega) \ge \kappa_E^{\min}(\omega),
    \label{eq:efd-invariant}
    \end{equation}
    where $\Gamma_\omega$ is the positive $k_\omega$-of-$|W|$ approval rule and $\kappa_E(W,\Gamma_\omega)$ is the decision-rule-relative cut in Definition~\ref{def:efd-cut}.
    \item \textbf{Diagnostic Admission Invariance:}
    For every consequential diagnostic operation $a \in V_\pi$ (where $\rho(a) \succeq \tau_{\Pi}$), $V_\pi$ contains an admitted prerequisite sub-plan $\pi_a \prec a$ prospectively satisfying $\Omega_{\Pi}(a, s)$.
\end{enumerate}
\end{definition}

Let $\Pi_{\mathrm{adm}}(\mathcal{P}) = \{\pi \mid \pi \models_{\mathrm{pros}} \mathcal{P}\}$ denote the set of all prospectively admissible plans. Under our canonical optimization contract (\textbf{Contract C: Constrained Cost Minimization}), the primary objective of \aas is to synthesize an admissible plan $\pi^*$ that minimizes total financial expenditure subject to hard bounds on latency ($B_{\mathrm{lat}}$), cognitive reasoning tokens ($B_{\mathrm{tok}}$), operational risk perturbation ($B_{\mathrm{risk}}$), and CAC quorums and EFD cuts:
\begin{align}
\pi^* &= \arg\min_{\pi \in \Pi_{\mathrm{adm}}(\mathcal{P})} \operatorname{Cost}_{\mathrm{total}}(\pi) \nonumber \\
&= \arg\min_{\pi \in \Pi_{\mathrm{adm}}(\mathcal{P})} \sum_{a \in V_\pi} c(a).
\label{eq:optimization-objective}
\end{align}
When multiple candidate plans achieve identical minimum financial expenditure, \aas resolves ties lexicographically by minimizing dispatch latency $\operatorname{Latency}(\pi) = t_{\mathrm{dispatch}}(\pi) - t_0$ and cumulative operational risk $\operatorname{Exposure}(\pi) = \sum_{a \in V_\pi} \|\rho(a)\|$.

If $\Pi_{\mathrm{adm}}(\mathcal{P}) = \emptyset$, \aas returns $\textsc{Infeasible}$, triggering deterministic safe refusal at the admission boundary.

%% file: sections/04-temporal-freshness.tex
\section{Temporal Freshness and Validity Intersections}
\label{sec:temporal-freshness}

A distinguishing characteristic separating assurance scheduling from standard DAG scheduling (e.g., job-shop or workflow scheduling) is that \textbf{outputs expire}. In traditional systems, once an intermediate artifact is computed, it remains valid indefinitely unless explicitly evicted. In epistemic systems, an evidence receipt $e$ reflects a dynamic distributed state whose fidelity decays over time.

\subsection{Validity Intervals, Latency, and Readiness Lifecycle}
\label{sec:validity-intervals}

When an assurance operation $a \in \mathcal{A}$ executes, it inspects external system state at an \emph{observation timestamp} $t_{\mathrm{obs}}(e)$ and completes processing at completion timestamp $C_\pi(a) \ge t_{\mathrm{obs}}(e)$. Distinguishing $t_{\mathrm{obs}}(e)$ from $C_\pi(a)$ is vital: a distributed telemetry probe, log aggregator, or formal model checker may sample state at start, yet finish computation seconds or minutes later.

To establish sound scheduling, we track the complete evidence and dispatch readiness lifecycle:
\begin{enumerate}
    \item \textbf{Observation Time $t_{\mathrm{obs}}(e)$:} The physical moment at which system state is sampled. For an operation scheduled at start time $\sigma_\pi(a)$ with conservative upper-bound latency $\hat{\ell}(a)$ and actual latency $\ell_{\mathrm{act}}(a) \le \hat{\ell}(a)$, observation occurs at $\sigma_\pi(a) \le t_{\mathrm{obs}}(e) \le C_\pi(a)$. In the conservative worst case for expiration, state is assumed observed at invocation: $t_{\mathrm{obs}}(e) \ge \sigma_\pi(a)$.
    \item \textbf{Operation Completion $C_\pi(a)$:} The timestamp when computation finishes: $C_\pi(a) = \sigma_\pi(a) + \ell_{\mathrm{act}}(a) \le \sigma_\pi(a) + \hat{\ell}(a) = \hat{C}_\pi(a)$, where $\hat{C}_\pi(a)$ is conservative predicted completion. Post-observation computation duration is $\ell_{\mathrm{post}}(a) = C_\pi(a) - t_{\mathrm{obs}}(e) \ge 0$.
    \item \textbf{Receipt Availability \& Verification $t_{\mathrm{avail}}(e)$:} Emitting, cryptographically signing, transmitting, and verifying the receipt requires non-zero duration $\delta_{\mathrm{verify}} \ge 0$. The receipt is available in the receipt store only at $t_{\mathrm{avail}}(e) \ge C_\pi(a) + \delta_{\mathrm{verify}}$.
    \item \textbf{Certificate Minting $t_{\mathrm{mint}}$:} \cac evaluates the composite witness manifest $\mathcal{W}_q$ and mints admission certificate $\mathcal{C}_q$. Minting requires that all required witness receipts have completed verification:
    \begin{equation}
    t_{\mathrm{mint}} \ge \max_{a \in V_\pi} C_\pi(a) + \delta_{\mathrm{verify}} + \delta_{\mathrm{mint}}.
    \end{equation}
    \item \textbf{Gateway Dispatch $t_{\mathrm{dispatch}}$:} Proposal $q$ and certificate $\mathcal{C}_q$ arrive at the execution gateway and complete mediation after network transit delay $\delta_{\mathrm{disp}}$:
    \begin{equation}
    t_{\mathrm{dispatch}} \ge t_{\mathrm{mint}} + \delta_{\mathrm{disp}} \ge \max_{a \in V_\pi} C_\pi(a) + \delta_{\mathrm{ready}},
    \label{eq:earliest-legal-dispatch}
    \end{equation}
    where $\delta_{\mathrm{ready}}$ is the post-completion lead time defined in Section~\ref{sec:formal-inputs}.
    \item \textbf{Protected Execution Window:} The action executes during $[t_{\mathrm{dispatch}}, \; t_{\mathrm{dispatch}} + \delta_{\mathrm{exec}}]$.
\end{enumerate}

\paragraph{Clock Uncertainty Model.}
Each observer node records $t_{\mathrm{obs}}(e)$ on its local clock $C_i$, while gateway dispatch is evaluated against the gateway clock $C_{\mathrm{gw}}$. We model clock synchronization uncertainty under bounded skew: $|C_i(t) - C_{\mathrm{gw}}(t)| \le \epsilon_{\mathrm{skew}}$ for all physical times $t$.
Hence, when an observer records $t_{\mathrm{obs}}(e)$, the physical observation time expressed in the gateway's reference frame satisfies $t_{\mathrm{sample}}^{\mathrm{gw}} \in [t_{\mathrm{obs}}(e) - \epsilon_{\mathrm{skew}}, \; t_{\mathrm{obs}}(e) + \epsilon_{\mathrm{skew}}]$. In the worst case, physical observation occurred at the lower bound $t_{\mathrm{obs}}(e) - \epsilon_{\mathrm{skew}}$.
When two independent observer nodes $i$ and $j$ compare raw timestamps directly without gateway mediation, the relative clock skew is at most $|C_i(t) - C_j(t)| \le 2\epsilon_{\mathrm{skew}}$ (reconciling Section~\ref{sec:discussion-limitations}).

Each receipt $e \in \operatorname{Emit}(a)$ carries an intrinsic validity lifespan $\Delta t(e)$. For obligation $\omega \in \Omega$ with policy freshness bound $\Delta t_\omega$, the effective lifespan is $\Delta t_{\mathrm{eff}}(e, \omega) = \min(\Delta t_\omega, \Delta t(e))$. If $e$ supports multiple obligations $\Omega_e \subseteq \Omega$, its effective lifespan across all supported obligations is:
\begin{equation}
\Delta t_{\mathrm{eff}}(e) = \min_{\omega \in \Omega_e} \Delta t_{\mathrm{eff}}(e, \omega).
\label{eq:multi-obligation-lifespan}
\end{equation}
Evaluated at the gateway, receipt $e$ expires at $t_{\mathrm{exp}}^{\mathrm{gw}}(e) = t_{\mathrm{obs}}(e) + \Delta t_{\mathrm{eff}}(e) - \epsilon_{\mathrm{skew}}$, yielding the conservative validity interval:
\begin{equation}
I(e) = \left[ t_{\mathrm{obs}}(e) - \epsilon_{\mathrm{skew}}, \; t_{\mathrm{obs}}(e) + \Delta t_{\mathrm{eff}}(e) - \epsilon_{\mathrm{skew}} \right].
\label{eq:validity-interval}
\end{equation}

\paragraph{Freshness, Admissibility, and Physical Truth.}
Sound systems engineering requires rigorously distinguishing three distinct properties:
\begin{enumerate}
    \item \textbf{Receipt Freshness:} Receipt $e$ is fresh at gateway time $t$ if $t \in I(e)$, i.e., $t \le t_{\mathrm{obs}}(e) + \Delta t_{\mathrm{eff}}(e) - \epsilon_{\mathrm{skew}}$.
    \item \textbf{Certificate Admissibility:} Certificate $\mathcal{C}_q$ is admissible at dispatch time $t_{\mathrm{dispatch}}$ if all state version guards hold ($G_q = \{\nu_s = \nu_{\mathrm{live}}\}$), all obligation quorums and \efd cuts are satisfied, and all constituent witnesses remain continuously valid throughout the protected execution window:
    \begin{equation}
    [t_{\mathrm{dispatch}}, \; t_{\mathrm{dispatch}} + \delta_{\mathrm{exec}}] \subseteq \bigcap_{e \in W_{\mathrm{all}}} I(e),
    \label{eq:execution-window-validity}
    \end{equation}
    where $W_{\mathrm{all}} = \bigcup_{(\omega, W_\omega) \in \mathcal{W}_q} W_\omega$.
    \item \textbf{Physical Predicate Truth:} The underlying physical safeguard $\phi_\omega(s(t))$ holds continuously in the environment: $\forall t \in [t_{\mathrm{dispatch}}, t_{\mathrm{dispatch}} + \delta_{\mathrm{exec}}], \phi_\omega(s(t)) = \mathbf{true}$.
\end{enumerate}
\begin{remark}[Temporal Freshness vs. Physical Truth]
Temporal freshness and certificate admissibility (Properties 1 and 2) are necessary epistemic proxies, but \textbf{do not guarantee physical predicate truth (Property 3)}. If the environment experiences unmodeled out-of-band state mutations (e.g., a physical power cut or external manual override), or if policy freshness $\Delta t_\omega$ is set more permissively than the physical state drift rate $\tau_{\mathrm{drift}}(\phi_\omega)$, a temporally fresh certificate will attest to a condition that has already ceased to hold physically. \aas and \cac guarantee rigorous epistemic admissibility under the modeled policy; ensuring physical truth requires that policy authors configure $\Delta t_\omega \le \tau_{\mathrm{drift}}(\phi_\omega)$.
\end{remark}

\begin{theorem}[Completion-Aware Dispatch Freshness Intersection]
\label{thm:freshness-intersection}
Fix an executed assurance plan $\pi$ with completed operations $V_\pi$, actual completions $\{C_\pi(a)\}$, observation timestamps $\{t_{\mathrm{obs}}(e)\}$, and composite witness manifest $W_{\mathrm{all}}$. Let $V_{\mathrm{req}} \subseteq V_\pi$ denote the set of all operations whose completion is mandatory prior to proposal dispatch, including both witness-producing operations $V_{\mathrm{manifest}} = \{a \in V_\pi \mid \exists e \in W_{\mathrm{all}}, e \in \operatorname{Emit}(a)\}$ and prerequisite operational dependencies $\operatorname{Anc}_{D_{\mathrm{op}}}(V_{\mathrm{manifest}})$.

For a fixed completed plan and its receipts, a dispatch timestamp satisfying readiness and freshness throughout the protected execution window exists if and only if:
\begin{equation}
\begin{aligned}
\max_{a \in V_{\mathrm{req}}} C_\pi(a) &+ \delta_{\mathrm{ready}} + \delta_{\mathrm{exec}} + \epsilon_{\mathrm{skew}} \\
&\le \min_{e \in W_{\mathrm{all}}} \left( t_{\mathrm{obs}}(e) + \Delta t_{\mathrm{eff}}(e) \right).
\end{aligned}
\label{eq:freshness-intersection-condition}
\end{equation}
This condition concerns the freshness interval. Full admission additionally requires $t_{\mathrm{dispatch}}+\delta_{\mathrm{exec}}\le T_{\mathrm{dead}}$ and the latency budget; equivalently, their upper bounds must also exceed the readiness lower bound.
Equivalently, the condition holds if and only if for every required operation $a \in V_{\mathrm{req}}$ and every witness receipt $e_i \in W_{\mathrm{all}}$:
\begin{equation}
C_\pi(a) - t_{\mathrm{obs}}(e_i) \le \Delta t_{\mathrm{eff}}(e_i) - \delta_{\mathrm{ready}} - \delta_{\mathrm{exec}} - \epsilon_{\mathrm{skew}}.
\label{eq:pairwise-freshness-skew}
\end{equation}
Furthermore, for any witness-producing operation $a \in V_{\mathrm{manifest}}$ emitting receipt $e_j \in W_{\mathrm{all}}$ with observation time $t_{\mathrm{obs}}(e_j)$ and post-observation latency $\ell_{\mathrm{post}}(a) = C_\pi(a) - t_{\mathrm{obs}}(e_j) \ge 0$:
\begin{equation}
\begin{aligned}
t_{\mathrm{obs}}(e_j) - t_{\mathrm{obs}}(e_i) \le{} &\Delta t_{\mathrm{eff}}(e_i) - \ell_{\mathrm{post}}(a) \\
&- \delta_{\mathrm{ready}} - \delta_{\mathrm{exec}} - \epsilon_{\mathrm{skew}}.
\end{aligned}
\label{eq:pairwise-obs-skew}
\end{equation}
Non-witness prerequisite operations $u \in V_{\mathrm{req}} \setminus V_{\mathrm{manifest}}$ emit no admission receipts and therefore have no $t_{\mathrm{obs}}$ terms, but their completions $C_\pi(u)$ participate in the readiness bound~\eqref{eq:pairwise-freshness-skew} against all $e_i \in W_{\mathrm{all}}$.
\end{theorem}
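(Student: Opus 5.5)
The plan is to reduce the existence question to nonemptiness of a one-dimensional interval of admissible dispatch timestamps. For the fixed completed plan, dispatch readiness (eq.~\eqref{eq:earliest-legal-dispatch}) requires every operation in $V_{\mathrm{req}}$ to be complete and verified before minting, not only the manifest producers; prerequisites in $\operatorname{Anc}_{D_{\mathrm{op}}}(V_{\mathrm{manifest}})$ must have finished as well. Readiness therefore amounts to the single lower bound $t_{\mathrm{dispatch}} \ge L := \max_{a\in V_{\mathrm{req}}} C_\pi(a) + \delta_{\mathrm{ready}}$. Freshness throughout the protected window, eq.~\eqref{eq:execution-window-validity} with the intervals $I(e)$ of eq.~\eqref{eq:validity-interval}, unpacks for each $e\in W_{\mathrm{all}}$ into two inequalities:
\[
t_{\mathrm{dispatch}} \ge t_{\mathrm{obs}}(e)-\epsilon_{\mathrm{skew}}, \qquad
t_{\mathrm{dispatch}}+\delta_{\mathrm{exec}} \le t_{\mathrm{obs}}(e)+\Delta t_{\mathrm{eff}}(e)-\epsilon_{\mathrm{skew}}.
\]
The upper constraints together give $t_{\mathrm{dispatch}} \le U := \min_{e}\bigl(t_{\mathrm{obs}}(e)+\Delta t_{\mathrm{eff}}(e)\bigr) - \epsilon_{\mathrm{skew}} - \delta_{\mathrm{exec}}$.

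The main obstacle is showing that the left endpoints of the $I(e)$ never bind, so that the feasible set is exactly $[L,U]$. Every $e \in W_{\mathrm{all}}$ is emitted by some $a\in V_{\mathrm{manifest}}\subseteq V_{\mathrm{req}}$, and by definition of observation $t_{\mathrm{obs}}(e)\le C_\pi(a)$. Since $\delta_{\mathrm{ready}},\epsilon_{\mathrm{skew}}\ge 0$, this gives $t_{\mathrm{obs}}(e)-\epsilon_{\mathrm{skew}} \le C_\pi(a) \le L$, so the lower endpoint of each $I(e)$ is dominated by $L$. With that established, a dispatch time exists iff $L\le U$, which rearranges directly to eq.~\eqref{eq:freshness-intersection-condition}. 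For the forward direction take $t_{\mathrm{dispatch}}=L$; for the converse, any feasible time lies in $[L,U]$. I will add a sentence that the deadline and latency budget contribute further upper bounds, which is why they are treated as separate side conditions.

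The pairwise form eq.~\eqref{eq:pairwise-freshness-skew} follows from the standard fact that $\max_a x_a \le \min_e y_e$ iff $x_a\le y_e$ for all pairs $(a,e)$. Apply it with $x_a = C_\pi(a)$ and $y_e = t_{\mathrm{obs}}(e)+\Delta t_{\mathrm{eff}}(e)-\delta_{\mathrm{ready}}-\delta_{\mathrm{exec}}-\epsilon_{\mathrm{skew}}$, then subtract $t_{\mathrm{obs}}(e_i)$. For eq.~\eqref{eq:pairwise-obs-skew}, specialize to a witness producer $a$ emitting $e_j$ and substitute $C_\pi(a) = t_{\mathrm{obs}}(e_j)+\ell_{\mathrm{post}}(a)$. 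This is an implied (necessary) consequence only, and I would state that explicitly rather than claim equivalence.

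The final remark on non-witness prerequisites is immediate. Such operations appear in $L$ through $C_\pi(u)$ but contribute no receipts to $W_{\mathrm{all}}$, so they enter only as the $x_a$ side of the pairwise comparison.
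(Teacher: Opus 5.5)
Your proof is correct and follows the paper's route: bound dispatch below by readiness $L$ and above by the earliest gateway expiry $U$, take the feasible set to be $[L,U]$, rearrange $L\le U$, split the max--min inequality pairwise, and substitute $C_\pi(a)=t_{\mathrm{obs}}(e_j)+\ell_{\mathrm{post}}(a)$. You also check that the left endpoints $t_{\mathrm{obs}}(e)-\epsilon_{\mathrm{skew}}$ of the intervals $I(e)$ never bind, via $t_{\mathrm{obs}}(e)\le C_\pi(a)\le L$; the paper's proof skips this step. Your caution that \eqref{eq:pairwise-obs-skew} is only a consequence, not an equivalence, is also well placed.
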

\begin{proof}
By Equation~\eqref{eq:earliest-legal-dispatch}, dispatch cannot occur until all mandatory operations in $V_{\mathrm{req}}$ have completed and candidate receipts have been verified and minted: $t_{\mathrm{dispatch}} \ge \max_{a \in V_{\mathrm{req}}} C_\pi(a) + \delta_{\mathrm{ready}}$.
Simultaneously, condition~\eqref{eq:execution-window-validity} requires that for every witness receipt $e \in W_{\mathrm{all}}$, the execution window end does not exceed receipt expiration on the gateway clock:
$t_{\mathrm{dispatch}} + \delta_{\mathrm{exec}} \le \min_{e \in W_{\mathrm{all}}} (t_{\mathrm{obs}}(e) + \Delta t_{\mathrm{eff}}(e) - \epsilon_{\mathrm{skew}})$.
Thus, admissible dispatch timestamps form the closed interval $[t_{\mathrm{lower}}, t_{\mathrm{upper}}]$, where:
\begin{align*}
t_{\mathrm{lower}} &= \max_{a \in V_{\mathrm{req}}} C_\pi(a) + \delta_{\mathrm{ready}}, \\
t_{\mathrm{upper}} &= \min_{e \in W_{\mathrm{all}}} \left( t_{\mathrm{obs}}(e) + \Delta t_{\mathrm{eff}}(e) - \epsilon_{\mathrm{skew}} \right) - \delta_{\mathrm{exec}}.
\end{align*}
This interval is non-empty if and only if $t_{\mathrm{lower}} \le t_{\mathrm{upper}}$, which rearranges directly to Equation~\eqref{eq:freshness-intersection-condition}.
Condition~\eqref{eq:freshness-intersection-condition} holds if and only if each term in the maximum is bounded by each term in the minimum, directly establishing pairwise equivalence~\eqref{eq:pairwise-freshness-skew} for all $a \in V_{\mathrm{req}}$ and $e_i \in W_{\mathrm{all}}$.
For witness-producing operations $a \in V_{\mathrm{manifest}}$, substituting $C_\pi(a) = t_{\mathrm{obs}}(e_j) + \ell_{\mathrm{post}}(a)$ yields Equation~\eqref{eq:pairwise-obs-skew}.
\end{proof}

\begin{example}[Counterexample: Failure of Observation-Only Dispatch Bounds]
Consider an operation $a_{\mathrm{verify}}$ that samples database state at $t_{\mathrm{obs}}(e) = 0$, but executes complex consistency checking requiring latency $\hat{\ell}(a) = 10$\,s, completing at $C(a) = 10$\,s. Let effective evidence lifespan be $\Delta t_{\mathrm{eff}}(e) = 8$\,s, with lead times $\delta_{\mathrm{ready}} = 0.2$\,s, execution window $\delta_{\mathrm{exec}} = 1.0$\,s, and $\epsilon_{\mathrm{skew}} = 0$.
If dispatch feasibility were naively bounded by observation time alone ($\max t_{\mathrm{obs}} + \delta_{\mathrm{ready}} + \delta_{\mathrm{exec}} \le \min(t_{\mathrm{obs}} + \Delta t)$), the inequality would evaluate to $0 + 0.2 + 1.0 = 1.2 \le 8.0$. An observation-only scheduler would falsely declare this schedule feasible, planning dispatch at $t = 1.2$\,s.
In physical reality, receipt $e$ does not exist at $t = 1.2$\,s because $a_{\mathrm{verify}}$ is still running. Earliest legal dispatch cannot occur before $t_{\mathrm{ready}} = 10 + 0.2 = 10.2$\,s. At $t = 10.2$\,s, the evidence collected at $t = 0$ has already expired ($10.2 + 1.0 = 11.2 > 8.0$). Theorem~\ref{thm:freshness-intersection} correctly flags this as infeasible because $\max C(a) + \delta_{\mathrm{ready}} + \delta_{\mathrm{exec}} = 11.2 > 8.0$.
\end{example}

\subsection{Why Forward Scheduling Fails}
\label{sec:forward-scheduling-fails}

Traditional workflow engines apply \emph{earliest-deadline-first} (EDF) or \emph{as-soon-as-possible} (ASAP) forward scheduling: operations dispatch immediately once precedence dependencies are met.

Consider three operations from Section~\ref{sec:motivating-failures}:
\begin{itemize}
    \item $a_{\mathrm{topo}}$: latency 1.2\,s, lifespan $\Delta t = 30$\,s,
    \item $a_{\mathrm{lag}}$: latency 1.2\,s, lifespan $\Delta t = 5$\,s,
    \item $a_{\mathrm{fence}}$: latency 6.0\,s, lifespan $\Delta t = 10$\,s.
\end{itemize}
Under forward scheduling starting at $t=0$, $a_{\mathrm{topo}}$ finishes at $t=1.2$\,s (valid on $[1.2, 31.2]$\,s). Next, $a_{\mathrm{lag}}$ completes at $t=2.4$\,s, producing receipt $e_{\mathrm{lag}}$ whose validity window is $[2.4, 7.4]$\,s. Finally, long-latency fencing $a_{\mathrm{fence}}$ executes from $t=2.4$\,s to $t=8.4$\,s.
Here:
\begin{equation}
t_{\mathrm{lower}} = 8.4\,\text{s} + \delta_{\mathrm{ready}} > 7.4\,\text{s} = t_{\mathrm{upper}},
\end{equation}
yielding an empty validity intersection! ASAP forward execution guarantees dispatch failure whenever short-lived telemetry is gathered prior to long-latency operations.

\subsection{Backward Just-in-Time Scheduling}
\label{sec:backward-scheduling}

To satisfy Theorem~\ref{thm:freshness-intersection} without delaying action execution unnecessarily, \aas implements \textbf{Backward Just-in-Time (JIT) Scheduling}.

\paragraph{Analysis of Target Dispatch Feasibility: Continuous Monotonicity vs. Heuristic Artifacts.}
A critical theoretical question is whether the set of feasible target dispatch timestamps $\mathcal{T}_{\mathrm{feas}} \subseteq [t_{\min}, t_{\max}]$ is continuous and monotonic, or whether it can fragment into disconnected sub-intervals.
\begin{proposition}[Time-Translation Invariance of Continuous ASP Feasibility]
\label{prop:time-translation}
For any static ASP instance in continuous time with release epoch $t_0$, deadline $T_{\mathrm{dead}}$, execution window $\delta_{\mathrm{exec}}$, and latency budget $B_{\mathrm{lat}}$, define the maximum legal dispatch horizon:
\begin{equation}
T_{\max} = \min(T_{\mathrm{dead}} - \delta_{\mathrm{exec}}, \; t_0 + B_{\mathrm{lat}}).
\label{eq:max-dispatch-horizon}
\end{equation}
If $(\sigma_\pi, t_{\mathrm{dispatch}})$ is a prospectively feasible schedule, then for all forward shifts $\Delta \in [0, \; T_{\max} - t_{\mathrm{dispatch}}]$, the uniform time-translated schedule defined by $\sigma'_\pi(u) = \sigma_\pi(u) + \Delta$ and $t'_{\mathrm{dispatch}} = t_{\mathrm{dispatch}} + \Delta$ is also prospectively feasible. Consequently, $\mathcal{T}_{\mathrm{feas}}$ is either empty or a single connected interval $[t^*_{\min}, \; T_{\max}]$; it is never a disconnected union of sub-intervals.
\end{proposition}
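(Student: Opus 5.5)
Our approach is to verify each invariant of Definition~\ref{def:prospective-admissibility} for the translated schedule $(\sigma'_\pi, t'_{\mathrm{dispatch}})$. The key observation is that every invariant either depends only on differences of scheduled times, or is an upper bound on $t_{\mathrm{dispatch}}$ that is absorbed by $T_{\max}$.

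\textbf{Translation-invariant constraints.} Precedence~\eqref{eq:precedence-constraint} compares $\sigma(u)+\hat{\ell}(u)$ with $\sigma(v)$; adding $\Delta$ to both sides preserves it. Freshness~\eqref{eq:freshness-invariant} has $t_{\mathrm{dispatch}}$ on the left and $\sigma_\pi(a)$ on the right, so both sides shift by $\Delta$. The readiness lower bound~\eqref{eq:dispatch-time} shifts likewise. The concurrency count~\eqref{eq:concurrency-constraint} at time $t+\Delta$ under $\sigma'$ equals the count at $t$ under $\sigma$. The cost, token, and risk budgets, the quorum cardinalities, the coverage conditions, and the EFD cuts~\eqref{eq:efd-invariant} depend only on $V_\pi$ and $\mathcal{M}_\pi$, which are unchanged. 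Diagnostic sub-plans are translated together with the rest of $V_\pi$, so their admissibility carries over by the same argument applied recursively.

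\textbf{Constraints that bind the shift.} Deadline~\eqref{eq:deadline-constraint} and latency~\eqref{eq:latency-budget} together require $t'_{\mathrm{dispatch}} \le T_{\max}$, which is exactly the condition $\Delta \le T_{\max}-t_{\mathrm{dispatch}}$. The remaining condition is the range requirement $\sigma'_\pi(a)\in[t_0,T_{\mathrm{dead}}]$. The lower end holds because $\Delta\ge 0$. For the upper end, I will use $\sigma'(a)\le \sigma'(a)+\hat{\ell}(a) \le t'_{\mathrm{dispatch}} \le T_{\mathrm{dead}}-\delta_{\mathrm{exec}}$. I expect this to be the only step requiring care. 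It relies on $\hat{\ell}>0$ and the readiness bound, and it requires that the concurrency check be restricted to $[t_0,T_{\mathrm{dead}}]$; since all operation intervals lie inside this range after the shift, this causes no difficulty. The argument uses the static-instance assumption: operation parameters are not time-dependent, and the lower bound $t_0$ only constrains leftward shifts.

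\textbf{Connectedness.} Let $t^*_{\min}=\inf \mathcal{T}_{\mathrm{feas}}$. Every feasible dispatch time lies at or below $T_{\max}$, and by the shift property each feasible $t$ yields feasibility of all of $[t,T_{\max}]$. Hence $\mathcal{T}_{\mathrm{feas}}$ is an up-closed subset of $(-\infty,T_{\max}]$, that is, an interval with right endpoint $T_{\max}$.

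Closedness at $t^*_{\min}$ takes a small additional argument, and I would handle it in one of two ways. The first is compactness: the feasible set for fixed $V_\pi,E_\pi,\mathcal{M}_\pi$ is defined by finitely many non-strict linear inequalities in $(\sigma,t_{\mathrm{dispatch}})$, so it is a closed polyhedron and its projection attains its minimum. Finitely many combinatorial choices then keep the union closed. The second, if this is too delicate, is to state the result as a single connected interval with the left endpoint possibly open, which already rules out fragmentation.
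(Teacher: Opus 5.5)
Your proposal is correct and follows essentially the same route as the paper's proof. The uniform shift preserves release because $\Delta\ge 0$, and it preserves every difference-based invariant (precedence, readiness, freshness, concurrency); the deadline and latency-budget conditions are exactly what $\Delta\le T_{\max}-t_{\mathrm{dispatch}}$ secures. Your extra checks cover points the paper's proof passes over: the upper end $\sigma'_\pi(a)\le T_{\mathrm{dead}}$ of the start-time range, the selection-only invariants, and attainment of $t^*_{\min}$ via compactness (the paper asserts the closed interval $[t^*_{\min},T_{\max}]$ without arguing closedness at the left end).
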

\begin{proof}
Let $(\sigma_\pi, t_{\mathrm{dispatch}})$ be feasible. Under translation by $\Delta \ge 0$: (i) release constraints hold since $\sigma'_\pi(u) = \sigma_\pi(u) + \Delta \ge t_0 + \Delta \ge t_0$; (ii) precedence constraints $\sigma'_\pi(u) + \hat{\ell}(u) \le \sigma'_\pi(v)$ hold identically; (iii) dispatch readiness $t'_{\mathrm{dispatch}} \ge \max_u (\sigma'_\pi(u) + \hat{\ell}(u)) + \delta_{\mathrm{ready}}$ is preserved since $\Delta$ appears on both sides; (iv) receipt freshness at dispatch satisfies $t'_{\mathrm{dispatch}} + \delta_{\mathrm{exec}} - (\sigma'_\pi(u) + \ell(u)) = t_{\mathrm{dispatch}} + \delta_{\mathrm{exec}} - (\sigma_\pi(u) + \ell(u)) \le \Delta t_{\mathrm{eff}}(u) - \epsilon_{\mathrm{skew}}$, since $\Delta$ cancels identically; (v) instantaneous concurrency $|\{u \mid \sigma'_\pi(u) \le t < \sigma'_\pi(u) + \hat{\ell}(u)\}|$ at time $t$ equals concurrency at $t - \Delta$, preserving $K_{\max}$; (vi) the deadline condition $t'_{\mathrm{dispatch}} + \delta_{\mathrm{exec}} \le T_{\mathrm{dead}}$ holds because $t'_{\mathrm{dispatch}} \le T_{\max} \le T_{\mathrm{dead}} - \delta_{\mathrm{exec}}$; and (vii) the latency budget condition $\operatorname{Latency}(\pi') = t'_{\mathrm{dispatch}} - t_0 = (t_{\mathrm{dispatch}} - t_0) + \Delta \le B_{\mathrm{lat}}$ holds because $t'_{\mathrm{dispatch}} \le T_{\max} \le t_0 + B_{\mathrm{lat}}$.
\end{proof}

\noindent
Why, then, do practical temporal schedulers encounter non-monotonic behavior across candidate targets? \textbf{Non-monotonicity is strictly an artifact of heuristic placement algorithms}, rather than a property of the underlying continuous scheduling problem:
\begin{enumerate}
    \item \textbf{Epoch Clamping:} Backward propagation anchors completions to $t_{\mathrm{target}} - \delta_{\mathrm{ready}}$ and pushes starts backwards. When $t_{\mathrm{target}}$ is small, unconstrained starts fall before $t_0$. Heuristically clamping starts to $\sigma_\pi(u) \ge t_0$ alters inter-operation spacing; as $t_{\mathrm{target}}$ is increased without uniformly translating the earliest operations, the elapsed duration $t_{\mathrm{target}} - \sigma_\pi(u)$ grows, causing clamped receipts to expire before dispatch.
    \item \textbf{Concurrency Collisions ($K_{\max} < \infty$):} Under finite concurrency, greedy placement heuristics serialize overlapping operations using local priority rules. Shifting $t_{\mathrm{target}}$ changes which operations collide, producing discrete reorganizations of the schedule where a greedy heuristic may succeed at $t_1$, fail at $t_2 > t_1$, and succeed again at $t_3 > t_2$.
\end{enumerate}
Accordingly, universal binary search over $[t_{\min}, t_{\max}]$ is unsound for backward heuristics. \aas evaluates candidates across a discrete grid with step $\tau_{\mathrm{grid}}$. If grid search exhausts without finding a placement, it returns $\textsc{NotFound}$; provable $\textsc{Infeasible}$ is reserved strictly for certified structural violations.

\paragraph{Backward Propagation with Safety Margins.}
Given candidate target dispatch $t_{\mathrm{target}}$, operations are scheduled in reverse topological order. For each operation $u \in V_\pi$, the latest permissible completion time $T_{\mathrm{latest}}(u)$ is bounded by successors in $E_\pi$ and target dispatch:
\begin{align}
T_{\mathrm{latest}}(u) &\le \min_{(u, v) \in E_\pi} \sigma_\pi(v), \label{eq:succ-bound} \\
T_{\mathrm{latest}}(u) &\le t_{\mathrm{target}} - \delta_{\mathrm{ready}}, \label{eq:dispatch-bound} \\
\sigma_\pi(u) &= T_{\mathrm{latest}}(u) - \hat{\ell}(u) - \epsilon_{\mathrm{guard}}, \label{eq:jit-start}
\end{align}
where $\hat{\ell}(u)$ is the conservative upper-bound latency and $\epsilon_{\mathrm{guard}}$ is an operational safety buffer absorbing network round-trip jitter.

\begin{algorithm}[t]
\caption{Two-Tier Backward Just-in-Time Temporal Scheduler}
\label{alg:backward-scheduler}
\begin{algorithmic}[1]
\footnotesize
\linespread{0.92}\selectfont
\Require Plan DAG $(V_\pi, E_\pi)$, manifest assignments $\mathcal{M}_\pi$, conservative latencies $\hat{\ell}$, lifespans $\Delta t_{\mathrm{eff}}$, epoch $t_0$, deadline $T_{\mathrm{dead}}$, latency budget $B_{\mathrm{lat}}$, concurrency $K_{\max}$, guards $\delta_{\mathrm{exec}}, \delta_{\mathrm{ready}}, \epsilon_{\mathrm{guard}}, \epsilon_{\mathrm{skew}}$, grid step $\tau_{\mathrm{grid}}$
\Ensure Feasible schedule $(\sigma_\pi, t_{\mathrm{dispatch}})$, $\textsc{Infeasible}$ (certified structural failure), or $\textsc{NotFound}$ (heuristic exhaustion)
\State $t_{\mathrm{CP}} \gets \operatorname{ComputeCriticalPath}(V_\pi, E_\pi, \hat{\ell} + \epsilon_{\mathrm{guard}})$
\State $t_{\min} \gets t_0 + t_{\mathrm{CP}} + \delta_{\mathrm{ready}}$; $t_{\max} \gets \min(T_{\mathrm{dead}} - \delta_{\mathrm{exec}}, \; t_0 + B_{\mathrm{lat}})$
\If{$t_{\min} > t_{\max}$}
  \State \Return $\textsc{Infeasible}$ \Comment{Structural critical path exceeds deadline or latency budget}
\EndIf
\State $V_{\mathrm{manifest}} \gets \{a \in V_\pi \mid \exists \omega, a \in \mathcal{M}_\pi(\omega)\}$
\Comment{\textbf{Tier 1: Fast Parallel Backward-JIT Search}}
\For{$t_{\mathrm{target}} \gets t_{\min}$ \textbf{to} $t_{\max}$ \textbf{step} $\tau_{\mathrm{grid}}$}
  \State $\mathrm{feasible} \gets \mathbf{true}$
  \State Initialize $T_{\mathrm{latest}}(u) \gets t_{\mathrm{target}} - \delta_{\mathrm{ready}}$ for all $u \in V_\pi$
  \For{each $u \in V_\pi$ in reverse topological order}
    \State $T_{\mathrm{latest}}(u) \gets \min \Big( T_{\mathrm{latest}}(u), \; \min_{(u,v) \in E_\pi} \sigma_\pi(v) \Big)$
    \State $\sigma_\pi(u) \gets T_{\mathrm{latest}}(u) - \hat{\ell}(u) - \epsilon_{\mathrm{guard}}$
    \If{$\sigma_\pi(u) < t_0$}
      \State $\mathrm{feasible} \gets \mathbf{false}$; \textbf{break} \Comment{Insufficient lead time from epoch}
    \EndIf
    \If{$u \in V_{\mathrm{manifest}}$}
      \State $\Delta t_{\min}(u) \gets \min_{\omega: u \in \mathcal{M}_\pi(\omega)} \min(\Delta t_\omega, \Delta t(u))$
      \If{$t_{\mathrm{target}} + \delta_{\mathrm{exec}} > \sigma_\pi(u) + \Delta t_{\min}(u) - \epsilon_{\mathrm{skew}}$}
        \State $\mathrm{feasible} \gets \mathbf{false}$; \textbf{break} \Comment{Witness receipt expires before window ends}
      \EndIf
    \EndIf
  \EndFor
  \If{$\mathrm{feasible}$}
    \If{$K_{\max} = \infty \lor \max_{t} |\{u \in V_\pi \mid \sigma_\pi(u) \le t < \sigma_\pi(u) + \hat{\ell}(u)\}| \le K_{\max}$}
      \State \Return $(\sigma_\pi, t_{\mathrm{target}})$ \Comment{Fast parallel JIT placement succeeds}
    \EndIf
  \EndIf
\EndFor
\Comment{\textbf{Tier 2: Authoritative Slack-Aware Temporal Solver}}
\If{$K_{\max} < \infty$}
  \State $\mathrm{res} \gets \textsc{AuthTemporal}(\pi, K_{\max}, t_{\min}, t_{\max}, \tau_{\mathrm{grid}})$
  \If{$\mathrm{res} \neq \textsc{NotFound}$}
    \State \Return $\mathrm{res}$ \Comment{Slack-aware serialization succeeds}
  \EndIf
\EndIf
\State \Return $\textsc{NotFound}$ \Comment{Grid search exhausted without valid placement}
\end{algorithmic}
\end{algorithm}

\paragraph{Two-Tier Architecture, Completeness, and Limits.}
Tier~1 evaluates candidate target dispatch timestamps over $K_{\mathrm{grid}} = \lceil (t_{\max} - t_{\min})/\tau_{\mathrm{grid}} \rceil$ discrete breakpoints in reverse topological order, requiring $O(K_{\mathrm{grid}} \cdot (|V_\pi| + |E_\pi|))$ time.
When concurrency is constrained ($K_{\max} < \infty$) and parallel placement collides, Tier~1 fails conservatively. To prevent heuristic parallel failure from being mistaken for problem infeasibility, Tier~2 activates the Authoritative Temporal Solver ($\operatorname{SolveAuthoritativeTemporal}$). For each candidate $t_{\mathrm{target}}$, Tier~2 establishes the admissible release-due interval $[\underline{s}(u), \bar{s}(u)]$ for each operation $u$:
\begin{align}
\underline{s}(u) &= \max\big(t_0, \; t_{\mathrm{target}} + \delta_{\mathrm{exec}} + \epsilon_{\mathrm{skew}} - \Delta t_{\min}(u)\big), \label{eq:earliest-start} \\
\bar{s}(u) &= t_{\mathrm{target}} - \delta_{\mathrm{ready}} - \hat{\ell}(u) - \epsilon_{\mathrm{guard}}. \label{eq:latest-start}
\end{align}
Tier~2 then executes an event-driven forward simulation under $K_{\max}$, prioritizing ready operations by minimum slack ($\bar{s}(u)$). If slack-aware serialization finds a valid placement, it returns the sound schedule $(\sigma_\pi, t_{\mathrm{target}})$.
We formally distinguish the levels of completeness and their cut semantics:
\begin{enumerate}
    \item \textbf{Soundness:} Any schedule returned by Tier~1 or Tier~2 strictly satisfies all precedence, deadline, concurrency ($K_{\max}$), and freshness window constraints.
    \item \textbf{Certified Critical-Path Infeasibility:} If the directed critical path of prerequisites exceeds available lead time ($t_{\mathrm{CP}} + \delta_{\mathrm{ready}} > T_{\mathrm{dead}} - t_0 - \delta_{\mathrm{exec}}$), the instance is provably infeasible in continuous time, generating a sound structural Benders cut~\eqref{eq:ilp-chain-cut}.
    \item \textbf{Finite-Grid Exhaustion vs. Certified Infeasibility:} When candidate evaluation over a finite grid $\tau_{\mathrm{grid}}$ terminates without finding a placement, Algorithm~\ref{alg:backward-scheduler} returns $\textsc{NotFound}$, \textbf{not} certified $\textsc{Infeasible}$. Under continuous time, feasibility is a single connected interval (Proposition~\ref{prop:time-translation}); failure of finite-grid heuristic placement cannot prove continuous infeasibility. Crucially, \textbf{no heuristic failure ($\textsc{NotFound}$) generates a globally valid Benders cut} in the Master ILP. Exact certificates of infeasibility under general concurrency bounds require exhaustive Simple Temporal Network permutation enumeration, as implemented by our Exact Oracle (Section~\ref{sec:eval-optimality}). Non-manifest operations are excluded from freshness checks, preventing artificial restriction of the dispatch window.
\end{enumerate}

\subsection{Robustness to Tail Latency and Staging Protocol}
\label{sec:latency-jitter}

Real distributed environments exhibit heavy-tailed execution latencies~\cite{dean2013tail}. Let actual latency be a random variable $\ell_{\mathrm{act}}(a) \sim \mathcal{D}_a$ with conservative quantile bound $\hat{\ell}(a) = \inf \{ \ell \mid \mathbb{P}(\ell_{\mathrm{act}}(a) \le \ell) \ge 1 - \alpha_{\mathrm{tail}} \}$.

If an unpredicted stall occurs in a long-latency operation $a_{\mathrm{slow}}$, pushing completion beyond $\hat{\ell}(a_{\mathrm{slow}})$, any short-lived receipts collected earlier may expire before dispatch. To decouple short-lived validity from long-latency variance, \aas employs a \textbf{Dual-Stage Staging Protocol}:
\begin{enumerate}
    \item \textbf{Stage 1 (Pre-fetch \& Long-Latency Execution):} Operations with long validity horizons ($\Delta t_{\mathrm{eff}}(a) \gg \hat{\ell}(a)$) and long latencies are dispatched early. Stochastic completion variance is absorbed while short-lived operations remain dormant.
    \item \textbf{Stage 2 (Synchronization Barrier \& JIT Burst):} As soon as all Stage~1 operations finish and their receipts are verified, the scheduler locks the exact completion timestamp $t_{\mathrm{sync}} = \max_{a \in V_{\mathrm{stage1}}} C_\pi(a) + \delta_{\mathrm{verify}}$. It recomputes target dispatch $t_{\mathrm{target}} = t_{\mathrm{sync}} + \max_{u \in V_{\mathrm{jit}}} (\hat{\ell}(u) + \epsilon_{\mathrm{guard}}) + \delta_{\mathrm{ready}}$ and fires all short-lived operations in a tightly synchronized parallel burst.
\end{enumerate}

\begin{theorem}[Temporal Freshness Invariant under Bounded Latency]
\label{thm:temporal-correctness}
Suppose execution latencies are bounded by $\ell_{\mathrm{act}}(a) \le \hat{\ell}(a)$ for all $a \in V_\pi$. If Algorithm~\ref{alg:backward-scheduler} returns a schedule $(\sigma_\pi, t_{\mathrm{dispatch}})$, then $(\sigma_\pi, t_{\mathrm{dispatch}})$ satisfies the common post-schedule validation conditions across both Tier~1 and Tier~2:
\begin{enumerate}
    \item Release validity: $\sigma_\pi(u) \ge t_0$ for all $u \in V_\pi$.
    \item Topological precedence: $\sigma_\pi(u) + \hat{\ell}(u) \le \sigma_\pi(v)$ for all $(u, v) \in E_\pi$.
    \item Lead-time readiness: $t_{\mathrm{dispatch}} \ge \sigma_\pi(u) + \hat{\ell}(u) + \delta_{\mathrm{ready}}$ for all $u \in V_\pi$.
    \item Witness freshness: $t_{\mathrm{dispatch}} + \delta_{\mathrm{exec}} \le \sigma_\pi(u) + \Delta t_{\min}(u) - \epsilon_{\mathrm{skew}}$ for all manifest operations $u \in V_{\mathrm{manifest}}$.
    \item Capacity bound: $|\{u \in V_\pi \mid \sigma_\pi(u) \le t < \sigma_\pi(u) + \hat{\ell}(u)\}| \le K_{\max}$ for all $t$.
    \item Deadline feasibility: $t_{\mathrm{dispatch}} + \delta_{\mathrm{exec}} \le T_{\mathrm{dead}}$.
\end{enumerate}
Consequently, executing $\pi$ strictly preserves completion-aware dispatch readiness and witness freshness during target execution:
\begin{align*}
t_{\mathrm{dispatch}} &\ge \max_{a \in V_\pi} C_\pi(a) + \delta_{\mathrm{ready}}, \quad \text{and} \\
[t_{\mathrm{dispatch}}, \; t_{\mathrm{dispatch}} + \delta_{\mathrm{exec}}] &\subseteq \bigcap_{e \in W_{\mathrm{all}}} I(e).
\end{align*}
\end{theorem}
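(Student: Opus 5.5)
The proof has two parts. First, we check the six post-schedule validation conditions separately for each return path of Algorithm~\ref{alg:backward-scheduler}. Second, we convert those modeled conditions into the realized readiness and freshness statements using $\ell_{\mathrm{act}}(a)\le\hat{\ell}(a)$ and the observation model of Section~\ref{sec:validity-intervals}.

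\textbf{Tier~1.} A schedule is returned only if no \textbf{break} fired and the concurrency test passed. That test directly gives condition~5. Conditions~1 and~4 are exactly the two in-loop checks, because every $u\in V_\pi$ is visited before return and each check is applied to it.

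For condition~3, initialization sets $T_{\mathrm{latest}}(u)\le t_{\mathrm{target}}-\delta_{\mathrm{ready}}$, and $T_{\mathrm{latest}}$ only decreases. Hence $\sigma_\pi(u)+\hat{\ell}(u)=T_{\mathrm{latest}}(u)-\epsilon_{\mathrm{guard}}\le t_{\mathrm{target}}-\delta_{\mathrm{ready}}$, using $\epsilon_{\mathrm{guard}}\ge0$.

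For condition~2, processing in reverse topological order means every successor $v$ of $u$ already has $\sigma_\pi(v)$ fixed when $u$ is placed. Therefore $T_{\mathrm{latest}}(u)\le\sigma_\pi(v)$, which gives $\sigma_\pi(u)+\hat{\ell}(u)\le\sigma_\pi(v)-\epsilon_{\mathrm{guard}}\le\sigma_\pi(v)$.

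Condition~6 follows from $t_{\mathrm{target}}\le t_{\max}\le T_{\mathrm{dead}}-\delta_{\mathrm{exec}}$.

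\textbf{Tier~2.} Here I rely on the stated soundness property of \textsc{AuthTemporal}: it places each $u$ inside the window $[\underline{s}(u),\bar{s}(u)]$, starts $u$ only after its predecessors complete under $\hat{\ell}$, and never exceeds $K_{\max}$ active operations.

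\begin{itemize}
\item $\sigma_\pi(u)\ge\underline{s}(u)\ge t_0$ gives condition~1.
\item The lower bound~\eqref{eq:earliest-start} rearranges to condition~4, where $\Delta t_{\min}(u)$ is read as $+\infty$ for non-manifest operations.
\item The upper bound~\eqref{eq:latest-start} gives condition~3.
\item Release on predecessor completion gives condition~2.
\item Event-driven capacity gives condition~5.
\item The grid bound $t_{\mathrm{target}}\le t_{\max}$ gives condition~6, exactly as in Tier~1.
\end{itemize}

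This is the step I expect to be the main obstacle. The paper specifies \textsc{AuthTemporal} only informally, so the argument must make explicit that it enforces these windows as hard constraints and returns \textsc{NotFound} rather than a violating schedule. I would state this as the validation contract of Tier~2, namely that the returned schedule is re-checked against the window and precedence constraints.

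\textbf{Realized guarantees.} Readiness follows from $C_\pi(a)=\sigma_\pi(a)+\ell_{\mathrm{act}}(a)\le\sigma_\pi(a)+\hat{\ell}(a)$ combined with condition~3, maximized over $a$.

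For freshness, take any $e\in W_{\mathrm{all}}$ emitted by $a\in V_{\mathrm{manifest}}$. The observation model gives $t_{\mathrm{obs}}(e)\ge\sigma_\pi(a)$. We also need $\Delta t_{\mathrm{eff}}(e)\ge\Delta t_{\min}(a)$, which holds when receipt lifespans equal the operation lifespan and the obligations $e$ supports are among those assigned to $a$ in $\mathcal{M}_\pi$. With these, the upper endpoint of $I(e)$ satisfies
\[
t_{\mathrm{obs}}(e)+\Delta t_{\mathrm{eff}}(e)-\epsilon_{\mathrm{skew}}\ \ge\ \sigma_\pi(a)+\Delta t_{\min}(a)-\epsilon_{\mathrm{skew}}\ \ge\ t_{\mathrm{dispatch}}+\delta_{\mathrm{exec}},
\]
where the last inequality is condition~4.

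The lower endpoint of $I(e)$ satisfies $t_{\mathrm{obs}}(e)-\epsilon_{\mathrm{skew}}\le C_\pi(a)\le t_{\mathrm{dispatch}}$ by readiness. So $[t_{\mathrm{dispatch}},t_{\mathrm{dispatch}}+\delta_{\mathrm{exec}}]\subseteq I(e)$ for each $e$, and intersecting over $W_{\mathrm{all}}$ gives the claim.

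These inequalities are also exactly the pairwise conditions of Theorem~\ref{thm:freshness-intersection}, which provides a consistency check on the argument.
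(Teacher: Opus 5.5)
Your proposal is correct and follows the paper's proof. The paper simply asserts that both tiers gate emission on conditions (1)--(6), then derives readiness from $\ell_{\mathrm{act}}\le\hat{\ell}$ with condition~(3), and freshness from $t_{\mathrm{obs}}(e)\ge\sigma_\pi(u)$ with conditions~(3)--(4), exactly as you do. Your explicit trace of Tier~1, your stated validation contract for \textsc{AuthTemporal}, and your flagging of the inequality $\Delta t_{\mathrm{eff}}(e)\ge\Delta t_{\min}(a)$ (which the paper uses without comment) make the same argument more careful.
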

\begin{proof}
Let $(\sigma_\pi, t_{\mathrm{dispatch}})$ be any schedule returned by Algorithm~\ref{alg:backward-scheduler}. Both Tier~1 and Tier~2 explicitly gate schedule emission upon satisfying validation conditions (1)--(6).
For every operation $u \in V_\pi$, actual completion satisfies $C_\pi(u) = \sigma_\pi(u) + \ell_{\mathrm{act}}(u) \le \sigma_\pi(u) + \hat{\ell}(u)$ because execution latency is bounded by $\hat{\ell}(u)$. By validation condition (3), $\sigma_\pi(u) + \hat{\ell}(u) \le t_{\mathrm{dispatch}} - \delta_{\mathrm{ready}}$, directly implying $t_{\mathrm{dispatch}} \ge \max_{u \in V_\pi} C_\pi(u) + \delta_{\mathrm{ready}}$.
Next, for each admitted witness receipt $e \in W_{\mathrm{all}}$ emitted by manifest operation $u \in V_{\mathrm{manifest}}$, physical observation occurs at or after start: $t_{\mathrm{obs}}(e) \ge \sigma_\pi(u)$. The gateway receipt validity interval is $I(e) = [t_{\mathrm{obs}}(e) - \epsilon_{\mathrm{skew}}, \; t_{\mathrm{obs}}(e) + \Delta t_{\mathrm{eff}}(e) - \epsilon_{\mathrm{skew}}]$.
By condition (3), $t_{\mathrm{dispatch}} \ge \sigma_\pi(u) + \hat{\ell}(u) \ge t_{\mathrm{obs}}(e) \ge t_{\mathrm{obs}}(e) - \epsilon_{\mathrm{skew}}$ since $\epsilon_{\mathrm{skew}} \ge 0$. By condition (4), $t_{\mathrm{dispatch}} + \delta_{\mathrm{exec}} \le \sigma_\pi(u) + \Delta t_{\min}(u) - \epsilon_{\mathrm{skew}} \le t_{\mathrm{obs}}(e) + \Delta t_{\mathrm{eff}}(e) - \epsilon_{\mathrm{skew}}$.
Therefore, $[t_{\mathrm{dispatch}}, t_{\mathrm{dispatch}} + \delta_{\mathrm{exec}}] \subseteq I(e)$ holds for every witness receipt $e \in W_{\mathrm{all}}$, preserving the freshness invariant throughout proposal execution.
\end{proof}

\begin{remark}[Stochastic Latency Safety Invariant]
If realized latency exceeds $\hat{\ell}(a)$, the runtime admission boundary preserves fail-closed freshness under the stated clock and receipt assumptions: it rejects a manifest containing any receipt that expires before $t_{\mathrm{dispatch}}+\delta_{\mathrm{exec}}$, triggering replanning (Section~\ref{sec:adaptive-replanning}) or refusal. Tail latency can reduce admission availability. Freshness alone does not establish that the underlying physical predicate remains true.
\end{remark}

%% file: sections/05-epistemic-scheduling.tex
\section{Epistemic-Dependency-Aware Scheduling}
\label{sec:epistemic-scheduling}

\cac specifies when high-consequence operations require independent witnesses. \aas encodes their declared Epistemic Fault Domain (\efd) dependencies as constraints on witness selection.

\subsection{Epistemic Fault Domains and Structural Cuts}
\label{sec:efd-formalism}

Let $\mathbb{F} = \{f_1, f_2, \ldots, f_p\}$ denote the set of modeled epistemic fault domains. A fault domain $f \in \mathbb{F}$ represents an unobserved, shared point of failure---such as an operating system kernel, a hypervisor, a local network switch, a metrics scraping daemon, an external SaaS API, or a base foundation model.

Each assurance operation $a \in \mathcal{A}$ has an \emph{exposure set} $\mathcal{X}(a) \subseteq \mathbb{F}$ enumerating modeled roots on which its correctness depends. The completed fault basis includes a distinct local root $f_a\in\mathcal{X}(a)$ for each witness operation; shared roots represent correlated dependencies. Its receipts inherit at least this exposure. The gateway unions additional receipt-declared roots with the catalogued set.

\paragraph{Safe Default for Unknown Dependencies.}
If an assurance operation $a$ has an unmapped or unknown dependency structure ($\mathcal{X}(a) = \bot$ or unspecified), it must \textbf{never default to the empty set} $\emptyset$. Setting $\mathcal{X}(a) = \emptyset$ would imply that $a$ is completely immune to all known failure domains ($\forall C \subseteq \mathbb{F}, \mathcal{X}(a) \cap C = \emptyset$), falsely granting the unmapped operation artificial epistemic independence from all other verifiers.
Instead, \aas enforces the \emph{Conservative Exposure Principle}: any operation with unknown shared dependencies is assigned the universal infrastructure exposure set $\mathcal{X}(a) = \mathbb{F}_{\mathrm{infra}} \subseteq \mathbb{F}$ (or $\mathbb{F}$ if infrastructure partitions are undefined), in addition to its local root. It cannot establish independence from another witness sharing standard infrastructure.

\begin{definition}[Epistemic Diversity Cut $\kappa_E$]
\label{def:efd-cut}
For candidate witness set $W$ and obligation $\omega$, let $\Gamma_\omega$ be its positive $k_\omega$-of-$|W|$ approval rule. Write $\mathcal{W}_{\min}=\{W_d\subseteq W:|W_d|=k_\omega\}$ for its minimal decisive coalitions. A root $f$ exposes $D_W(f)=\{a\in W:f\in\mathcal{X}(a)\}$. Following decision-rule-relative \efd theory~\cite{he2026efd}, define
\begin{equation}
\begin{aligned}
\kappa_E(W,\Gamma_\omega)=\min\{\,|C|:\;&C\subseteq\mathbb{F},\\
&\exists W_d\in\mathcal{W}_{\min},\\
&W_d\subseteq\bigcup_{f\in C}D_W(f)\,\}.
\end{aligned}
\label{eq:efd-cut}
\end{equation}
Set the cut to zero if $|W|<k_\omega$. With local roots present, $1\le\kappa_E(W,\Gamma_\omega)\le k_\omega$ once a quorum is assigned. Refuting evidence vetoes admission before this positive approval rule is evaluated; arbitrary nonmonotone rules are outside the implemented model.
\end{definition}

If $\kappa_E(W,\Gamma_\omega)=1$, one root exposes a decisive approval coalition, even if other signatures remain independent. For three independent witnesses the cut is 2 under a 2-of-3 rule and 3 under unanimity. \cac requires $\kappa_E(W,\Gamma_\omega)\ge\kappa_E^{\min}(\omega)$. This structural guarantee requires a complete exposure map and a sound causal account of what exposure permits; it does not establish physical predicate truth.

\subsection{The Cost-Minimization Vulnerability}
\label{sec:cost-minimization-trap}

Standard portfolio optimization and LLM routing algorithms~\cite{chen2023frugalgpt,vpcontrol2026} rank verifiers primarily by financial cost and marginal accuracy. In distributed systems, this creates a catastrophic vulnerability:

\begin{proposition}[Correlated Selection Under Cost Minimization]
\label{prop:correlated-selection}
Let an obligation $\omega$ require $k$ affirmative witnesses. Suppose available verifiers partition into two sets:
\begin{itemize}
    \item Correlated cluster $\mathcal{V}_{\mathrm{corr}} = \{v_1, \ldots, v_m\}$ ($m \ge k$) sharing a single common telemetry exporter $f_{\mathrm{proxy}} \in \mathbb{F}$, with per-operation cost bounded by $c(v_i) \le c_{\max}$.
    \item Independent verifier $v_{\mathrm{indep}}$ with disjoint exposure ($\mathcal{X}(v_{\mathrm{indep}}) \cap \mathcal{X}(v_i) = \emptyset$) but higher financial cost satisfying $c(v_{\mathrm{indep}}) > k \cdot c_{\max}$.
\end{itemize}
Any cost-minimizing scheduler that optimizes cost $\sum_{i} c(v_i)$ without enforcing $\kappa_E(W,\Gamma_\omega) \ge 2$ will select $k$ verifiers exclusively from $\mathcal{V}_{\mathrm{corr}}$. The resulting witness set has $\kappa_E(W,\Gamma_\omega) = 1$ and will be rejected by the admission controller.
\end{proposition}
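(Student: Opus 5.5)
The argument splits into a cost comparison and a cut computation. For the cost comparison, take any feasible witness set $W$ with $|W|\ge k$ drawn from $\mathcal{V}_{\mathrm{corr}}\cup\{v_{\mathrm{indep}}\}$. A cost-minimizing scheduler that enforces only quorum cardinality, and not the \efd cut, will never select more than $k$ witnesses: costs are nonnegative, so removing a witness never increases the total. If the chosen set contains $v_{\mathrm{indep}}$, its cost exceeds $c(v_{\mathrm{indep}})>k\,c_{\max}$. Any $k$-subset of $\mathcal{V}_{\mathrm{corr}}$, which exists because $m\ge k$, costs at most $k\,c_{\max}$. Every set containing $v_{\mathrm{indep}}$ is therefore strictly dominated, and the optimum is a $k$-subset $W\subseteq\mathcal{V}_{\mathrm{corr}}$. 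Strict dominance also makes the paper's lexicographic tie-breaking on latency and exposure irrelevant.

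For the cut, I would evaluate Definition~\ref{def:efd-cut} on this $W$. Every $v_i\in W$ has $f_{\mathrm{proxy}}\in\mathcal{X}(v_i)$, so $D_W(f_{\mathrm{proxy}})=W$. Since $|W|=k_\omega=k$, the set $W$ is itself a minimal decisive coalition in $\mathcal{W}_{\min}$. The singleton $C=\{f_{\mathrm{proxy}}\}$ therefore witnesses $\kappa_E(W,\Gamma_\omega)\le 1$. The definition also gives $\kappa_E\ge 1$ once a quorum is assigned, because the empty $C$ covers nothing. Hence $\kappa_E=1$. Whenever $\kappa_E^{\min}(\omega)\ge 2$, invariant~\eqref{eq:efd-invariant} fails, and the gateway, which enforces $\kappa_E\ge\kappa_E^{\min}$, rejects the set.

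The main obstacle is pinning down the hypotheses. The statement needs three implicit assumptions: $k\ge 1$, nonnegative costs as in the operation signature, and a requirement $\kappa_E^{\min}(\omega)\ge 2$ at the gateway. I would also state explicitly that ``will select'' holds for every optimal solution, not merely one of them. Strict dominance gives this, provided the optimizer returns only minimal-cardinality optima or we use nonnegative costs to prune superfluous extra correlated witnesses. A superset of $\mathcal{V}_{\mathrm{corr}}$ elements still has cut $1$, since every one of its $k$-subsets is covered by $\{f_{\mathrm{proxy}}\}$. So even if the optimizer adds zero-cost extra correlated witnesses, the conclusion $\kappa_E=1$ holds.
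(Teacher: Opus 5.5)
Your proposal is correct and follows the paper's proof. Strict cost dominance ($c(v_{\mathrm{indep}}) > k\,c_{\max}$, which is at least the cost of any $k$-subset of $\mathcal{V}_{\mathrm{corr}}$) forces a correlated selection, and the singleton $C=\{f_{\mathrm{proxy}}\}$ then exposes a decisive coalition, giving $\kappa_E(W,\Gamma_\omega)=1$; your extra care (handling optima with more than $k$ witnesses, deriving $\kappa_E\ge 1$ from the empty coalition, and stating the $\kappa_E^{\min}(\omega)\ge 2$ gateway requirement) tightens points that the paper's proof, which compares only size-$k$ selections, leaves implicit.
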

\begin{proof}
Any candidate selection $W \subseteq \mathcal{V}_{\mathrm{corr}}$ of size $k$ incurs total cost $\sum_{v \in W} c(v) \le k \cdot c_{\max}$. Conversely, any candidate selection $W' \subset \mathcal{V}_{\mathrm{corr}} \cup \{v_{\mathrm{indep}}\}$ of size $k$ containing $v_{\mathrm{indep}}$ must include $v_{\mathrm{indep}}$ and $k-1$ other verifiers; since costs are non-negative ($c(v) \ge 0$), $\operatorname{Cost}(W') = c(v_{\mathrm{indep}}) + \sum_{v \in W' \setminus \{v_{\mathrm{indep}}\}} c(v) \ge c(v_{\mathrm{indep}}) > k \cdot c_{\max} \ge \operatorname{Cost}(W)$. Therefore, any cost-minimizing selection strictly chooses $k$ verifiers exclusively from $\mathcal{V}_{\mathrm{corr}}$. However, $\{f_{\mathrm{proxy}}\} \cap \mathcal{X}(v_i) \neq \emptyset$ for all $v_i \in \mathcal{V}_{\mathrm{corr}}$. Thus $C = \{f_{\mathrm{proxy}}\}$ exposes a decisive coalition, so $\kappa_E(W,\Gamma_\omega) = 1 < 2$, causing deterministic gateway rejection.
\end{proof}

\subsection{Optimization Architecture: Decomposed Master-Subproblem Solver}
\label{sec:joint-cost-efd}

The joint assurance scheduling problem (Equation~\eqref{eq:optimization-objective}) requires simultaneously choosing a witness portfolio satisfying quorums, EFD diversity, and resource budgets, while constructing a feasible temporal schedule $(\sigma_\pi, t_{\mathrm{dispatch}})$ satisfying precedence $D_{\mathrm{op}}$ and evidence freshness intervals.

Because joint mixed-integer non-linear optimization over continuous time and combinatorial fault cuts is computationally intractable in real-time control loops, \aas adopts a \textbf{Decomposed Optimization Architecture} based on Logic-Based Benders Decomposition (LBBD):
\begin{enumerate}
    \item \textbf{Master Selection Problem (0-1 ILP):} Solves the combinatorial witness selection and obligation assignment problem, minimizing financial cost subject to quorums, \efd cuts, and hard resource budgets ($B_{\$}, B_{\mathrm{tok}}, B_{\mathrm{risk}}$).
    \item \textbf{Temporal Scheduling Subproblem (Algorithm~\ref{alg:backward-scheduler}):} Solves the continuous backward JIT scheduling pass over the candidate witness assignments $\mathbf{y}^*$ and operational DAG $G[V^*]$ under precedence $D_{\mathrm{op}}$ and freshness intervals.
    \item \textbf{Sound Conflict Generation \& Refinement:} If the subproblem detects temporal infeasibility, it distinguishes assignment-induced freshness conflicts from structural precedence bottlenecks, generating provably sound Benders cuts.
\end{enumerate}

\paragraph{Operation Roles and Distinctions.}
To guarantee soundness in cut generation, \aas strictly distinguishes five categories of operations:
\begin{enumerate}
    \item \textbf{Selected Operations ($V^* = \{a \mid x_a = 1\}$):} Operations committed for execution, which consume financial cost $c(a)$, tokens $\operatorname{tok}(a)$, and risk budget $\|\rho(a)\|$.
    \item \textbf{Witness Assignments ($\mathbf{y}^* = \{ (a, \omega) \mid y_{a, \omega} = 1 \}$):} Operations whose emitted receipts are actively assigned to discharge obligation $\omega$. Freshness constraints apply \textbf{only} to assigned pairs.
    \item \textbf{Manifest Contributors ($V_{\mathrm{manifest}} = \{a \in V^* \mid \exists \omega, y_{a, \omega} = 1\}$):} The subset of selected operations that produce receipts for the final admission certificate.
    \item \textbf{Precedence Prerequisites ($V_{\mathrm{prec}} = \operatorname{Anc}_{D_{\mathrm{op}}}(V_{\mathrm{manifest}}) \setminus V_{\mathrm{manifest}}$):} Operations selected through prerequisite closure that must complete before a manifest contributor can launch, even if they emit no admission receipts. Thus $V_{\mathrm{prec}}\subseteq V^*$.
    \item \textbf{Nonmanifest Operations ($V_{\mathrm{nonmanifest}} = V^* \setminus V_{\mathrm{manifest}}$):} Selected operations whose receipts are unassigned or superseded, including mandatory prerequisites. Their receipt freshness does not constrain dispatch, but their completion and resource use still do.
\end{enumerate}

\paragraph{Counterexample: Unsoundness of Operation-Set Conflict Cuts.}
A naive Benders formulation adds the conflict cut $\sum_{a \in V_{\mathrm{conflict}}} x_a \le |V_{\mathrm{conflict}}| - 1$ whenever portfolio $V_{\mathrm{conflict}}$ fails temporal scheduling. \textbf{This cut is unsound}:
\begin{example}[Alternative Witness Relief]
\label{ex:benders-counterexample}
Suppose Master ILP selects $V^* = \{a_1, a_2\}$ with assignments $y_{a_1, \omega_1} = 1$ and $y_{a_2, \omega_2} = 1$. Operation $a_1$ has long latency $\hat{\ell}(a_1) = 8$\,s, but obligation $\omega_1$ has a short freshness lifespan $\Delta t_{\omega_1} = 3$\,s.
The temporal subproblem determines that assigning $a_1$ to $\omega_1$ is infeasible ($[t_{\min}, t_{\max}] = \emptyset$).
If the scheduler appends an operation-set cut $x_{a_1} + x_{a_2} \le 1$, it permanently forbids selecting both $a_1$ and $a_2$ in any future solution.

Now suppose there exists an alternative operation $a_3$ with short latency $\hat{\ell}(a_3) = 1$\,s that also covers $\omega_1$, while operation $a_1$ is also eligible to cover a third obligation $\omega_3$ with a generous lifespan $\Delta t_{\omega_3} = 45$\,s.
Consider the expanded portfolio $V' = \{a_1, a_2, a_3\}$ with reassigned witnesses $y_{a_3, \omega_1} = 1$, $y_{a_2, \omega_2} = 1$, and $y_{a_1, \omega_3} = 1$.
This solution is completely feasible and may be cost-optimal! However, the naive cut $x_{a_1} + x_{a_2} \le 1$ incorrectly excludes $V'$, pruning a valid, optimal solution.
\end{example}

\paragraph{The Sound Master Selection ILP.}
For each operation $a \in \mathcal{A}$ and obligation $\omega \in \Omega$, let $\operatorname{Eligible}(a, \omega) \iff \exists \varepsilon \in \mathcal{E}_\omega \text{ s.t. } (\omega, \varepsilon) \in \operatorname{Cov}(a)$.
We formulate the Master ILP under Contract C:
\begin{align}
\min_{\mathbf{x}, \mathbf{y}} \quad & \sum_{a \in \mathcal{A}} c(a) \cdot x_a \label{eq:ilp-obj} \\
\text{subject to} \quad & y_{a, \omega} \le x_a \quad \forall a \in \mathcal{A}, \; \omega \in \Omega, \label{eq:ilp-link} \\
& x_a \le x_u \quad \forall a \in \mathcal{A}, \; \forall u \in \operatorname{Prereq}(a), \label{eq:ilp-prereq} \\
& y_{a, \omega} = 0 \quad \forall a, \omega \text{ with } \neg\operatorname{Eligible}(a, \omega), \label{eq:ilp-elig} \\
& \sum_{a \in \mathcal{A}} y_{a, \omega} \ge k_\omega \quad \forall \omega \in \Omega, \label{eq:ilp-quorum} \\
& \sum_{a: \mathcal{X}(a) \cap C \neq \emptyset} y_{a, \omega} \le k_\omega-1 \label{eq:ilp-cut} \\
& \qquad \forall \omega \in \Omega, \; \forall C \subset \mathbb{F} \text{ s.t. } |C| < \kappa_E^{\min}(\omega), \nonumber \\
& \sum_{a \in \mathcal{A}} c(a) \cdot x_a \le B_{\$}, \label{eq:ilp-cost} \\
& \sum_{a \in \mathcal{A}} \operatorname{tok}(a) \cdot x_a \le B_{\mathrm{tok}}, \label{eq:ilp-tok} \\
& \sum_{a \in \mathcal{A}} \|\rho(a)\| \cdot x_a \le B_{\mathrm{risk}}, \label{eq:ilp-risk} \\
& \sum_{(a, \omega) \in A_{\mathrm{conflict}}} y_{a, \omega} \le |A_{\mathrm{conflict}}| - 1 \label{eq:ilp-assign-cut} \\
& \qquad \forall A_{\mathrm{conflict}} \in \mathcal{K}_{\mathrm{assign}}, \nonumber \\
& \sum_{a \in V_{\mathrm{chain}}} x_a \le |V_{\mathrm{chain}}| - 1 \label{eq:ilp-chain-cut} \\
& \qquad \forall V_{\mathrm{chain}} \in \mathcal{K}_{\mathrm{chain}}, \nonumber \\
& x_a \in \{0, 1\}, \; y_{a, \omega} \in \{0, 1\} \quad \forall a \in \mathcal{A}, \; \omega \in \Omega. \label{eq:ilp-binary}
\end{align}

The diversity row forbids any fault set smaller than the required cut from exposing $k_\omega$ assigned witnesses. The implementation synthesizes local roots for exact validation. In the master ILP it enumerates shared-root rows: a local root can be replaced by a declared shared root of the same witness without shrinking the exposed set, while local roots alone cannot expose a quorum with fewer than $k_\omega$ failures. Thus the omitted local rows are redundant when the required threshold is at most $k_\omega$.

\paragraph{Formulation and Cut Semantics.}
\begin{itemize}
    \item \textbf{Exact Financial and Prerequisite Accounting:} Constraint~\eqref{eq:ilp-link} links witness assignments to operation selection, while Constraint~\eqref{eq:ilp-prereq} enforces transitive operational prerequisite closure ($x_a \le x_u$). This guarantees that selecting an operation automatically forces the selection of all required predecessor operations, charging each operation $c(a)$ exactly once in objective~\eqref{eq:ilp-obj} and budget~\eqref{eq:ilp-cost}, regardless of multi-obligation witness coverage ($\sum_\omega y_{a, \omega} \ge 2$).
    \item \textbf{Assignment-Aware No-Good Cuts ($\mathcal{K}_{\mathrm{assign}}$):} When temporal scheduling fails due to evidence expiration or lack of freshness intersection, the authoritative subproblem isolates the certified minimal conflicting assignments $A_{\mathrm{conflict}} = \{(a, \omega) \mid y_{a, \omega}^* = 1 \text{ active in conflict}\}$. Constraint~\eqref{eq:ilp-assign-cut} eliminates this specific assignment without forbidding operations in $A_{\mathrm{conflict}}$ from being selected or reassigned to other obligations. When heuristic placement exhausts without certified proof ($\textsc{NotFound}$), the solver applies a safe full-candidate combinatorial no-good cut ($\sum_{a: x_a^*=1}(1-x_a) + \sum_{a: x_a^*=0} x_a \ge 1$) to avoid over-pruning alternative witness assignments.
    \item \textbf{Precedence-Chain Structural Cuts ($\mathcal{K}_{\mathrm{chain}}$):} When failure is caused by an unavoidable directed precedence chain $V_{\mathrm{chain}} \subseteq \mathcal{A}$ in $D_{\mathrm{op}}$ whose cumulative conservative latency exceeds available deadline:
    \begin{equation}
    \sum_{u \in V_{\mathrm{chain}}} (\hat{\ell}(u) + \epsilon_{\mathrm{guard}}) + \delta_{\mathrm{ready}} + \delta_{\mathrm{exec}} > T_{\mathrm{dead}} - t_0,
    \end{equation}
    latency monotonicity guarantees that any portfolio containing $V_{\mathrm{chain}}$ is temporally infeasible. Only in this structural case is the operation-level cut~\eqref{eq:ilp-chain-cut} valid.
\end{itemize}

\paragraph{Financial Optimization Scope and Consequential Coordination.}
The Master ILP objective~\eqref{eq:ilp-obj} minimizes the direct financial cost $\operatorname{Cost}(\pi) = \sum_{a \in \mathcal{A}} c(a) x_a$. When all assurance operations in $\mathcal{A}$ are non-consequential ($\rho(a) \prec \tau_\Pi$), direct cost equals total cost: $\operatorname{Cost}_{\mathrm{total}}(\pi) = \operatorname{Cost}(\pi)$ (Section~\ref{sec:problem-formulation}). When consequential operations require diagnostic sub-plans (Section~\ref{sec:consequential-acquisition}), cost optimization has two cases:
\begin{enumerate}
    \item \textbf{Statically Modeled / Flattened Instances:} If candidate diagnostic trees are fully expanded and flattened into the catalog $\mathcal{A}$ with their prerequisites, risks, and shared costs accounted for, the Master ILP jointly minimizes $\operatorname{Cost}_{\mathrm{total}}$.
    \item \textbf{Runtime Consequential Coordination:} In dynamic systems where diagnostic sub-plans are synthesized recursively at runtime via local risk stratification (Algorithm~\ref{alg:complete-aas}), the static Master ILP optimizes direct operational cost, while recursive admission coordinates sub-plan synthesis and budget reservation dynamically.
\end{enumerate}
We note that direct-cost minimization does not guarantee joint consequential optimality without flattening:
\begin{example}[Cheap Verifier with Expensive Diagnostics]
\label{ex:cheap-verifier-counterexample}
Suppose obligation $\omega$ can be satisfied by either $a_1$ ($c(a_1) = \$1.00$, but consequential with diagnostic sub-plan $\pi_{a_1}$ costing $c(\pi_{a_1}) = \$100.00$) or $a_2$ ($c(a_2) = \$5.00$, non-consequential with $c(\pi_{a_2}) = \$0.00$). A static optimizer minimizing only direct cost strictly selects $a_1$ (\$1.00 vs.\ \$5.00). Upon runtime admission, $a_1$ admits its diagnostic sub-plan, yielding total cost $\operatorname{Cost}_{\mathrm{total}} = \$101.00 \gg \$5.00$. While shadow reservation ledgers prevent budget overruns, global cost-optimality $\operatorname{Cost}_{\mathrm{total}}$ requires pre-flattening diagnostic expansions into $\mathcal{A}$.
\end{example}

\paragraph{Decomposition Interface, Convergence, and Optimality.}
The master-subproblem solver operates in an iterative loop:
\begin{enumerate}
    \item Solve Master ILP~\eqref{eq:ilp-obj}--\eqref{eq:ilp-binary}. If infeasible, terminate with safe refusal ($\Pi_{\mathrm{adm}}(\mathcal{P}) = \emptyset$).
    \item Pass candidate assignments $\mathbf{y}^*$ and operational DAG $G[V^* \cup \operatorname{Anc}_{D_{\mathrm{op}}}(V^*)]$ to the Authoritative Temporal Solver (Algorithm~\ref{alg:backward-scheduler}).
    \item If the temporal solver returns a valid schedule $(\sigma_\pi, t_{\mathrm{dispatch}})$, terminate. Plan $\pi$ is prospectively admissible.
    \item If the temporal solver proves $\textsc{Infeasible}$:
    \begin{itemize}
        \item If the critical path of an induced precedence chain exceeds $T_{\mathrm{dead}} - t_0 - \delta_{\mathrm{exec}}$, append structural cut~\eqref{eq:ilp-chain-cut} to $\mathcal{K}_{\mathrm{chain}}$.
        \item Otherwise, when certified by the exact oracle, isolate minimal conflicting witness assignments $A_{\mathrm{conflict}} \subseteq \mathbf{y}^*$ and append assignment cut~\eqref{eq:ilp-assign-cut} to $\mathcal{K}_{\mathrm{assign}}$. If heuristic search exhausts without certified proof ($\textsc{NotFound}$), append a safe full-candidate combinatorial no-good cut to avoid unsound pruning of unvisited assignments.
    \end{itemize}
    Return to Step~1.
\end{enumerate}

\begin{theorem}[Convergence and Conditional Global Optimality]
\label{thm:benders-optimality}
The decomposed LBBD algorithm terminates in a finite number of iterations. Furthermore, for any fully modeled or flattened problem $\mathcal{P}$, if the master solver executes to completion without timeout ($T_{\mathrm{ctrl}} = \infty$) and the subproblem is evaluated authoritatively via an exact temporal oracle (Section~\ref{sec:eval-optimality}), the returned plan $\pi^*$ is \textbf{globally cost-optimal} over all prospectively admissible plans:
\begin{equation}
\operatorname{Cost}(\pi^*) = \min_{\pi \in \Pi_{\mathrm{adm}}(\mathcal{P})} \sum_{a \in V_\pi} c(a).
\end{equation}
When evaluated under bounded concurrency $K_{\max} < \infty$ with the polynomial-time slack-priority heuristic, or upon solver timeout $T_{\mathrm{ctrl}} < \infty$, the returned plan is \textbf{approximately optimized and prospectively feasible}, with no claim of global optimality. Dynamic consequential sub-plans admitted at runtime are coordinated outside this static optimality guarantee.
\end{theorem}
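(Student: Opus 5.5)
The argument has two parts, finite termination and conditional optimality, and both rest on one invariant. Let $\mathcal{F}$ denote the set of master-feasible points $(\mathbf{x},\mathbf{y})\in\{0,1\}^{|\mathcal{A}|+|\mathcal{A}||\Omega|}$ satisfying constraints~\eqref{eq:ilp-link}--\eqref{eq:ilp-risk} and~\eqref{eq:ilp-binary}. This set is finite. The invariant is that every cut appended in Step~4 eliminates the current incumbent $(\mathbf{x}^*,\mathbf{y}^*)$:
\begin{itemize}
\item the chain cut~\eqref{eq:ilp-chain-cut} is violated by $\mathbf{x}^*$ because $V_{\mathrm{chain}}\subseteq V^*$;
\item the assignment cut~\eqref{eq:ilp-assign-cut} is violated because $A_{\mathrm{conflict}}\subseteq\mathbf{y}^*$;
\item the full-candidate no-good is violated only by $\mathbf{x}^*$ itself.
\end{itemize}
Each non-terminating iteration therefore strictly shrinks the master feasible region. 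If some $(\mathbf{x},\mathbf{y})$ could recur, it would satisfy a cut it already violated, which is impossible. Hence there are at most $|\mathcal{F}|$ iterations before the loop either returns a schedule or the master becomes infeasible. This bound needs no exactness assumption, so it also covers the heuristic and timeout regime, provided each master call returns some feasible point or reports infeasibility.

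For optimality, I would prove a \emph{validity} lemma: under the exact oracle, no appended cut removes the $(\mathbf{x},\mathbf{y})$ encoding of any prospectively admissible plan. I would first map each $\pi\in\Pi_{\mathrm{adm}}(\mathcal{P})$ to $x_a=[a\in V_\pi]$ and $y_{a,\omega}=[a\in\mathcal{M}_\pi(\omega)]$, and check that this point lies in $\mathcal{F}$ with objective value $\operatorname{Cost}(\pi)$:
\begin{itemize}
\item quorum and eligibility come from invariant~4 of Definition~\ref{def:prospective-admissibility};
\item the diversity rows~\eqref{eq:ilp-cut} come from invariant~6 together with Definition~\ref{def:efd-cut}, since a fault set of size below $\kappa_E^{\min}$ exposing $k_\omega$ witnesses would witness a smaller cut;
\item prerequisite closure and the budgets come from invariants~1 and~3;
\item the flattening hypothesis makes the direct cost coincide with $\operatorname{Cost}_{\mathrm{total}}$.
\end{itemize}
Then I would check cut validity by type. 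A chain cut is valid because any plan containing $V_{\mathrm{chain}}$ must serialize it by invariant~1, which violates invariant~2. An assignment cut is valid because the exact oracle certifies that no schedule exists for any plan whose assignments include $A_{\mathrm{conflict}}$; this is precisely the oracle's certificate, using Theorem~\ref{thm:freshness-intersection} to reduce freshness to per-assignment constraints. The full no-good is valid because it removes only the incumbent, and the oracle has certified that the incumbent is infeasible.

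With validity in hand, the conclusion follows. At every iteration the master's feasible region contains the encodings of all admissible plans. When the loop terminates with $\pi^*$, the master returned a minimum-cost point under $T_{\mathrm{ctrl}}=\infty$, and that value is at most $\min_{\pi\in\Pi_{\mathrm{adm}}}\operatorname{Cost}(\pi)$. Feasibility of $\pi^*$ follows from Theorem~\ref{thm:temporal-correctness} together with the master constraints, so $\operatorname{Cost}(\pi^*)$ is also at least that minimum, and equality holds. If instead the master becomes infeasible, validity gives $\Pi_{\mathrm{adm}}=\emptyset$, which justifies safe refusal. The heuristic and timeout clauses then need only feasibility, which again comes from Theorem~\ref{thm:temporal-correctness}. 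Optimality is not claimed there, because $\textsc{NotFound}$ may fire the full no-good on an incumbent that is actually feasible, and that can discard the true optimum.

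I expect the main obstacle to be the validity of assignment cuts. A schedule's feasibility depends on the whole selected set $V^*$, through the readiness delay caused by non-manifest operations and through $K_{\max}$ collisions, and not only on the conflicting assignments. Adding operations could in principle change whether a conflict persists. My first attempt would be a monotonicity argument: adding operations and assignments only adds constraints to the simple temporal network and never relaxes any existing one, so infeasibility of a sub-network is inherited by every superset. This is where I would require the oracle to certify infeasibility against the subgraph induced by $A_{\mathrm{conflict}}$ and its ancestors, which is exactly what minimality of $A_{\mathrm{conflict}}$ should guarantee.
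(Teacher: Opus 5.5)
Your skeleton is the paper's. A finite master space in which every Step~4 cut excludes the incumbent gives termination, and an outer relaxation plus cut validity makes the first temporally verified master optimum globally optimal. Your explicit encoding of admissible plans into $\mathcal{F}$ goes beyond the paper, which only asserts these facts. So does your monotonicity lemma: restricting a plan to fewer operations and assignments only deletes precedence, readiness, freshness, and capacity constraints. The optimality conclusion you draw from them is sound.

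One step of your validity lemma is false, however. The full-candidate no-good $\sum_{a:x^*_a=1}(1-x_a)+\sum_{a:x^*_a=0}x_a\ge 1$ involves $\mathbf{x}$ only. It therefore deletes every point $(\mathbf{x}^*,\mathbf{y})$, i.e.\ every reassignment of witnesses over $V^*$, and a certificate for $(\mathbf{x}^*,\mathbf{y}^*)$ says nothing about those.

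Here is a concrete instance. Let a 2-of-2 obligation with a 45\,s lifetime, which only $a_1$ ($\hat\ell=8$\,s) and $a_3$ ($\hat\ell=1$\,s) can witness, force $V^*=\{a_1,a_3\}$. Let both also be eligible for a 1-of-1 obligation $\omega_1$ with a 3\,s lifetime. Assigning $a_1$ to $\omega_1$ is certifiably infeasible. Assigning $a_3$ instead is feasible at the same cost, given a generous deadline and $\delta_{\mathrm{ready}}+\delta_{\mathrm{exec}}+\epsilon_{\mathrm{skew}}\le 2$\,s. Yet the no-good removes both; this is the mechanism of Example~\ref{ex:benders-counterexample}. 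What you actually need is that an exact oracle never returns \textsc{NotFound}, so this cut is never appended in the optimality regime. The paper's claim that these cuts prune no unvisited alternatives has the same defect.

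Three further repairs are needed.

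\emph{Assignment cuts.} Minimality of $A_{\mathrm{conflict}}$ governs the strength of cut~\eqref{eq:ilp-assign-cut}, not its validity. By your monotonicity lemma, the cut is valid exactly when the oracle certifies infeasibility of the network induced by $A_{\mathrm{conflict}}$ and its prerequisite ancestors. The incumbent may fail only because of selected operations outside $V_{\mathrm{manifest}}\cup\operatorname{Anc}_{D_{\mathrm{op}}}(V_{\mathrm{manifest}})$. This can happen with zero-cost operations, which still delay readiness and occupy $K_{\max}$ capacity. In that case no assignment set carries such a certificate. Dropping those operations preserves all master rows and all cuts appended in the exact regime without raising cost. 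So you may assume the oracle sees only the restricted network.

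\emph{Chain cuts.} The chain-cut trigger uses $\hat\ell(u)+\epsilon_{\mathrm{guard}}$, while Definition~\ref{def:prospective-admissibility} uses $\hat\ell(u)$. Hence ``serialize by invariant~1, violate invariant~2'' validates the chain cut only if the trigger is evaluated without the guard, or $\epsilon_{\mathrm{guard}}=0$.

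\emph{Feasibility of the returned plan.} Theorem~\ref{thm:temporal-correctness} certifies outputs of Algorithm~\ref{alg:backward-scheduler}, not of the exact oracle, and it does not list $B_{\mathrm{lat}}$. In the exact regime, check the oracle's schedule directly against invariants~1, 2, 3, and~5. Also, invariant~1 alone does not force the closure~\eqref{eq:ilp-prereq} you attribute to it. You must read $\operatorname{Pre}(a)\subseteq V_\pi$ into the definition, as the paper implicitly does.
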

\begin{proof}
The assignment space $\{0, 1\}^{|\mathcal{A}| \times |\Omega|}$ and portfolio space $2^{|\mathcal{A}|}$ are finite. Each iteration generates either an assignment cut~\eqref{eq:ilp-assign-cut} (or safe full-candidate cut) that prunes at least one integer assignment vector $\mathbf{y}^*$ or a structural cut~\eqref{eq:ilp-chain-cut} that prunes at least one operation subset. Both cut families are sound: structural cuts prune operation subsets whose topological critical path latency strictly exceeds the deadline window ($t_{\mathrm{CP}} + \delta_{\mathrm{ready}} > T_{\mathrm{dead}} - t_0 - \delta_{\mathrm{exec}}$), while assignment cuts prune certified conflicting witness assignments. When subproblem feasibility is determined via exact temporal search, no prospectively admissible plan $\pi \in \Pi_{\mathrm{adm}}(\mathcal{P})$ is ever falsely pruned. Because the Master ILP optimizes the exact linear cost objective $\sum_a c(a) x_a$ over an admissible outer relaxation, the first verified feasible candidate achieves global cost optimality over the flattened instance $\mathcal{P}$.
\end{proof}

\subsection{Complexity and EFD Cut Separation}
\label{sec:asp-complexity}

To understand the computational tractability of \aas, we rigorously distinguish four related problem variants:
\begin{enumerate}
    \item \textbf{Checking Diversity of a Fixed Witness Set:} Given $W \subseteq \mathcal{A}$ and threshold $h$, determining whether $\kappa_E(W,\Gamma_\omega) \ge h$ requires checking that each $C\subseteq\mathbb F$ with $|C|<h$ exposes fewer than $k_\omega$ assigned witnesses. For fixed $h$, there are $O(|\mathbb F|^{h-1})$ candidates and each check costs $O(|W|h)$.
    \item \textbf{Separating Violated ILP Diversity Cuts:} Given candidate assignments $(\mathbf{x}, \mathbf{y})$, finding a violated cut $C$ for Constraint~\eqref{eq:ilp-cut} requires evaluating the $O(|\mathbb{F}|^{k-1})$ candidate fault sets. For $k=2$, $|C|=1$, so there are only $|\mathbb{F}|$ cuts per obligation, which can be instantiated statically!
    \item \textbf{Unbounded Cut Computation:} Finding a minimum root coalition exposing a decisive quorum generalizes set-cover variants and is NP-hard for arbitrary thresholds~\cite{karp1972reducibility}. Small fixed thresholds ($h\in\{2,3\}$) permit polynomial-time checking.
    \item \textbf{The ASP Decision Problem:} Selecting a minimum-cost witness set satisfying multi-obligation coverage and quorums is NP-hard.
\end{enumerate}

\begin{theorem}[NP-Hardness of ASP Decision Problem]
\label{thm:np-hard}
The decision version of the Assurance Scheduling Problem (\textsc{ASP-DEC})---determining whether there exists a prospectively admissible plan $\pi \models_{\mathrm{pros}} \mathcal{P}$ with financial cost $\operatorname{Cost}(\pi) \le K$---is NP-hard, even in the absence of precedence constraints ($D_{\mathrm{op}} = \emptyset$) and with unit operation latencies.
\end{theorem}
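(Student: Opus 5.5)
We reduce from \textsc{Set Cover}, which is NP-complete~\cite{karp1972reducibility}, and keep every non-selection part of the problem trivial. Let the \textsc{Set Cover} instance be a universe $U=\{u_1,\ldots,u_n\}$, a family $S_1,\ldots,S_m\subseteq U$, and a bound $K$. We build an ASP instance $\mathcal{P}$ in polynomial time.

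\textbf{Construction.} For each element $u_j$ we create one obligation $\omega_j$ with $k_{\omega_j}=1$ and $\kappa_E^{\min}(\omega_j)=1$. We give it a single evidence class $\varepsilon$ and a generous freshness bound $\Delta t_{\omega_j}=L$, where $L$ is a large constant. For each set $S_i$ we create one operation $a_i$ with the following attributes:
\begin{itemize}
\item coverage $\operatorname{Cov}(a_i)=\{(\omega_j,\varepsilon): u_j\in S_i\}$;
\item cost $c(a_i)=1$ and unit latencies $\ell(a_i)=\hat{\ell}(a_i)=1$;
\item lifespan $\Delta t(a_i)=L$ and no prerequisites ($\operatorname{Pre}(a_i)=\emptyset$, so $D_{\mathrm{op}}=\emptyset$);
\item token use $\operatorname{tok}(a_i)=0$, and a risk vector with $\|\rho(a_i)\|=0$ and $\rho(a_i)\prec\tau_\Pi$, so no operation is consequential;
\item a private local root, so that $\mathcal{X}(a_i)=\{f_{a_i}\}$.
\end{itemize}
We set all delays and the clock skew to $0$ except $\delta_{\mathrm{exec}}$, which we may take as $0$ as well. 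We take $t_0=0$, $T_{\mathrm{dead}}=B_{\mathrm{lat}}=2$, $K_{\max}=\infty$, $B_{\$}=K$, and leave $B_{\mathrm{tok}}$ and $B_{\mathrm{risk}}$ unconstrained. We also need $L\ge 2$ so that freshness never binds.

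\textbf{Forward direction.} Suppose there is a cover $\{S_i\}_{i\in I}$ with $|I|\le K$. Set $V_\pi=\{a_i:i\in I\}$ and schedule every operation at $\sigma_\pi=0$, with dispatch at $1$. For each $\omega_j$, pick any $i\in I$ with $u_j\in S_i$ and set $\mathcal{M}_\pi(\omega_j)=\{a_i\}$. We then check each invariant of Definition~\ref{def:prospective-admissibility} in turn.
\begin{itemize}
\item Precedence holds vacuously, since there are no edges.
\item The deadline holds: $1+0\le 2$.
\item Cost is $|I|\le K$, and latency is $1\le 2$.
\item Quorum holds, since each obligation has one eligible witness.
\item Freshness holds: $1\le 0+L$.
\item The diversity cut holds. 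A single witness with $k_\omega=1$ gives $\kappa_E=1$ by Definition~\ref{def:efd-cut}.
\item The diagnostic invariant holds vacuously, since no operation is consequential.
\end{itemize}

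\textbf{Converse.} Suppose some plan $\pi\models_{\mathrm{pros}}\mathcal{P}$ has cost at most $K$. By the quorum invariant, each $\omega_j$ has an assigned operation $a_i\in V_\pi$ with $(\omega_j,\varepsilon)\in\operatorname{Cov}(a_i)$, which means $u_j\in S_i$. Hence $\{S_i: a_i\in V_\pi\}$ covers $U$. Its size is $|V_\pi|=\operatorname{Cost}(\pi)\le K$.

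\textbf{Main obstacle.} The delicate step is not the combinatorics but confirming that the auxiliary invariants cannot constrain the instance. This requires three checks.
\begin{itemize}
\item The cut $\kappa_E$ with $k_\omega=1$ must equal $1$ and meet threshold $1$. This relies on the local-root convention, under which a single witness is exposed by its own root.
\item The time window and lifespans must be chosen so that some schedule always exists, independent of $|V_\pi|$. Unbounded concurrency together with simultaneous starts at $0$ ensures this.
\item Setting risk norms to zero must be consistent with non-consequentiality. If the paper's order $\prec\tau_\Pi$ requires strictly positive risk, we instead use a tiny risk and a correspondingly large $B_{\mathrm{risk}}$.
\end{itemize}

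With these checks in place, the construction preserves both directions of the equivalence. It satisfies the restrictions $D_{\mathrm{op}}=\emptyset$ and unit latencies, and it is clearly polynomial. This establishes NP-hardness of \textsc{ASP-DEC}.
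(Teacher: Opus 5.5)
Your proposal is correct and follows essentially the paper's proof. Both use the same element-to-obligation, set-to-operation reduction with unit quorums, threshold-one diversity cuts, private exposure roots, zero risk, simultaneous starts at $t_0=0$ and a slack freshness window, verified invariant by invariant in both directions. The only difference is that you reduce from unweighted \textsc{Set Cover} with $c(a_i)=1$, whereas the paper reduces from Minimum Weight Set Cover with $c(a_j)=w(S_j)$; this is harmless and even shows hardness already for unit costs.
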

\begin{proof}[Proof Sketch]
We establish a polynomial-time reduction from \textsc{Minimum Weight Set Cover} to \textsc{ASP-DEC}. Given universe $\mathcal{U} = \{u_1, \ldots, u_n\}$ and candidate subsets $\mathcal{S} = \{S_1, \ldots, S_m\}$ with weights $w(S_j)$, we map each universe element $u_i$ to an obligation $\omega_i \in \Omega$ and each subset $S_j$ to an assurance operation $a_j \in \mathcal{A}$ with cost $c(a_j) = w(S_j)$ and coverage $\operatorname{Cov}(a_j) = \{(\omega_i, \varepsilon) \mid u_i \in S_j\}$. Setting unit quorums, a set cover of weight at most $K$ exists if and only if there exists an admissible plan of cost at most $K$. Full bidirectional reduction and equivalence proofs are given in Appendix~\ref{app:proofs}.
\end{proof}

\subsection{Progress-Based Greedy Fallback Heuristic}
\label{sec:greedy-efd}

When the controller computation budget is constrained ($T_{\mathrm{ctrl}} < 5$\,ms), \aas executes an efficient \textbf{Progress-Based Marginal Score Heuristic}:

For an active partial plan with selected operations $V_\pi$ and assigned witnesses $W_\omega \subseteq V_\pi$, define the provisional cut $\widetilde\kappa_E(W,\omega)=\kappa_E(W,\Gamma_{\min(k_\omega,|W|)})$ for nonempty $W$, and zero for empty $W$. This provisional score rewards distinct roots while a quorum is assembled; admission always checks the fixed rule $\Gamma_\omega$.
\begin{enumerate}
    \item \textbf{Quorum Progress:} $\Delta \mathcal{Q}(a, \omega) = 1$ if $\operatorname{Eligible}(a, \omega)$ and $|W_\omega| < k_\omega$; 0 otherwise.
    \item \textbf{Diversity Progress:} $\Delta \kappa(a, \omega) = \max(0, \; \min(\widetilde\kappa_E(W_\omega \cup \{a\},\omega), \kappa_E^{\min}(\omega)) - \widetilde\kappa_E(W_\omega,\omega))$.
    \item \textbf{Marginal Cost:} If $a \in V_\pi$, $\tilde{c}(a) = 0$; if $a \notin V_\pi$, $\tilde{c}(a) = c(a)$.
    \item \textbf{Temporal Feasibility Guard:} $\mathbf{1}_{\mathrm{temp}}(a) = 1$ if adding $a$ to $V_\pi$ preserves a feasible JIT completion window within remaining latency and deadline bounds; 0 otherwise.
\end{enumerate}

At each iteration, the heuristic selects:
\begin{equation}
a^* = \arg\max_{a : \mathbf{1}_{\mathrm{temp}}(a) = 1} \frac{\sum_{\omega \in \Omega} \left[ \Delta \mathcal{Q}(a, \omega) + \lambda_{\kappa} \Delta \kappa(a, \omega) \right]}{\tilde{c}(a) + \epsilon},
\label{eq:greedy-efd-heuristic}
\end{equation}
where $\lambda_{\kappa} > 0$ weights fault-cut diversity relative to quorum count.
The reference heuristic computes the exact provisional cut by enumerating root coalitions up to $\min(k_\omega,|W_\omega|+1)$; with local roots, its worst-case enumeration is $O((|\mathbb F|+|W_\omega|)^{k_\omega})$ for bounded quorum size. It may fail after choosing an unfavorable quorum; it is not a completeness guarantee.

\begin{example}[Greedy Heuristic Execution Trajectories]
To illustrate heuristic behavior under trade-offs:
\begin{itemize}
    \item \textbf{Quorum vs. EFD Diversity:} Suppose obligation $\omega_1$ requires quorum $k_{\omega_1} = 2$ and diversity $\kappa_E^{\min} = 2$. Let current witness set be $W = \{a_1\}$ with exposure $\mathcal{X}(a_1) = \{f_1\}$. Candidate $a_{\mathrm{cheap}}$ ($c=\$0.01$) has exposure $\{f_1\}$, while candidate $a_{\mathrm{indep}}$ ($c=\$0.03$) has exposure $\{f_2\}$.
    Candidate $a_{\mathrm{cheap}}$ yields $\Delta \mathcal{Q} = 1$, but $\widetilde\kappa_E(\{a_1, a_{\mathrm{cheap}}\},\omega_1) = 1$ (since $\{f_1\}$ hits both), so $\Delta \kappa = 0$. Its score is $1 / (0.01 + \epsilon) \approx 100$.
    Candidate $a_{\mathrm{indep}}$ yields $\Delta \mathcal{Q} = 1$ and $\Delta \kappa = 1$. With $\lambda_\kappa = 3$, its score is $(1 + 3) / (0.03 + \epsilon) \approx 133$.
    The heuristic strictly prefers the independent verifier $a_{\mathrm{indep}}$, successfully steering away from the correlated trap.
    \item \textbf{Diversity vs. Temporal Deadlines:} Suppose obligation $\omega_2$ requires diversity $\kappa_E^{\min} = 2$. Candidate $a_{\mathrm{solver}}$ provides perfect epistemic diversity ($\Delta \kappa = 1$), but is a heavy formal model checker requiring conservative latency $\hat{\ell}(a_{\mathrm{solver}}) = 8$\,s. If the remaining deadline budget is $T_{\mathrm{dead}} - t_{\mathrm{now}} = 5$\,s, Backward JIT scheduling detects that $a_{\mathrm{solver}}$ cannot complete before dispatch. The temporal guard evaluates $\mathbf{1}_{\mathrm{temp}}(a_{\mathrm{solver}}) = 0$, immediately pruning $a_{\mathrm{solver}}$ from consideration and preventing a dead-end plan commitment.
\end{itemize}
\end{example}

%% file: sections/06-consequential-acquisition.tex
\section{Recursive and Consequential Acquisition}
\label{sec:consequential-acquisition}

The \cac paper~\cite{he2026cac} identifies a remediation-liveness problem: a diagnostic action may require another diagnostic action to be admitted first. Budget bounds alone do not ensure progress. We give graph and risk conditions for finite diagnostic expansion and state the additional assumptions needed for operational liveness.

\subsection{When Evidence Acquisition is Consequential}
\label{sec:consequential-diagnostics}

Not all evidence acquisition is passive telemetry scraping. In real-world distributed architectures, establishing that a complex predicate holds often requires \emph{active systems probing}:
\begin{itemize}
    \item \textbf{Storage Rollback Validation:} Freezing a database tablespace and mounting a copy-on-write snapshot to verify point-in-time recovery.
    \item \textbf{Network Isolation Verification:} Injecting synthetic probe traffic or testing border gateway BGP route filtering.
    \item \textbf{Failover Readiness:} Initiating an ephemeral dry-run transition on an auxiliary standby replica.
\end{itemize}

Under \cac, \emph{any} action whose modeled operational risk equals or exceeds the policy threshold $\tau_\Pi$ is classified as consequential:
\begin{equation}
\rho(a) \succeq \tau_\Pi \implies a \text{ is consequential}.
\end{equation}
Because the execution gateway completely mediates all infrastructure interactions, a consequential diagnostic operation $a \in \mathcal{A}$ \textbf{cannot bypass the admission boundary}. The gateway will reject $a$ unless presented with a valid admission certificate $\mathcal{C}_a$ satisfying its own obligations $\Omega_\Pi(a, s)$.

\begin{figure}[t]
\centering
\begin{tikzpicture}[
  actionnode/.style={draw=navy,rounded corners=2pt,fill=softgray,font=\scriptsize,align=center,inner sep=3pt},
  obnode/.style={draw=cobalt,ellipse,fill=softgray!50,font=\scriptsize,align=center,inner sep=2pt},
  arr/.style={-{Latex[length=2mm]},thick,darkslate},
  deparr/.style={-{Latex[length=2mm]},thick,dashed,crimson}]

  \node[actionnode] (q) {\textbf{Target Action $q$}\\\tiny (Primary Failover)};
  \node[obnode,below left=0.6cm and 0.4cm of q] (w1) {$\omega_{\mathrm{lag}}$ (Lag)};
  \node[obnode,below right=0.6cm and 0.4cm of q] (w2) {$\omega_{\mathrm{rb}}$ (Rollback Feasibility)};

  \node[actionnode,below=0.6cm of w1] (a1) {$a_{\mathrm{telemetry}}$\\\tiny (Passive scrape)};
  \node[actionnode,below=0.6cm of w2] (a2) {$a_{\mathrm{snap}}$ (Snapshot Probe)\\\tiny \textbf{Consequential} ($\rho(a_2) \succeq \tau_{\Pi}$)};

  \node[obnode,below=0.6cm of a2] (w3) {$\Omega(a_2)$\\\tiny (Quiescence Verification)};
  \node[actionnode,below=0.6cm of w3] (a3) {$a_{\mathrm{lock}}$ (Flush \& Check)};

  \draw[arr] (q) -- (w1);
  \draw[arr] (q) -- (w2);
  \draw[arr] (w1) -- (a1);
  \draw[arr] (w2) -- (a2);
  \draw[deparr] (a2) -- node[right,font=\tiny,text=crimson] {Recursive CAC} (w3);
  \draw[arr] (w3) -- (a3);
\end{tikzpicture}
\caption{The Assurance Dependency Graph (ADG). Consequential diagnostic action $a_{\mathrm{snap}}$ requires its own admission obligations $\Omega(a_{\mathrm{snap}})$, spawning recursive sub-scheduling.}
\label{fig:adg-tree}
\end{figure}

\subsection{The Assurance Dependency Graph (ADG)}
\label{sec:adg-formulation}

This recursive dependency induces a directed bipartite graph of alternating actions and obligations (Figure~\ref{fig:adg-tree}):

\begin{definition}[Assurance Dependency Graph (ADG)]
The Assurance Dependency Graph $\mathcal{G}_{\mathrm{ADG}} = (\mathcal{V}_A \cup \mathcal{V}_\Omega, \mathcal{E})$ is a directed bipartite graph where:
\begin{itemize}
    \item $\mathcal{V}_A \subseteq \mathcal{A} \cup \{q\}$ are action nodes,
    \item $\mathcal{V}_\Omega \subseteq \bigcup_{u \in \mathcal{V}_A} \Omega(u)$ are obligation nodes,
    \item Edge $(u, \omega) \in \mathcal{E}$ indicates that action $u$ requires obligation $\omega$,
    \item Edge $(\omega, a) \in \mathcal{E}$ indicates that operation $a$ is scheduled to produce a witness discharging $\omega$.
\end{itemize}
\end{definition}

Without formal safeguards, recursive acquisition risks two structural pathologies:
\begin{enumerate}
    \item \textbf{Cyclic Admission Deadlock:} Operation $a_1$ requires evidence from $a_2$, while $a_2$ requires evidence from $a_1$ (e.g., verifying replication requires snapshot lock, while snapshot lock requires replication sync).
    \item \textbf{Infinite Remediation Regress:} Each diagnostic probe requires a further diagnostic probe of equal or greater consequence, depleting budgets without reaching base evidence.
\end{enumerate}

\subsection{Structural Acyclicity and Bounded Expansion}
\label{sec:well-foundedness}

To prevent cyclic dependencies and infinite regress during plan synthesis, \aas enforces a \emph{Risk Stratification Discipline}.

\begin{definition}[Strict Risk Stratification of Finite Height]
\label{def:risk-stratification}
Let $\prec_{\mathcal{R}}$ be a strict partial order on the risk space $\mathcal{R}$ with \textbf{finite height} $H_{\mathcal{R}} = \operatorname{height}(\mathcal{R}) < \infty$ (i.e., every strictly descending chain has length at most $H_{\mathcal{R}}$). An admission policy is \emph{strictly risk-stratified} if, for every consequential action $u$, any diagnostic action $a$ invoked to discharge $\omega \in \Omega(u)$ satisfies:
\begin{equation}
\rho(a) \prec_{\mathcal{R}} \rho(u).
\label{eq:risk-stratification}
\end{equation}
Furthermore, non-consequential operations form a base stratum $\mathcal{R}_{\mathrm{base}} = \{a \in \mathcal{A} \mid \rho(a) \prec \tau_\Pi\}$ whose obligations are empty by policy definition: $\forall a \in \mathcal{R}_{\mathrm{base}}, \Omega(a) = \emptyset$.
\end{definition}

\begin{proposition}[Structural Acyclicity of the ADG]
\label{prop:adg-acyclic}
If an admission policy $\Pi$ is strictly risk-stratified, then the Assurance Dependency Graph $\mathcal{G}_{\mathrm{ADG}}$ is a directed acyclic graph (DAG).
\end{proposition}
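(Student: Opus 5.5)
The plan is to argue by contradiction using a potential (rank) function on action nodes derived from the risk order. Suppose $\mathcal{G}_{\mathrm{ADG}}$ contains a directed cycle. Because the graph is bipartite and edges alternate $u \to \omega \to a$, any cycle has even length and can be written as
\[
u_0 \to \omega_0 \to u_1 \to \omega_1 \to \cdots \to u_{n-1} \to \omega_{n-1} \to u_0,
\]
with $n \ge 1$, action nodes $u_i$, and obligations $\omega_i \in \Omega(u_i)$. The first step is to project this cycle onto its action nodes. An edge $(u_i,\omega_i)$ means $u_i$ requires $\omega_i$, and the edge $(\omega_i,u_{i+1})$ means $u_{i+1}$ is a diagnostic operation scheduled to discharge $\omega_i$. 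So each $u_{i+1}$ is invoked to discharge an obligation of $u_i$ (indices mod $n$).

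The second step shows every $u_i$ on the cycle is consequential. Each $u_i$ has an outgoing edge $(u_i,\omega_i)$, so $\Omega(u_i)\neq\emptyset$. By Definition~\ref{def:risk-stratification}, base-stratum operations have empty obligation sets. Hence no $u_i$ lies in $\mathcal{R}_{\mathrm{base}}$, and each $u_i$ is consequential. The root $q$ is consequential by construction whenever it has obligations.

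The third step applies strict stratification~\eqref{eq:risk-stratification} along each projected edge. This gives $\rho(u_{i+1}) \prec_{\mathcal{R}} \rho(u_i)$ for all $i$. Chaining these around the cycle and using transitivity of the strict partial order yields $\rho(u_0) \prec_{\mathcal{R}} \rho(u_0)$, which contradicts irreflexivity. Therefore no directed cycle exists and $\mathcal{G}_{\mathrm{ADG}}$ is a DAG. Finite height $H_{\mathcal{R}}$ is not needed for acyclicity itself; I would remark that it is what later bounds expansion depth.

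The main obstacle is making sure the stratification hypothesis actually applies to every edge of the cycle. The definition constrains only diagnostic actions invoked for a \emph{consequential} action's obligations. Step two closes this gap by using the empty-obligation convention for the base stratum. I would also state explicitly that $\omega_i$ ranges over obligations in $\bigcup_u \Omega(u)$. If the same obligation node is shared by several actions, each edge into $\omega_i$ still comes from some action requiring it, so the argument follows the particular predecessor on the cycle.
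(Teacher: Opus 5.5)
Your proposal is correct and follows the paper's own proof: a directed cycle $u_1\to\omega_1\to u_2\to\cdots\to\omega_m\to u_1$ gives $\rho(u_{i+1})\prec_{\mathcal{R}}\rho(u_i)$ at each step, and transitivity yields $\rho(u_1)\prec_{\mathcal{R}}\rho(u_1)$, contradicting irreflexivity. Your added step, showing via the empty-obligation convention for $\mathcal{R}_{\mathrm{base}}$ that every action on the cycle is consequential so that stratification applies to each edge, makes explicit a point the paper's proof leaves implicit.
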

\begin{proof}
Suppose for contradiction that $\mathcal{G}_{\mathrm{ADG}}$ contains a directed cycle: $u_1 \to \omega_1 \to u_2 \to \omega_2 \to \cdots \to u_m \to \omega_m \to u_1$. By Definition~\ref{def:risk-stratification}, every composite step $u_i \to \omega_i \to u_{i+1}$ implies $\rho(u_{i+1}) \prec_{\mathcal{R}} \rho(u_i)$. By transitivity of the strict partial order, $\rho(u_1) \prec_{\mathcal{R}} \rho(u_1)$, contradicting irreflexivity. Thus $\mathcal{G}_{\mathrm{ADG}}$ contains no directed cycles.
\end{proof}

\begin{proposition}[Bounded Recursive Expansion of Bipartite ADG]
\label{prop:bounded-expansion}
Assume admission policy $\Pi$ is strictly risk-stratified of finite height $H_{\mathcal{R}} = \operatorname{height}(\mathcal{R}) < \infty$. Let $K_\Omega = \max_{u \in \mathcal{A} \cup \{q\}} |\Omega(u)| < \infty$ be the maximum number of obligations per action, and let each obligation require at most $K_{\mathrm{inst}} < \infty$ concrete action instances dynamically instantiated from capability templates $\mathcal{T}_{\mathrm{cap}}$ ($K_{\mathrm{inst}} \le |\mathcal{T}_{\mathrm{cap}}| \cdot K_{\mathrm{inst}}^{\mathrm{per}}$). Then:
\begin{enumerate}
    \item The maximum action-recursion depth of $\mathcal{G}_{\mathrm{ADG}}$ is strictly bounded by the ordering height: $\operatorname{depth}_A(\mathcal{G}_{\mathrm{ADG}}) \le H_{\mathcal{R}} < \infty$.
    \item The effective action branching factor is finite: $B_{\mathrm{act}} \le K_\Omega \cdot K_{\mathrm{inst}} < \infty$.
    \item The total number of concrete action nodes $\mathcal{V}_A$ in the expanded ADG is strictly bounded by the sum:
    \begin{equation}
    M_A = |\mathcal{V}_A| \le \sum_{d=0}^{H_{\mathcal{R}}} B_{\mathrm{act}}^d =
    \begin{cases}
    1 & \text{if } B_{\mathrm{act}} = 0, \\
    H_{\mathcal{R}} + 1 & \text{if } B_{\mathrm{act}} = 1, \\
    \frac{B_{\mathrm{act}}^{H_{\mathcal{R}}+1} - 1}{B_{\mathrm{act}} - 1} & \text{if } B_{\mathrm{act}} > 1,
    \end{cases}
    \label{eq:max-action-nodes}
    \end{equation}
    where $M_A < \infty$ in all cases. The total number of obligation nodes satisfies $|\mathcal{V}_\Omega| \le K_\Omega \cdot M_A < \infty$.
\end{enumerate}
\end{proposition}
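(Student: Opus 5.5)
We treat the expanded ADG as a rooted tree-like unfolding from the primary action $q$ and bound it level by level. By Proposition~\ref{prop:adg-acyclic} the graph is a DAG, so every directed path is simple. For counting we work with its unfolding from $q$: each action node is indexed by the path of alternating action/obligation edges reaching it. This can only overcount shared nodes, so an upper bound on the unfolding bounds $|\mathcal{V}_A|$.

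For item~1, take any directed path $q=u_0\to\omega_0\to u_1\to\cdots\to\omega_{d-1}\to u_d$. An action $u_i$ has outgoing obligation edges only if $\Omega(u_i)\neq\emptyset$. By Definition~\ref{def:risk-stratification}, base-stratum operations have empty obligations, so every $u_i$ with $i<d$ is consequential. Stratification then gives $\rho(u_{i+1})\prec_{\mathcal{R}}\rho(u_i)$ at each composite step, so $\rho(u_0)\succ\rho(u_1)\succ\cdots\succ\rho(u_d)$ is a strictly descending chain of $d+1$ elements. Finite height bounds such a chain, which yields $d\le H_{\mathcal{R}}$.

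Here is the main obstacle. The paper says a descending chain has "length at most $H_{\mathcal{R}}$", and whether length counts elements or steps shifts the bound by one. I would fix the convention that height counts strict steps (edges of the chain), so that $d\le H_{\mathcal{R}}$ exactly, matching the sum $\sum_{d=0}^{H_{\mathcal{R}}}$ in \eqref{eq:max-action-nodes}. I would also note that $q$ itself needs $\rho(q)\in\mathcal{R}$, so the chain starting at $q$ is covered by the same argument.

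For item~2, an action $u$ has at most $K_\Omega$ obligation children. Each obligation has at most $K_{\mathrm{inst}}$ action children, since it requires at most $K_{\mathrm{inst}}$ concrete instances and $K_{\mathrm{inst}}\le|\mathcal{T}_{\mathrm{cap}}|\cdot K_{\mathrm{inst}}^{\mathrm{per}}$ is finite. Hence $u$ has at most $B_{\mathrm{act}}\le K_\Omega K_{\mathrm{inst}}$ action grandchildren.

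For item~3, we induct on $d$: there are at most $B_{\mathrm{act}}^d$ action nodes at action-depth $d$ in the unfolding, with the base case being the single root $q$. Summing over $d=0,\dots,H_{\mathcal{R}}$, which item~1 permits, gives the bound in \eqref{eq:max-action-nodes}. The three closed forms follow from the geometric series, with $B^0=1$ handling $B_{\mathrm{act}}=0$. Finally, each action node contributes at most $K_\Omega$ obligation nodes, so $|\mathcal{V}_\Omega|\le K_\Omega M_A$.
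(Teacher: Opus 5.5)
Your proof is correct and follows the paper's argument: strict descent of risk along action paths bounds the action depth by $H_{\mathcal{R}}$, obligations per action times instances per obligation bounds the branching factor, and summing the per-level counts gives the geometric bound, with $|\mathcal{V}_\Omega|\le K_\Omega M_A$ following directly. Your explicit tree unfolding (which handles nodes shared in the DAG) and your steps-versus-elements convention for the height make precise two points that the paper's proof leaves implicit; the paper's phrase ``at most $H_{\mathcal{R}}$ action transitions'' implicitly adopts the same convention you chose.
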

\begin{proof}
Every directed path of action nodes in bipartite graph $\mathcal{G}_{\mathrm{ADG}}$ corresponds to a strictly descending sequence of risk values $\rho(u_0) \succ_{\mathcal{R}} \rho(u_1) \succ_{\mathcal{R}} \cdots \succ_{\mathcal{R}} \rho(u_d)$ alternating through obligation nodes $\omega_i \in \Omega(u_{i-1})$. Because $(\mathcal{R}, \prec_{\mathcal{R}})$ has finite height $H_{\mathcal{R}}$, any such chain contains at most $H_{\mathcal{R}}$ action transitions. Each action node expands into at most $K_\Omega$ obligation nodes, each of which links to at most $K_{\mathrm{inst}}$ candidate diagnostic actions, establishing branching factor $B_{\mathrm{act}} \le K_\Omega \cdot K_{\mathrm{inst}}$. Summing action nodes across depths $0 \le d \le H_{\mathcal{R}}$ yields the geometric sum~\eqref{eq:max-action-nodes}. Leaves terminate in non-consequential base actions $\mathcal{R}_{\mathrm{base}}$ with $\Omega(a) = \emptyset$, terminating expansion in finite steps.
\end{proof}

\subsection{Operational Diagnostic Liveness}
\label{sec:diagnostic-liveness}

We emphasize an essential systems distinction: \textbf{structural acyclicity and bounded expansion do not automatically imply operational termination during physical execution}. A structurally acyclic plan may still hang at runtime due to unhandled controller delays, unbounded retries, deadlock on shared physical locks, or event-loop stalls.

To establish operational liveness, \aas couples risk stratification with six explicit systems execution bounds:

\begin{theorem}[Conditional Operational Diagnostic Liveness]
\label{thm:diagnostic-liveness}
Assume an admission policy is strictly risk-stratified of finite height $H_{\mathcal{R}} < \infty$ with bounded concrete instances $B_{\mathrm{act}} < \infty$, yielding an expanded ADG of at most $M_A$ actions. Diagnostic evidence acquisition is guaranteed to terminate deterministically in finite wall-clock elapsed time---either minting certificate $\mathcal{C}_q$ or safely refusing proposal $q$---provided the distributed runtime satisfies:
\begin{enumerate}
    \item \textbf{Bounded Operation Latency:} Every executed operation $a \in \mathcal{A}$ has an enforced wall-clock timeout $\hat{\ell}(a) \le \hat{\ell}_{\max} < \infty$.
    \item \textbf{Bounded Cancellation and Cleanup Delay:} Transmitting cancellation signals and terminating an in-flight worker thread requires at most $\delta_{\mathrm{cancel}} < \infty$ time, and controller state cleanup requires at most $\delta_{\mathrm{cleanup}} < \infty$ time.
    \item \textbf{Bounded Controller Computation:} Every invocation of the plan synthesizer, ILP solver, or replanning monitor terminates within controller computation timeout $T_{\mathrm{ctrl}} < \infty$ (falling back to greedy heuristic or refusal upon timeout).
    \item \textbf{Finite Retry and Generation Limits:} The controller permits at most $N_{\mathrm{retry}}^{\max} < \infty$ adaptive repair or retry attempts across the entire lifecycle of proposal $q$.
    \item \textbf{Deadlock-Free Resource and RPC Execution:} Mutating locks or physical hardware mutexes required by concurrent diagnostic operations are acquired in a canonical total order and released within bounded time $\delta_{\mathrm{rel}} < \infty$ (or executed in copy-on-write sandboxes). All remote RPCs and network verifiers enforce non-blocking socket timeouts $\hat{\ell}_{\max}$, preventing unbounded waiting on external queues.
    \item \textbf{Non-Blocking Event Loop \& Epoch-Relative Deadline Abort:} The controller event loop never blocks indefinitely awaiting external messages; if physical wall-clock time reaches proposal deadline $T_{\mathrm{dead}}$, all active workers are cancelled, state is reconciled, and the workflow deterministically aborts via safe refusal.
\end{enumerate}
Under these conditions, the total elapsed wall-clock time $T_{\mathrm{term}}$ from scheduling epoch $t_0$ until complete system quiescence (all worker threads terminated and resources released) is strictly bounded by:
\begin{equation}
\begin{aligned}
T_{\mathrm{term}} \le \min \Big( &(T_{\mathrm{dead}} - t_0) + \delta_{\mathrm{cancel}} + \delta_{\mathrm{cleanup}}, \\
&(N_{\mathrm{retry}}^{\max} + 1) \cdot T_{\mathrm{ctrl}} \\
&\quad + (N_{\mathrm{retry}}^{\max} + 1) \cdot M_A \cdot (\hat{\ell}_{\max} + \delta_{\mathrm{rel}}) \\
&\quad + N_{\mathrm{retry}}^{\max} \cdot (\delta_{\mathrm{cancel}} + \delta_{\mathrm{cleanup}}) \Big) < \infty.
\end{aligned}
\label{eq:liveness-bound}
\end{equation}
\end{theorem}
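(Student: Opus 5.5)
The bound~\eqref{eq:liveness-bound} is a minimum of two terms, so I would prove that each term separately bounds $T_{\mathrm{term}}$ and then take the smaller. Both arguments rest on Proposition~\ref{prop:bounded-expansion}: under strict risk stratification of finite height, any single generation of plan synthesis instantiates at most $M_A$ concrete action nodes. Proposition~\ref{prop:adg-acyclic} rules out cyclic waiting among diagnostic sub-admissions. Together these say every generation of the controller involves only a finite, acyclic set of operations.

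\textbf{Deadline branch.} Assumption~6 says the controller never blocks indefinitely and checks the wall clock. Hence, if quiescence has not occurred by $T_{\mathrm{dead}}$, the abort path fires at wall-clock time $T_{\mathrm{dead}}$. I would argue that this abort terminates in bounded time:
\begin{itemize}
\item Cancellation reaches every in-flight worker and terminates it within $\delta_{\mathrm{cancel}}$ (Assumption~2).
\item Controller state reconciliation completes within $\delta_{\mathrm{cleanup}}$ (Assumption~2).
\item Locks are released in bounded time and acquired in canonical order (Assumption~5), so no worker waits circularly on a resource.
\end{itemize}
This yields $T_{\mathrm{term}} \le (T_{\mathrm{dead}}-t_0)+\delta_{\mathrm{cancel}}+\delta_{\mathrm{cleanup}}$. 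If the workflow mints or refuses earlier, the bound holds trivially. To keep the bound tight, I would charge the lock release $\delta_{\mathrm{rel}}$ inside the cleanup phase rather than adding it separately.

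\textbf{Generation-counting branch.} Assumption~4 limits the run to at most $N_{\mathrm{retry}}^{\max}+1$ generations (the initial plan plus repairs). Each generation costs at most the following:
\begin{itemize}
\item one controller invocation, bounded by $T_{\mathrm{ctrl}}$ (Assumption~3, including timeout fallback);
\item execution of at most $M_A$ operations, each finishing or being timed out within $\hat{\ell}_{\max}$ (Assumptions~1 and~5 for RPCs), plus at most $\delta_{\mathrm{rel}}$ to release any lock it holds.
\end{itemize}
I would bound operation execution serially. Any actual schedule finishes no later than a fully serialized execution of the same operations, since acyclicity guarantees a topological order exists and concurrency only shortens wall-clock time. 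Each of the $N_{\mathrm{retry}}^{\max}$ transitions between generations additionally incurs one cancellation and one cleanup, costing $\delta_{\mathrm{cancel}}+\delta_{\mathrm{cleanup}}$. Summing these costs gives the second term. Termination within it is either a mint or a safe refusal: the last permitted generation either succeeds, or its failure or the exhausted retry budget triggers refusal.

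\textbf{Main obstacle.} The delicate step is the per-generation accounting of recursive sub-admission. A consequential diagnostic's sub-plan can itself invoke the synthesizer, which could appear to multiply the $T_{\mathrm{ctrl}}$ charges beyond one per generation. I would first try to interpret a generation as a single synthesis over the flattened expanded ADG, whose size Proposition~\ref{prop:bounded-expansion} bounds by $M_A$, so that recursive sub-plans are part of that one invocation. If that reading is not available, I would count nested synthesis calls as retries against the global budget $N_{\mathrm{retry}}^{\max}$. Either way, the other worry, that a waiting chain might never resolve, is excluded by acyclicity: all waits are on strictly lower risk strata, and each wait is timeout-bounded.
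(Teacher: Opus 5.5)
Your proposal follows the paper's proof essentially step for step. The paper also takes the minimum of two branches. The first is a deadline-abort branch: a hard stop at $T_{\mathrm{dead}}$ by condition~6, followed by $\delta_{\mathrm{cancel}}+\delta_{\mathrm{cleanup}}$ by condition~2. The second is a counting branch built from Proposition~\ref{prop:bounded-expansion} ($M_A$ actions), conditions~1 and~5 ($\hat{\ell}_{\max}+\delta_{\mathrm{rel}}$ per action), condition~3 ($T_{\mathrm{ctrl}}$ per planning phase), and condition~4 ($N_{\mathrm{retry}}^{\max}+1$ phases and $N_{\mathrm{retry}}^{\max}$ cancellation/cleanup cycles).

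The one step you justify beyond the paper is justified incorrectly. You claim that any actual schedule finishes no later than a fully serialized one because concurrency only shortens wall-clock time. That holds only for work-conserving execution, and this controller is time-triggered: Algorithm~\ref{alg:complete-aas} launches an operation only once $\sigma_\pi(a)\le t_{\mathrm{now}}$, and backward JIT placement deliberately leaves slack. A returned schedule can therefore contain fully idle intervals. For example, when Tier~2 first succeeds at a grid target later than the earliest feasible dispatch, a short-lived witness is released at $\underline{s}(u)$ after all long-lived work has finished. A generation's wall-clock span can then exceed $\sum_u \hat{\ell}(u)$.

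The paper's proof does not close this step either: its bound simply charges $M_A(\hat{\ell}_{\max}+\delta_{\mathrm{rel}})$ per generation without addressing idle time. So you are not missing an idea the paper supplies. To make the step sound, you need an explicit bound on each generation's schedule span. Two facts give one:
\begin{itemize}
\item Tier~1, whenever it succeeds, does so at $t_{\min}$, where the span is at most the guarded critical path $t_{\mathrm{CP}}$.
\item In continuous time, shifting every operation that finishes before a fully idle interval to the right by that interval's length removes the gap. This preserves precedence and capacity and only improves freshness.
\end{itemize}

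Your concern about nested synthesis inside $\operatorname{RecursiveCAC}$ is legitimate, and the paper's proof does not address it; it charges exactly one $T_{\mathrm{ctrl}}$ per planning phase. Both of your proposed resolutions are additional assumptions, not consequences of hypotheses~3 and~4. These are reading a generation as a single synthesis over the flattened ADG, or counting nested calls against $N_{\mathrm{retry}}^{\max}$. In the runtime loop the recursive admissions are separate synthesizer invocations made at launch time. Read literally, condition~3 charges each of them $T_{\mathrm{ctrl}}$, up to $M_A$ per generation, and the stated bound does not include that term.
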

\begin{proof}
By Proposition~\ref{prop:bounded-expansion}, the number of actions in any synthesized plan DAG is bounded by $M_A < \infty$. Under condition (5), diagnostic operations cannot deadlock on shared physical mutexes or block indefinitely on remote RPCs. Under condition (1), every launched operation either finishes or times out in at most $\hat{\ell}_{\max}$. If an operation fails, times out, or refutes a predicate, condition (2) guarantees that cancellation halts all in-flight workers within $\delta_{\mathrm{cancel}}$ and controller cleanup finishes within $\delta_{\mathrm{cleanup}}$. Under condition (3), each scheduling pass completes in at most $T_{\mathrm{ctrl}}$ wall-clock time. Because condition (4) limits the number of repair attempts to $N_{\mathrm{retry}}^{\max}$, the controller can execute at most $N_{\mathrm{retry}}^{\max} + 1$ planning phases (1 initial synthesis plus at most $N_{\mathrm{retry}}^{\max}$ repairs) and at most $N_{\mathrm{retry}}^{\max}$ cancellation/cleanup cycles before aborting. Finally, condition (6) enforces an absolute hard stop when wall-clock time reaches $T_{\mathrm{dead}}$, with in-flight worker cancellation and cleanup completing within $\delta_{\mathrm{cancel}} + \delta_{\mathrm{cleanup}}$ after the deadline. Hence, the system cannot livelock, deadlock, or stall indefinitely, guaranteeing deterministic termination within bound~\eqref{eq:liveness-bound}.
\end{proof}

\subsection{Bounded Diagnostic Capabilities}
\label{sec:diagnostic-caps}

In high-throughput environments, deep recursive admission introduces unacceptable latency overhead. \aas supports \emph{Bounded Diagnostic Capabilities}: pre-authorized, ephemeral execution tokens with strictly constrained blast radiuses (e.g., read-only filesystem snapshots, dedicated diagnostic tenant namespaces, or sandboxed eBPF probes).

When an operation executes using a certified diagnostic capability, its effective operational risk is pre-attenuated below the policy threshold: $\rho_{\mathrm{eff}}(a) \prec \tau_\Pi$. This places $a$ immediately in the base stratum $\mathcal{R}_{\mathrm{base}}$, collapsing the ADG depth to 1 and eliminating recursive overhead.

%% file: sections/07-adaptive-replanning.tex
\section{Adaptive Replanning Under Partial Information}
\label{sec:adaptive-replanning}

Static workflow planners assume that the environment remains stationary while an evidence plan executes. In reality, evidence acquisition is an active discovery process: \textbf{acquiring evidence updates the controller's knowledge of the world}, which can fundamentally alter both the perceived risk of the action and the obligations required to admit it.

\begin{figure}[t]
\centering
\begin{tikzpicture}[
  box/.style={draw=navy,rounded corners=2pt,fill=softgray,align=center,font=\scriptsize,inner sep=3pt,text width=5.6cm},
  action/.style={draw=cobalt,rounded corners=2pt,fill=softgray!80,font=\scriptsize,align=center,inner sep=3pt,text width=5.6cm},
  arr/.style={-{Latex[length=2mm]},thick,darkslate},
  replan/.style={-{Latex[length=2mm]},thick,crimson,dashed}]

  \node[box] (cac0) {\textbf{Initial CAC Evaluation}\\\scriptsize $\rho_0 = \rho(q, s_0) \implies \Omega_0 = F_\Pi(\rho_0, q, s_0)$};
  \node[box,below=0.45cm of cac0] (sched) {\textbf{\aas Dynamic Scheduler}\\\scriptsize Synthesize \& dispatch plan $\pi_0$};
  \node[action,below=0.45cm of sched] (exec) {\textbf{Execute Assurance Op $a_1$}\\\scriptsize Returns receipt $e_1$};
  \node[box,below=0.45cm of exec] (state) {\textbf{State Update \& Risk Shift}\\\scriptsize $s_1 = \operatorname{Update}(s_0, e_1) \implies \rho_1 = \rho(q, s_1)$};
  \node[box,below=0.45cm of state] (recalc) {\textbf{CAC Obligation Escalation}\\\scriptsize $\Omega_1 = F_\Pi(\rho_1, q, s_1)$ with $\Delta \Omega^+ = \Omega_1 \setminus \Omega_0$};

  \draw[arr] (cac0) -- (sched);
  \draw[arr] (sched) -- (exec);
  \draw[arr] (exec) -- (state);
  \draw[arr] (state) -- (recalc);
  \draw[replan] (recalc.west) -- ++(-0.35,0) |- node[near start,left,font=\tiny,text=crimson,align=right] {Dynamic Replan:\\Splice $\Delta \pi$} (sched.west);
\end{tikzpicture}
\caption{The closed-loop adaptive assurance cycle. Incoming evidence updates system state $s$, triggering risk escalation $\rho_0 \to \rho_1$ and online schedule replanning.}
\label{fig:adaptive-loop}
\end{figure}

\subsection{Risk Dynamics and Obligation Escalation}
\label{sec:risk-dynamics}

Consider the execution timeline illustrated in Figure~\ref{fig:adaptive-loop}:
\begin{enumerate}
    \item At $t_0$, the controller observes state snapshot $s_0$ and evaluates baseline risk $\rho_0 = \rho(q, s_0)$. The policy emits obligations $\Omega_0 = F_\Pi(\rho_0, q, s_0)$.
    \item The scheduler constructs plan $\pi_0$ and dispatches an initial operation $a_1$ (e.g., query replica replication stream).
    \item Operation $a_1$ completes, returning evidence receipt $e_1$. The payload of $e_1$ reveals that the secondary replica has accumulated 45\,GB of unapplied Write-Ahead Logs (WAL) and disk I/O throughput is severely degraded.
    \item The world state updates to $s_1 = \operatorname{Update}(s_0, e_1)$. Under this degraded condition, observational uncertainty and adverse consequence plausibility increase, shifting the risk profile:
    \begin{equation}
    \rho_1 = \rho(q, s_1) \quad \text{with} \quad \rho_1 \not\preceq \rho_0.
    \label{eq:risk-escalation}
    \end{equation}
    \item The admission engine re-evaluates obligations under updated risk $\rho_1$:
    \begin{equation}
    \Omega_1 = F_\Pi(\rho_1, q, s_1).
    \label{eq:obligation-escalation}
    \end{equation}
    Because risk increased, $\Omega_1$ contains new mandatory obligations:
    \begin{equation}
    \Delta \Omega^+ = \Omega_1 \setminus \Omega_0 \neq \emptyset,
    \end{equation}
    such as requiring an attestation of secondary memory headroom or escalating to human dual-control.
\end{enumerate}

If the scheduler executed $\pi_0$ blind to environment updates, it would arrive at the gateway with an evidence manifest satisfying $\Omega_0$ but lacking $\Delta \Omega^+$, resulting in an immediate rejection and wasted work.

\subsection{Authoritative Runtime State Machine \texorpdfstring{$\Sigma$}{Sigma}}
\label{sec:runtime-state-machine}

To ensure sound execution across asynchronous events, \aas defines a single authoritative runtime state $\Sigma$:
\begin{equation}
\begin{aligned}
\Sigma = \big\langle &q, s, \rho, \Omega, g, \pi, B_{\mathrm{rem}}, B_{\mathrm{commit}}, B_{\mathrm{resv}}, \\
&\mathcal{J}_{\mathrm{run}}, V_{\mathrm{done}}, E_{\mathrm{store}}, \mathcal{W}_q, t_{\mathrm{dispatch}}, T_{\mathrm{dead}} \big\rangle,
\end{aligned}
\label{eq:runtime-state-tuple}
\end{equation}
where:
\begin{itemize}
    \item $q$ is the active action proposal,
    \item $s$ is the modeled environment state (with version vector $\nu_s$),
    \item $\rho = \rho(q, s)$ is current evaluated operational risk,
    \item $\Omega = F_\Pi(\rho, q, s)$ is the active mandatory obligation set,
    \item $g \in \mathbb{N}$ is the monotonically increasing plan generation tag,
    \item $\pi = \langle V_\pi, E_\pi, \sigma_\pi, \mathcal{M}_\pi \rangle$ is the active plan DAG stamped with generation $g$,
    \item $B_{\mathrm{rem}}, B_{\mathrm{commit}}, B_{\mathrm{resv}}$ partition total resource budgets,
    \item $\mathcal{J}_{\mathrm{run}}$ is the set of currently executing worker jobs $\langle a, g_a, t_{\mathrm{launch}}, \hat{\ell}(a) \rangle$,
    \item $V_{\mathrm{done}}$ is the set of terminated operations,
    \item $E_{\mathrm{store}}$ is the pool of verified evidence receipts,
    \item $\mathcal{W}_q$ is the candidate witness manifest,
    \item $t_{\mathrm{dispatch}}$ is the prospective gateway dispatch target timestamp,
    \item $T_{\mathrm{dead}}$ is the proposal hard wall-clock deadline.
\end{itemize}

\subsection{Ten-State Operation Lifecycle Across Plan Generations}
\label{sec:operation-lifecycle}

Every scheduled operation $a \in \mathcal{A}$ transitions through an explicit 10-state lifecycle:
\begin{equation}
\begin{aligned}
\texttt{Reserved} &\longrightarrow \texttt{Ready} \longrightarrow \texttt{Running} \\
&\longrightarrow \{\texttt{Completed}, \texttt{Failed}, \texttt{TimedOut}\} \\
&\longrightarrow \{\texttt{CancelReq} \to \texttt{CancelConf}\} \\
&\longrightarrow \{\texttt{Superseded}, \texttt{Reconciled}\}
\end{aligned}
\label{eq:op-lifecycle}
\end{equation}
\begin{enumerate}
    \item \texttt{Reserved}: Budget is set aside ($B_{\mathrm{rem}} \to B_{\mathrm{resv}}$); operational prerequisites are executing.
    \item \texttt{Ready}: Prerequisites in $D_{\mathrm{op}}$ have terminated; awaiting scheduled start time $\sigma_\pi(a)$.
    \item \texttt{Running}: Launched into an asynchronous worker thread; tagged with current generation $g$; financial cost committed ($B_{\mathrm{resv}} \to B_{\mathrm{commit}}$) exactly once.
    \item \texttt{Completed}: Finished execution within timeout $\hat{\ell}(a)$; emitted receipts ingested into $E_{\mathrm{store}}$.
    \item \texttt{Failed}: Threw an unhandled software exception or network connection failure.
    \item \texttt{TimedOut}: Wall-clock elapsed time exceeded $\hat{\ell}(a)$ without completion; worker is sent an abort signal.
    \item \texttt{CancelRequested}: Cancellation signal dispatched due to plan replacement or affirmative refutation.
    \item \texttt{CancelConfirmed}: Remote worker confirmed thread termination and resource cleanup.
    \item \texttt{Superseded}: The active plan generation advanced ($g' > g$) before or during execution; the operation's witness role in $\mathcal{W}_q$ is revoked, but its physical state side-effects are preserved.
    \item \texttt{Reconciled}: Terminal accounting complete; any genuine unspent reservations returned to $B_{\mathrm{rem}}$.
\end{enumerate}

\paragraph{Handling Seven Asynchronous Runtime Scenarios.}
The controller state machine resolves race conditions via strict semantic rules:
\begin{enumerate}
    \item \textbf{Diagnostic Completing Post-Supersession:} If an operation launched in generation $g$ completes when the active generation is $g' > g$, its receipts are \textbf{not} automatically assigned to the new manifest $\mathcal{W}_q$. However, any physical telemetry or state mutation in its payload is incorporated into environment state $s$.
    \item \textbf{Multi-Receipt Operations:} An operation producing multiple receipts emits them all into $E_{\mathrm{store}}$, but financial cost $c(a)$ is charged strictly once upon launch into $B_{\mathrm{commit}}$.
    \item \textbf{Cancellation Racing with Completion:} If completion arrives before cancellation confirmation, the operation is marked \texttt{Completed} and receipts are ingested. If cancellation confirms first, the operation enters \texttt{CancelConfirmed}, receipts are discarded, and unspent reservations are refunded.
    \item \textbf{Timed-Out Operations Continuing Remotely:} An operation marked \texttt{TimedOut} cannot resurrect its witness role if late output arrives; its results are treated as discardable.
    \item \textbf{Non-Rollback Mutations:} If a consequential diagnostic mutates external state and subsequent admission fails, the state change is retained in $s$, and compensation actions are scheduled via \tct.
    \item \textbf{Earlier-Generation Evidence:} Receipts acquired in earlier generations remain in $E_{\mathrm{store}}$ and are eligible for reuse if they satisfy version match $G(e) = s|_{G(e)}$ and freshness against the repaired dispatch target.
    \item \textbf{New Obligations at Minting Boundary:} Handled via atomic Compare-And-Swap (CAS) on state version $\nu_s$; if state changed while minting, the certificate is discarded and replanning is re-entered.
\end{enumerate}

\subsection{Atomic Incremental Plan Repair Transaction}
\label{sec:incremental-repair}

Algorithm~\ref{alg:adaptive-repair} formalizes how state $\Sigma$ transitions upon an asynchronous event without corrupting plan generation invariants or masking real-world state mutations. To maintain physical and algorithmic integrity, \aas strictly distinguishes:
\begin{enumerate}
    \item \textbf{Irreversible Physical Effects:} Actual mutations in external systems, hardware state, or non-refundable financial/token expenditure. These effects are authoritatively reflected in $\Sigma.s$ and $\Sigma.B_{\mathrm{commit}}$ and are \textbf{never rolled back} merely because replacement plan synthesis fails.
    \item \textbf{Authoritative Environment State ($\Sigma.s$):} The true, known state of the physical world.
    \item \textbf{Tentative Repair State:} Shadow evaluations of risk $\rho_{\mathrm{cand}}$ and active obligations $\Omega_{\mathrm{live}}$.
    \item \textbf{Candidate Schedule ($\Delta \pi, t'_{\mathrm{dispatch}}$):} The prospective replacement sub-plan.
    \item \textbf{Candidate Reservations ($B_{\mathrm{resv}}^{\mathrm{cand}}, B_{\mathrm{rem}}^{\mathrm{cand}}$):} Tentative resource ledger allocations.
\end{enumerate}
Plan repair executes as a structured \textbf{PREPARE--VALIDATE--COMMIT} transaction:

\begin{algorithm}[t]
\caption{Atomic Adaptive Plan Repair Transaction}
\label{alg:adaptive-repair}
\begin{algorithmic}[1]
\small
\Require Runtime state $\Sigma$, triggering event $\mathcal{E}$, wall-clock time $t_{\mathrm{now}}$
\Ensure Updated state $\Sigma$ or $\textsc{Abort}$
\State \Comment{\textbf{Phase I: Authoritative Physical State Reconciliation}}
\If{$\mathcal{E}$ carries irreversible physical mutations}
  \State $\Sigma.s \gets \operatorname{ApplyPhysicalEffects}(\Sigma.s, \mathcal{E})$ \Comment{Authoritative physical state updated}
\EndIf
\If{$\mathcal{E}$ is Affirmative Refutation}
  \State $\operatorname{CancelAll}(\Sigma.\mathcal{J}_{\mathrm{run}})$; refund unspent $B_{\mathrm{resv}} \to B_{\mathrm{rem}}$
  \State \Return $\textsc{Abort}(\text{Predicate Refuted})$
\EndIf
\State \Comment{\textbf{Phase II: PREPARE (Shadow State \& Tentative Ledgers)}}
\State $\rho_{\mathrm{cand}} \gets \rho(\Sigma.q, \Sigma.s)$; $\Omega_{\mathrm{live}} \gets F_\Pi(\rho_{\mathrm{cand}}, \Sigma.q, \Sigma.s)$
\State Let $\Delta \Omega^+ \gets \Omega_{\mathrm{live}} \setminus \Sigma.\Omega$ and $\Delta \Omega^- \gets \Sigma.\Omega \setminus \Omega_{\mathrm{live}}$
\State $V_{\mathrm{cand}} \gets \Sigma.V_\pi$; $B_{\mathrm{rem}}^{\mathrm{cand}} \gets \Sigma.B_{\mathrm{rem}}$; $B_{\mathrm{resv}}^{\mathrm{cand}} \gets \Sigma.B_{\mathrm{resv}}$
\For{each unlaunched $u \in \Sigma.V_\pi$ witnessing only $\Delta \Omega^-$}
  \State $V_{\mathrm{cand}} \gets V_{\mathrm{cand}} \setminus \{u\}$; $B_{\mathrm{rem}}^{\mathrm{cand}} \gets B_{\mathrm{rem}}^{\mathrm{cand}} + c(u)$; $B_{\mathrm{resv}}^{\mathrm{cand}} \gets B_{\mathrm{resv}}^{\mathrm{cand}} - c(u)$
\EndFor
\State $E_{\mathrm{reusable}} \gets \{e \in \Sigma.E_{\mathrm{store}} \mid G(e) = \Sigma.s|_{G(e)} \land \neg e.\text{refuted}\}$
\State Determine $\Omega_{\mathrm{unresolved}} \subseteq \Omega_{\mathrm{live}}$ lacking quorums/cuts under $E_{\mathrm{reusable}}$
\If{$\Omega_{\mathrm{unresolved}} = \emptyset$}
  \State $t_{\mathrm{disp}}' \gets \max(t_{\mathrm{now}}, \; \Sigma.t_{\mathrm{dispatch}})$
  \If{$t_{\mathrm{disp}}' + \delta_{\mathrm{exec}} \le \min_{e \in E_{\mathrm{reusable}}} (t_{\mathrm{obs}}(e) + \Delta t_{\mathrm{eff}}(e) - \epsilon_{\mathrm{skew}}) \land t_{\mathrm{disp}}' + \delta_{\mathrm{exec}} \le \Sigma.T_{\mathrm{dead}}$}
    \State $\Sigma.\rho \gets \rho_{\mathrm{cand}}$; $\Sigma.\Omega \gets \Omega_{\mathrm{live}}$; $\Sigma.V_\pi \gets V_{\mathrm{cand}}$
    \State $\Sigma.B_{\mathrm{rem}} \gets B_{\mathrm{rem}}^{\mathrm{cand}}$; $\Sigma.B_{\mathrm{resv}} \gets B_{\mathrm{resv}}^{\mathrm{cand}}$; $\Sigma.t_{\mathrm{dispatch}} \gets t_{\mathrm{disp}}'$; $\Sigma.g \gets \Sigma.g + 1$; \Return $\Sigma$
  \Else
    \State Mark expired obligations in $\Omega_{\mathrm{live}}$ as unresolved $\to \Omega_{\mathrm{unresolved}}$
  \EndIf
\EndIf
\State \Comment{\textbf{Phase III: SYNTHESIZE \& VALIDATE (Transaction Guard)}}
\State $(\Delta \pi, t_{\mathrm{disp}}') \gets \textsc{SynthPlan}(\Omega_{\mathrm{unresolved}}, B_{\mathrm{cand}}, \Sigma.T_{\mathrm{dead}})$
\If{$\Delta \pi = \textsc{Infeasible} \lor t_{\mathrm{disp}}' + \delta_{\mathrm{exec}} > \Sigma.T_{\mathrm{dead}} \lor \operatorname{Cost}(\Delta \pi) > B_{\mathrm{cand}}$}
  \State \Comment{\textbf{ABORT:} Retain authoritative physical state $\Sigma.s$; cancel unlaunched candidate}
  \State $\operatorname{CancelAll}(\Sigma.\mathcal{J}_{\mathrm{run}})$; $\Sigma.B_{\mathrm{rem}} \gets B_{\mathrm{rem}}^{\mathrm{cand}} + B_{\mathrm{resv}}^{\mathrm{cand}}$; $\Sigma.B_{\mathrm{resv}} \gets \mathbf{0}$
  \State \Return $\textsc{Abort}(\text{Infeasible within Budget/Deadline})$
\EndIf
\State \Comment{\textbf{Phase IV: COMMIT (Publish Schedule \& Advance Generation)}}
\State $\Sigma.g \gets \Sigma.g + 1$ \Comment{Advance plan generation monotonically}
\State $\Sigma.\rho \gets \rho_{\mathrm{cand}}$; $\Sigma.\Omega \gets \Omega_{\mathrm{live}}$
\State $\Sigma.B_{\mathrm{resv}} \gets B_{\mathrm{resv}}^{\mathrm{cand}} + \operatorname{Cost}(\Delta \pi)$; $\Sigma.B_{\mathrm{rem}} \gets B_{\mathrm{rem}}^{\mathrm{cand}} - \operatorname{Cost}(\Delta \pi)$
\State $\Sigma.V_\pi \gets V_{\mathrm{cand}} \cup \Delta \pi.V_\pi$; $\Sigma.\pi \gets \operatorname{SpliceDAG}(\Sigma.\pi, \Delta \pi, t_{\mathrm{now}})$
\State $\Sigma.t_{\mathrm{dispatch}} \gets t_{\mathrm{dispatch}}'$; \Return $\Sigma$
\end{algorithmic}
\end{algorithm}

\paragraph{Semantic Receipt Reuse and Atomicity Guarantees.}
\aas permits reusing previously acquired receipts in $E_{\mathrm{store}}$ across plan repair if and only if three conditions hold simultaneously:
\begin{enumerate}
    \item \textbf{Repaired Dispatch Freshness:} The receipt's expiration bound extends beyond the prospective \emph{repaired} dispatch and execution window ($t_{\mathrm{dispatch}}' + \delta_{\mathrm{exec}} \le t_{\mathrm{obs}}(e) + \Delta t_{\mathrm{eff}}(e) - \epsilon_{\mathrm{skew}}$), rather than merely the current timestamp $t_{\mathrm{now}}$.
    \item \textbf{State Invariance:} The underlying state versions guarded by the receipt ($G(e)$) match live state: $G(e) = \Sigma.s|_{G(e)}$.
    \item \textbf{Epistemic Independence:} The receipt's exposure set $\mathcal{X}(e)$ satisfies the EFD diversity cuts required by current live obligations $\Sigma.\Omega$.
\end{enumerate}
Algorithm~\ref{alg:adaptive-repair} specifies a validation guard before generation advancement. The current prototype cancels running jobs and reconciles reservations before solving the replacement problem; a failed repair can therefore leave the prior execution canceled. It enforces budget conservation but does not implement full transactional rollback of external jobs.

The prototype removes receipts with invalid versions or insufficient remaining lifetime, represents potentially reusable receipts as zero-cost observation proxies, and checks their original observation timestamps against the proposed dispatch. The joint solver checks diversity with any new witnesses. In the matched 20-trial selected-operation timeout study, receipt-aware repair and full resynthesis each admit 18 cases, while the static control admits none; reuse lowers mean committed cost (Section~\ref{sec:eval-repair}).

%% file: sections/08-reference-scheduler.tex
\section{Reference Scheduler Architecture and Algorithms}
\label{sec:reference-scheduler}

The \aas reference scheduler separates plan synthesis, evidence execution, admission, and repair into four components.

\subsection{Component Architecture}
\label{sec:scheduler-arch}

\begin{figure}[t]
\centering
\begin{tikzpicture}[
  comp/.style={draw=navy,rounded corners=2pt,fill=softgray,align=center,font=\scriptsize,inner sep=3.5pt,minimum width=3.3cm},
  store/.style={draw=darkslate,cylinder,shape border rotate=90,aspect=0.25,fill=softgray!60,font=\scriptsize,align=center,inner sep=2pt},
  arr/.style={-{Latex[length=2mm]},thick,darkslate},
  darr/.style={-{Latex[length=2mm]},thick,dashed,cobalt}]

  \node[comp] (compiler) {\textbf{Obligation Compiler}\\\tiny Parse $\Omega(q,s)$ \& EFD requirements};
  \node[comp,right=0.8cm of compiler] (synthesizer) {\textbf{Plan Synthesizer}\\\tiny Joint EFD \& Temporal Solver};
  \node[store,below=0.6cm of compiler] (catalog) {\textbf{Operation Catalogue}\\\tiny Signatures, latencies, $\mathcal{X}(a)$};
  \node[comp,below=0.6cm of synthesizer] (dispatcher) {\textbf{Execution Dispatcher}\\\tiny Async worker pool \& timeouts};
  \node[comp,below=0.6cm of dispatcher] (replanner) {\textbf{Replanning Monitor}\\\tiny Risk tracking \& dynamic repair};
  \node[comp,below=0.6cm of catalog] (mintif) {\textbf{Admission Mint Interface}\\\tiny Freshness check \& manifest $\mathcal{W}_q$};

  \draw[arr] (compiler) -- node[above,font=\tiny] {Parsed $\Omega$} (synthesizer);
  \draw[arr] (catalog) -- (synthesizer);
  \draw[arr] (synthesizer) -- node[right,font=\tiny] {Plan $\pi^*$} (dispatcher);
  \draw[arr] (dispatcher) -- node[right,font=\tiny] {Receipts} (replanner);
  \draw[darr] (replanner.east) -- ++(0.25,0) |- (synthesizer.east);
  \draw[arr] (replanner) -- node[above,font=\tiny] {Valid $E$} (mintif);
\end{tikzpicture}
\caption{The \aas reference scheduler. The synthesizer selects and times evidence operations, the dispatcher handles execution and consequential sub-admission, and the monitor triggers repair after modeled state changes. Optimality requires the exact-search conditions of Theorem~\ref{thm:benders-optimality}.}
\label{fig:scheduler-arch}
\end{figure}

The \aas architecture comprises six decoupled subsystems (Figure~\ref{fig:scheduler-arch}):
\begin{enumerate}
    \item \textbf{Obligation Compiler:} Ingests $\Omega(q,s)$ from the \cac policy resolver, extracts predicate requirements, epistemic diversity thresholds $\kappa_E^{\min}(\omega)$, and temporal lifespans $\Delta t_\omega$.
    \item \textbf{Operation Catalogue \& Exposure Index:} Indexes registered operations $\mathcal{A}$ by covered claims, conservative latencies $\hat{\ell}(a)$, costs $c(a)$, token consumption $\operatorname{tok}(a)$, and exposure sets $\mathcal{X}(a) \subseteq \mathbb{F}$.
    \item \textbf{Plan Synthesizer:} Implements the joint optimization engine. It selects an admissible subset of operations $V_\pi \subseteq \mathcal{A}$ via ILP (Section~\ref{sec:joint-cost-efd}) or progress-based greedy fallback (Section~\ref{sec:greedy-efd}) and synthesizes execution start times $\sigma_\pi$ via Backward JIT scheduling (Algorithm~\ref{alg:backward-scheduler}).
    \item \textbf{Execution Dispatcher:} Manages an asynchronous worker pool enforcing strict per-operation timeouts $\hat{\ell}(a)$. Crucially, any consequential diagnostic operation is intercepted and subjected to recursive \cac admission before dispatch.
    \item \textbf{Replanning Monitor:} Listens to incoming receipts, detects whether new observations alter world state $s$, recalculates risk $\rho(q,s)$, and triggers incremental plan repair (Algorithm~\ref{alg:adaptive-repair}) upon risk escalation, timeout, or negative evidence.
    \item \textbf{Admission Mint Interface:} Verifies composite witness manifests $\mathcal{W}_q = \{(\omega, W_\omega)\}$ and forwards them to the authoritative \cac engine for formal discharge evaluation and certificate minting.
\end{enumerate}

\subsection{Complete Scheduling Loop}
\label{sec:complete-loop}

Algorithm~\ref{alg:complete-aas} formalizes the complete end-to-end scheduling and execution loop operating on authoritative state $\Sigma$.

\begin{algorithm}[!t]
\caption{Complete \aas Scheduling and Execution Loop}
\label{alg:complete-aas}
\begin{algorithmic}[1]
\scriptsize
\linespread{0.82}\selectfont
\Require Proposal $q$, initial state $s_0$, context $\chi$, deadline $T_{\mathrm{dead}}$, budget $B$, constraints $\mathcal{D}, \mathcal{T}$
\Ensure Admission certificate $\mathcal{C}_q$ or $\textsc{Refuse}$
\State Initialize $\Sigma.s \gets s_0$; $\Sigma.\rho \gets \rho(q, s_0)$; $\Sigma.\Omega \gets F_{\Pi}(\Sigma.\rho, q, s_0)$; $\Sigma.g \gets 0$
\State $\Sigma.B_{\mathrm{commit}} \gets \mathbf{0}$; $\Sigma.B_{\mathrm{resv}} \gets \mathbf{0}$; $\Sigma.B_{\mathrm{rem}} \gets B$; $\Sigma.V_{\mathrm{done}}, \Sigma.E_{\mathrm{store}}, \Sigma.E_{\mathrm{seen}}, \Sigma.\mathcal{J}_{\mathrm{run}} \gets \emptyset$
\State $\Sigma.\pi \gets \operatorname{SynthesizePlan}(\Sigma.\Omega, \mathcal{A}, \Sigma.B_{\mathrm{rem}}, \mathcal{D}, \mathcal{T})$
\If{$\Sigma.\pi = \textsc{Infeasible}$} \Return $\textsc{Refuse}(\text{No plan within budget/deadline})$ \EndIf
\State $\Sigma.B_{\mathrm{resv}} \gets \operatorname{Cost}(\Sigma.\pi)$; $\Sigma.B_{\mathrm{rem}} \gets \Sigma.B_{\mathrm{rem}} - \operatorname{Cost}(\Sigma.\pi)$; $\Sigma.t_{\mathrm{dispatch}} \gets \operatorname{TargetDispatch}(\Sigma.\pi)$
\While{$t_{\mathrm{now}} < T_{\mathrm{dead}}$}
  \For{each ready operation $a \in \Sigma.V_\pi$ with $\sigma_\pi(a) \le t_{\mathrm{now}}$ in state \texttt{Ready}}
    \If{$\rho(a) \succeq \tau_\Pi$} \Comment{Consequential sub-admission}
      \State $\mathcal{C}_a \gets \operatorname{RecursiveCAC}(a, \Sigma.s, \Sigma.B_{\mathrm{rem}})$
      \If{$\mathcal{C}_a = \textsc{Refuse}$} \State $\operatorname{CancelAll}(\Sigma.\mathcal{J}_{\mathrm{run}})$; \Return $\textsc{Refuse}(\text{Sub-admission failed})$ \EndIf
    \EndIf
    \State Transition $a \to \texttt{Running}$; move $c(a)$ from $\Sigma.B_{\mathrm{resv}} \to \Sigma.B_{\mathrm{commit}}$ \Comment{Charged once at launch}
    \State Launch $a$ into $\Sigma.\mathcal{J}_{\mathrm{run}}$ tagged with generation $\Sigma.g$ and timeout $\hat{\ell}(a)$
  \EndFor
  \State Await next asynchronous event $\mathcal{E}$ or timeout until earliest scheduled launch
  \If{Wall-clock timeout with $t_{\mathrm{now}} \ge T_{\mathrm{dead}}$}
    \State $\operatorname{CancelAll}(\Sigma.\mathcal{J}_{\mathrm{run}})$; refund unspent $B_{\mathrm{resv}} \to B_{\mathrm{rem}}$; \Return $\textsc{Refuse}(\text{Deadline exceeded})$
  \EndIf
  \If{$\mathcal{E}.\mathrm{id} \in \Sigma.E_{\mathrm{seen}}$} \State \textbf{continue} \Comment{Discard duplicate event idempotently} \EndIf
  \State $\Sigma.E_{\mathrm{seen}} \gets \Sigma.E_{\mathrm{seen}} \cup \{\mathcal{E}.\mathrm{id}\}$
  \If{$\mathcal{E}$ is an operation termination event (job $a$)}
    \State $\Sigma.\mathcal{J}_{\mathrm{run}} \gets \Sigma.\mathcal{J}_{\mathrm{run}} \setminus \{\mathcal{E}.\text{job}\}$; $\Sigma.V_{\mathrm{done}} \gets \Sigma.V_{\mathrm{done}} \cup \{a\}$
    \If{$\mathcal{E}$ carries stale generation $g_{\mathcal{E}} < \Sigma.g$}
      \State Transition $a \to \texttt{Superseded}$; reconcile physical effects via $\operatorname{AdaptiveRepair}(\Sigma, \mathcal{E}, t_{\mathrm{now}})$
      \State \textbf{continue}
    \EndIf
    \State Transition $a \to \texttt{Completed}$ (or \texttt{Failed}/\texttt{TimedOut}); refund unused reservations
  \EndIf
  \If{$\mathcal{E}$ is Affirmative Refutation}
    \State $\operatorname{CancelAll}(\Sigma.\mathcal{J}_{\mathrm{run}})$; \Return $\textsc{Refuse}(\text{Predicate refuted})$
  \ElsIf{$\mathcal{E}$ returns valid receipt $e$ from operation $a$}
    \State Ingest $e$ into $\Sigma.E_{\mathrm{store}}$ only if $a$ was verified (including $\mathcal{C}_a$ if consequential)
  \EndIf
  \If{$\mathcal{E}$ alters state $s$ or is operation fault/timeout}
    \State $\Sigma \gets \operatorname{AdaptiveRepair}(\Sigma, \mathcal{E}, t_{\mathrm{now}})$
    \If{$\Sigma = \textsc{Abort}$} \Return $\textsc{Refuse}(\text{Repair failed})$ \Comment{Retain physical state $\Sigma.s$; refuse safely} \EndIf
  \EndIf
  \State $\Sigma.\mathcal{W}_q \gets \operatorname{ExtractWitnesses}(\Sigma.\Omega, \Sigma.E_{\mathrm{store}})$
  \If{$\Sigma.\mathcal{W}_q$ satisfies quorums and EFD cuts for all $\Sigma.\Omega$}
    \If{Readiness~\eqref{eq:earliest-legal-dispatch} and execution-window freshness~\eqref{eq:execution-window-validity} hold for $\Sigma.\mathcal{W}_q$ at $t_{\mathrm{now}}$}
      \State $s_{\mathrm{snap}} \gets \Sigma.s$; $\text{verdict} \gets \operatorname{CAC}.\operatorname{Discharge}(\Sigma.\Omega, \Sigma.\mathcal{W}_q, s_{\mathrm{snap}}, \chi)$
      \If{$\text{verdict} = \textsc{Satisfied}$}
        \State $\mathcal{C}_q \gets \operatorname{CAC}.\operatorname{MintCert}(q, s_{\mathrm{snap}}, \Sigma.\mathcal{W}_q)$
        \If{$\Sigma.s.\nu \neq s_{\mathrm{snap}}.\nu$} \Comment{CAS check: concurrent state invalidation}
          \State Discard $\mathcal{C}_q$; $\Sigma \gets \operatorname{AdaptiveRepair}(\Sigma, \text{VersionConflict}, t_{\mathrm{now}})$
        \Else
          \State \Return $\mathcal{C}_q$ \Comment{Authoritative admission succeeded}
        \EndIf
      \EndIf
    \EndIf
  \EndIf
\EndWhile
\State $\operatorname{CancelAll}(\Sigma.\mathcal{J}_{\mathrm{run}})$; refund unspent $B_{\mathrm{resv}} \to B_{\mathrm{rem}}$; \Return $\textsc{Refuse}(\text{Deadline exceeded})$
\end{algorithmic}
\end{algorithm}

\subsection{Computational Complexity and Scalability Profile}
\label{sec:complexity}

While arbitrary epistemic cut minimization is NP-hard (Theorem~\ref{thm:np-hard}), our generated instances have at most 16 obligations and 48 operations. On the reference benchmark platform (Apple Silicon, 12-core ARM64, 36\,GB RAM, macOS/Darwin, Python 3.13, HiGHS branch-and-cut solver):
\begin{itemize}
    \item In the corrected feasible scaling sweep, end-to-end solver time averages 0.58\,ms at four operations and 10.66\,ms at 48 operations; neither value isolates Master ILP time.
    \item The greedy fallback (Equation~\eqref{eq:greedy-efd-heuristic}) enumerates fault coalitions up to quorum size for its provisional score. It remains polynomial for fixed small quorums but provides no feasibility certificate when it stops without a plan.
    \item Two-tier Backward JIT scheduling is measured as part of the end-to-end solver time above; the sweep does not separately estimate its latency under bounded concurrency.
    \item In the corrected 20-trial selected-operation timeout study, all 20 cases trigger repair. Receipt-aware repair and full resynthesis each admit 18; local-process repair computation averages 2.05 and 2.32\,ms, respectively, in the rerun (Section~\ref{sec:eval-repair}).
\end{itemize}
These measured times are small relative to simulated operation latencies. The ordinary size sweep contains no refinement cuts; a separate 20-case constructed suite exercises the supported cut certificates but does not bound worst-case online planning time.

%% file: sections/09-evaluation-design.tex
\section{Empirical Evaluation}
\label{sec:evaluation}
\label{sec:evaluation-design}

We evaluate the \aas reference implementation with generated infrastructure workloads and small constructed counterexamples. The primary comparison applies six policies in two evaluation modes to the same 1,200 instances, yielding 14,400 policy-mode records. A policy-mode record is not an independent deployment. Targeted paired studies test freshness, fault-domain diversity, repair, and decomposition cuts.

\subsection{Workloads and Protocol}
\label{sec:scenarios}

The generator creates 400 instances each for PostgreSQL standby promotion, Kubernetes node remediation, and cloud IAM/network reconfiguration (seeds 1000--2199). They contain 4--12 obligations, operation costs and precedence edges, evidence lifetimes of 3--30\,s, and an exposure map with 24 modeled fault domains. Actual operation latency is sampled from a log-normal distribution centered on nominal latency ($\sigma=0.35$); this is a simulator assumption, not a measured cloud model. Domain faults are injected into 15\% of instances. The selected operation catalogue and gateway policy are held fixed across the main policy comparisons.

\label{sec:baselines}
The six primary policies are serial forward acquisition (B1), cheapest-witness greedy selection (B2), constraint-aware parallel forward scheduling (B3/B5), full \aas (B4), and constraint-aware selection followed by scheduling without decomposition refinement (B6). B3 and B5 share an implementation and are shown separately only to preserve the original policy nomenclature. B2 is deliberately simple and is not an implementation of prior portfolio methods such as VP-CONTROL~\cite{vpcontrol2026}. A bounded exact solver (B7) is used only for the oracle study; full resynthesis (B8) is used in the repair study.

\label{sec:eval-protocol}
In Mode A, the admission gateway scores a candidate manifest retrospectively. In Mode B, it enforces the admission boundary and blocks invalid candidates. We distinguish valid admission, gateway rejection, proven safe refusal, and planning indeterminacy. A recorded invalid candidate in Mode A is a simulated proposal, not an unauthorized physical action. Each main-table outcome count uses the full 1,200-instance denominator. Table latency and cost means are conditional on valid admission; rejected attempts and safe refusals are excluded from those means.

\input{tables/tab1_main_comparison.tex}
\input{tables/tab2_mode_b_enforced.tex}

\subsection{Main Comparison}
\label{sec:main-results}

In Mode A, \aas produces 1,075/1,200 valid candidates (89.58\%), 12 stale candidates (1.00\%), and 113 proven safe refusals (9.42\%). Constraint-aware forward scheduling produces 647 valid candidates (53.92\%), 440 stale candidates (36.67\%), and the same 113 refusals. Thus, on these paired generated instances, backward placement changes 428 outcomes from stale candidate to valid admission. B6 matches B4 on this workload because no refinement cut is generated by its ordinary instances; this comparison alone does not establish a decomposition benefit.

Greedy cost selection and serial forward acquisition each produce 738/1,200 correlated-failure classifications (61.50\%) under the declared exposure map. No B4 candidate is classified as structurally correlated. In Mode B the gateway admits 1,078 \aas candidates, blocks nine stale candidates, and records 113 safe refusals. Three of the 12 Mode A stale cases are recovered by the controller in Mode B; no physical proposal executes on a blocked candidate. The gateway blocks 738 serial-forward and 757 greedy-cost candidate dispatches. Mean wasted committed cost over all 1,200 proposals is \$0.013 for B4 and \$0.106 for B2. These are simulator outcomes under the modeled faults and budgets.

\begin{figure}[t]
\centering
\includegraphics[width=\columnwidth]{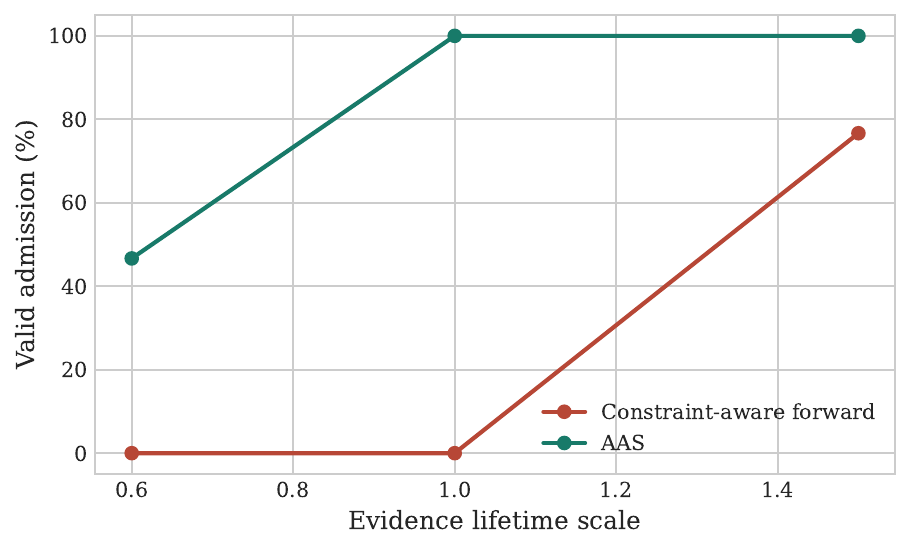}
\caption{RQ1: Valid admission under matched evidence-lifetime scales. Each point uses the same 30 PostgreSQL base instances and operation-keyed latency draws for both policies and all three scales.}
\label{fig:freshness-sweep}
\end{figure}

\subsection{RQ1: Paired Freshness Sensitivity}
\label{sec:eval-freshness}

We vary only evidence lifetimes ($0.6\times$, $1.0\times$, $1.5\times$) on 30 PostgreSQL base instances (seeds 7000--7029). Each operation receives a fixed latency draw keyed by base seed and operation identifier, so the same trace is reused across policies and scales. At $0.6\times$, \aas admits 14/30 while constraint-aware forward scheduling admits 0/30; the other 16 \aas cases are proven infeasible and refused. At nominal lifetime, the counts are 30/30 versus 0/30; at $1.5\times$, they are 30/30 versus 23/30. The paired admission difference at $1.5\times$ is 7/30 (23.3 percentage points; percentile bootstrap 95\% interval 10.0--40.0 points over base instances). The gap narrows as evidence remains valid longer. The separate $K_{\max}=1$ fixture verifies that a two-operation serial schedule is feasible, but does not distinguish policies.

\begin{figure}[t]
\centering
\includegraphics[width=\columnwidth]{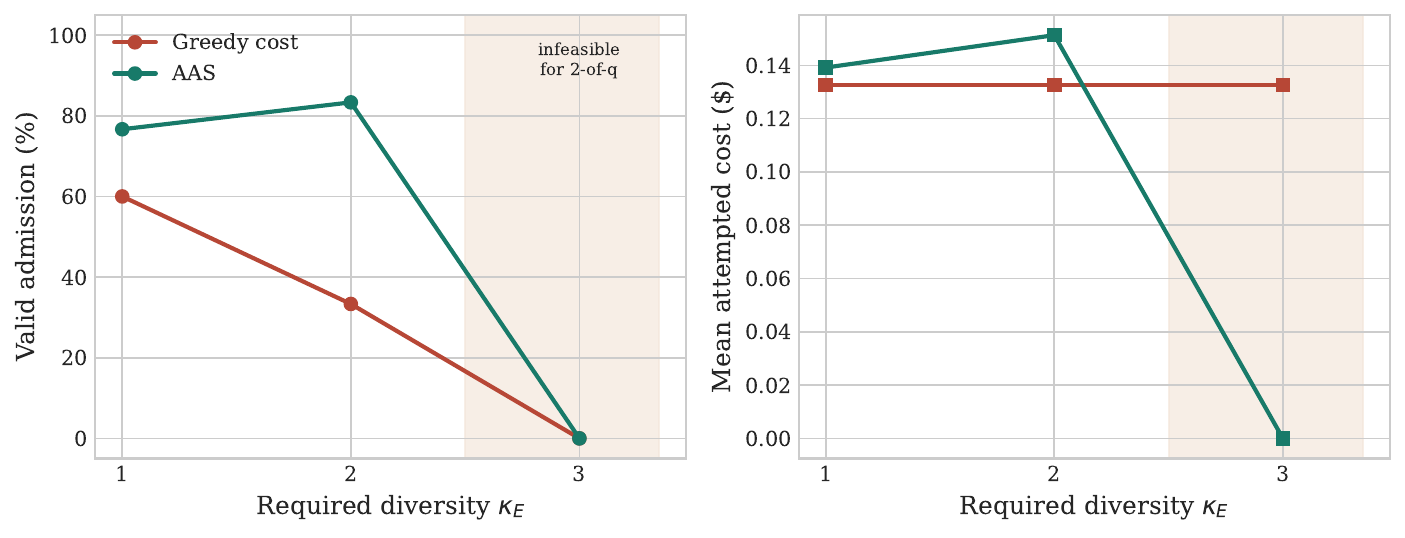}
\caption{RQ2: Valid admission and unconditional attempted cost on 30 matched base instances under fixed 2-of-$q$ quorums. Threshold 3 is structurally infeasible because $\kappa_E\le2$; its zero attempted cost reflects refusal, not a cost improvement.}
\label{fig:efd-frontier}
\end{figure}

\subsection{RQ2: Diversity and Cost}
\label{sec:eval-efd}

We vary required \efd threshold $\kappa_E^{\min}\in\{1,2,3\}$ on the same 30 base instances (ten per scenario; seeds 8000--8029), fixing 2-of-$q$ quorum rules, operation-keyed latency, and injected domain faults. \aas admits 23 and 25 instances at thresholds 1 and 2; the greedy cost baseline admits 18 and 10. The different admission counts reflect both diversity and realized faults, so the threshold-2 arm is not a monotone availability claim. At threshold 3, all 30 \aas instances are proven infeasible because a completed local fault basis implies $\kappa_E\le k_\omega=2$ for the varied obligations. Greedy cost selection produces no valid admission either. The zero attempted cost of \aas in this arm records safe refusal. Cost comparisons among admitted proposals must therefore use the feasible threshold-1 and threshold-2 arms; unconditional cost is \$0.139 and \$0.151 per instance, respectively.

For the declared exposure map and positive authorization rule, $\kappa_E(W,\Gamma_\omega)\ge h$ means that fewer than $h$ modeled roots cannot expose a decisive approving coalition. This is a conditional structural guarantee; unmodeled shared dependencies or an incorrect causal account of exposed approvals can defeat it.

\subsection{RQ3: Bounded Joint Optimality}
\label{sec:eval-optimality}

On 20 bounded instances ($|\Omega|\le4$, $|\mathcal A|\le10$), \aas and the exact oracle both find feasible plans at the same cost. Mean local solve times in the corrected rerun are 2.15 and 5.44\,ms, respectively; this small paired sample does not establish a general speed advantage. The revised solver uses exhaustive temporal-order fallback for portfolios of at most six operations, and the oracle shares that temporal search component. The comparison checks selection and cost on small cases, not independent temporal correctness or general completeness.

\begin{figure}[t]
\centering
\includegraphics[width=\columnwidth]{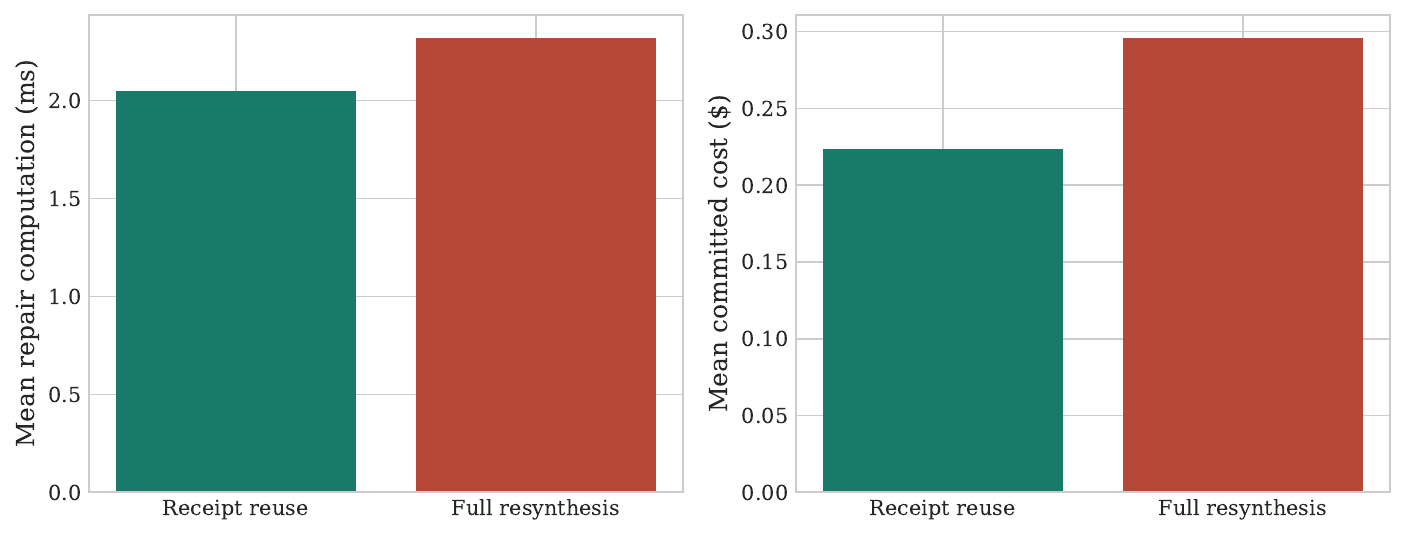}
\caption{RQ4: Matched mean repair computation and committed cost on 20 selected-operation timeout trials. Receipt reuse and resynthesis each admit 18; the static control admits none. Computation times are local-process measurements.}
\label{fig:repair-vs-resynth}
\end{figure}

\subsection{RQ4: Repair After a Selected-Operation Timeout}
\label{sec:eval-repair}

The corrected intervention faults an operation selected by the same initial plan in every arm (20 trials, seeds 5000--5019). It compares incremental repair with receipt reuse, the same controller with reuse disabled, full resynthesis, and a static no-repair control. All 20 cases trigger repair in the three enabled arms. Reuse, no-reuse, and resynthesis each admit 18/20; static planning admits 0/20. The two failures in each repair arm occur on different instances, so equal totals do not imply identical recovery. Receipt reuse records 2.85 reused receipts per case and mean committed cost \$0.223 versus \$0.296 for either no-reuse or resynthesis. The paired cost difference (reuse minus no-reuse) is $-\$0.073$ with a percentile bootstrap 95\% interval of approximately $[-\$0.100,-\$0.047]$ over 20 instances. Mean local repair computation in the corrected rerun is 2.05 versus 2.21\,ms for no-reuse and 2.32\,ms for resynthesis; these small timing differences are process dependent and do not establish a deployment latency advantage. This study injects operation timeouts, not environmental version changes.

\subsection{RQ5: Consequential Diagnostics}
\label{sec:eval-diagnostics}

Twelve deterministic fixtures test prospective recursive admission: ten finite stratified trees are admitted, while a cycle and a non-descending-risk tree are rejected. These fixtures check specific invariants; they do not estimate distributed liveness or resource-leak rates.

\begin{figure}[t]
\centering
\includegraphics[width=\columnwidth]{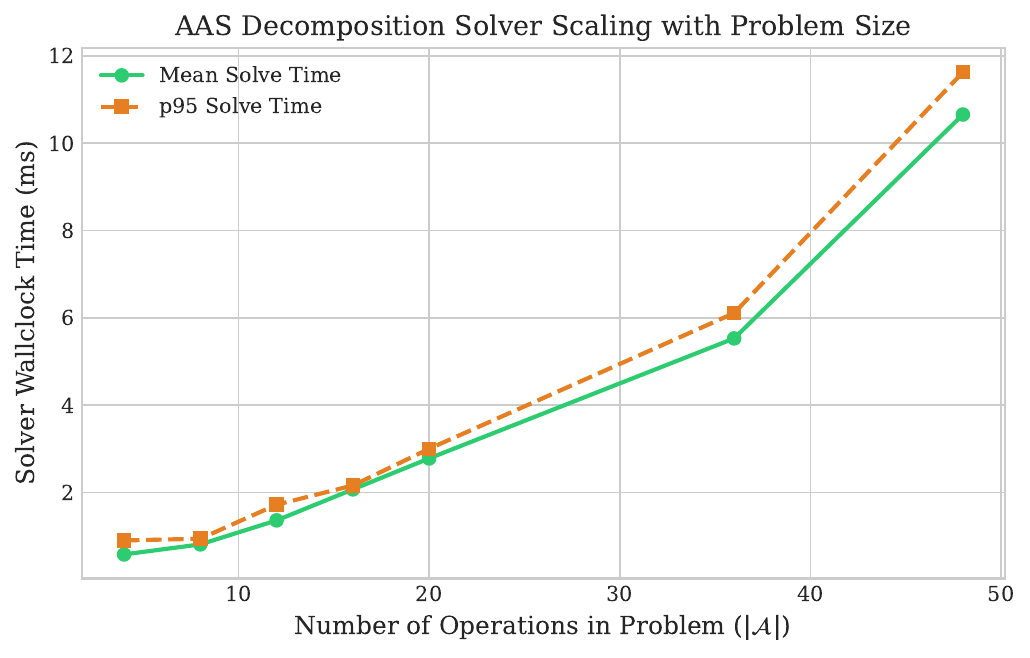}
\caption{RQ6: Solver time on seven generated sizes (4--48 operations). This ordinary scaling sweep generates no refinement cuts; the separate constructed suite below exercises both supported cut certificates.}
\label{fig:solver-scaling}
\end{figure}

\subsection{RQ6: Solver Scaling and Certified Refinement}
\label{sec:eval-scalability}

On seven feasible generated sizes, each with 2-of-$q$ obligations and separate witness roots, mean solver time rises from 0.58\,ms at four operations to 10.66\,ms at 48 operations (p95 11.63\,ms at the largest size). Every fixture admits a plan; this assertion prevents timing structural refusals. The sweep supports neither an asymptotic claim nor a bound for repeated refinement, and generates zero cuts. We therefore add 20 constructed, seeded small cases: ten violate a witness's intrinsic lifetime, and ten contain a cheap precedence chain that misses the deadline. In every case the unrefined master selects a temporally infeasible portfolio; one certified cut leads the decomposed solver to a feasible alternative in the second iteration. Both cut types occur ten times, and the refined cost matches the exact oracle in all 20. These cut-producing cases demonstrate the implemented mechanism, not its frequency in deployed workloads.

\input{tables/tab3_ablations.tex}

\begin{figure}[t]
\centering
\includegraphics[width=\columnwidth]{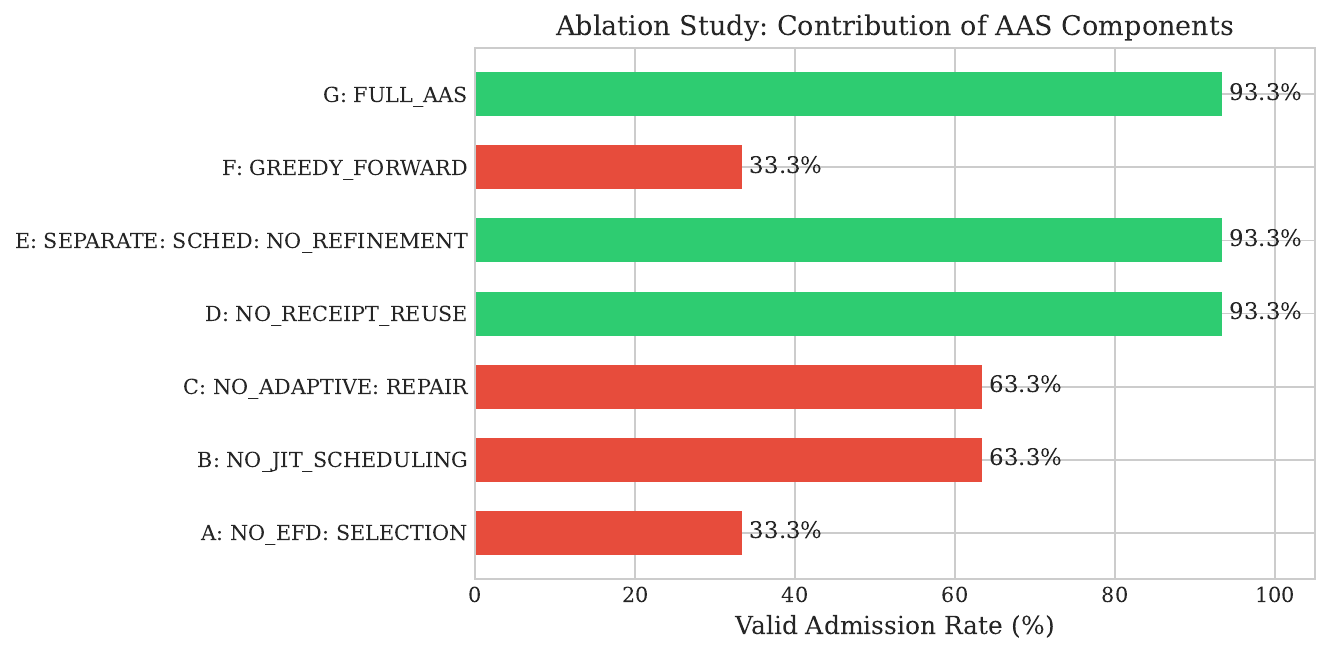}
\caption{General ablation outcomes on 30 generated instances per policy. No-reuse and no-refinement coincide with full \aas here because these trials do not trigger repair or refinement.}
\label{fig:ablation-waterfall}
\end{figure}

\subsection{Ablations and Interpretation}
\label{sec:ablations}

On 30 shared generated instances, full \aas admits 28/30. Greedy selection without \efd constraints admits 10/30 and produces 20 correlated-failure classifications. Forward scheduling admits 19/30 with 11 stale outcomes. The nominal ``No Adaptive Repair'' general ablation shares the forward policy implementation and therefore cannot isolate repair; the dedicated timeout study above supplies that intervention. No-reuse and no-refinement each admit 28/30, matching full \aas because the general workload activates neither mechanism. The paired repair and cut-producing studies identify their effects under controlled activation. These are mechanism tests on synthetic data, not estimates of production reliability.

%% file: tables/tab1_main_comparison.tex
\begin{table*}[t]
\centering
\footnotesize
\begin{tabularx}{\textwidth}{lYYYYYYY}
\toprule
\textbf{Scheduler Policy} & \textbf{Valid Dispatch} & \textbf{Stale Failures} & \textbf{Correlated Failures} & \textbf{Safe Refusal} & \textbf{Admitted Latency (s)} & \textbf{Admitted Cost (\$)} & \textbf{Replan (ms)} \\
\midrule
Serial Forward (ASAP) & 29.75\% (357/1200) & 0.0\% (0/1200) & 61.5\% (738/1200) & 8.75\% (105/1200) & 7.70 (p95: 7.70) & \$0.11 & --- \\
Greedy Cost Portfolio & 28.17\% (338/1200) & 1.58\% (19/1200) & 61.5\% (738/1200) & 8.75\% (105/1200) & 2.40 (p95: 2.40) & \$0.11 & --- \\
Static Parallel & 53.92\% (647/1200) & 36.67\% (440/1200) & 0.0\% (0/1200) & 9.42\% (113/1200) & 4.18 (p95: 6.51) & \$0.14 & --- \\
\textbf{AAS (Proposed)} & 89.58\% (1075/1200) & 1.0\% (12/1200) & 0.0\% (0/1200) & 9.42\% (113/1200) & 5.50 (p95: 12.40) & \$0.15 & --- \\
Constraint-Aware Forward & 53.92\% (647/1200) & 36.67\% (440/1200) & 0.0\% (0/1200) & 9.42\% (113/1200) & 4.18 (p95: 6.51) & \$0.14 & --- \\
Separate Sched (No Refinement) & 89.58\% (1075/1200) & 1.0\% (12/1200) & 0.0\% (0/1200) & 9.42\% (113/1200) & 5.50 (p95: 12.40) & \$0.15 & --- \\
\bottomrule
\end{tabularx}
\caption{Mode A on 1,200 generated instances per policy. Outcome counts use all instances; latency and cost means condition on valid admission. The gateway scores candidate manifests retrospectively.}
\label{tab:evaluation-summary}
\end{table*}

%% file: tables/tab2_mode_b_enforced.tex
\begin{table*}[t]
\centering
\small
\begin{tabularx}{\textwidth}{lYYYYYY}
\toprule
\textbf{Scheduler Policy} & \textbf{Valid Admitted} & \textbf{Stale Refused} & \textbf{Correlated Blocked} & \textbf{Explicit Safe Refusal} & \textbf{Admitted Latency (s)} & \textbf{Admitted Cost (\$)} \\
\midrule
Serial Forward (ASAP) & 29.75\% (357/1200) & 0.0\% (0/1200) & 61.5\% (738/1200) & 8.75\% (105/1200) & 7.70 & \$0.11 \\
Greedy Cost Portfolio & 28.17\% (338/1200) & 1.58\% (19/1200) & 61.5\% (738/1200) & 8.75\% (105/1200) & 2.40 & \$0.11 \\
Static Parallel & 53.92\% (647/1200) & 36.67\% (440/1200) & 0.0\% (0/1200) & 9.42\% (113/1200) & 4.18 & \$0.14 \\
\textbf{AAS (Proposed)} & 89.83\% (1078/1200) & 0.75\% (9/1200) & 0.0\% (0/1200) & 9.42\% (113/1200) & 5.50 & \$0.15 \\
Constraint-Aware Forward & 53.92\% (647/1200) & 36.67\% (440/1200) & 0.0\% (0/1200) & 9.42\% (113/1200) & 4.18 & \$0.14 \\
Separate Sched (No Refinement) & 89.58\% (1075/1200) & 1.0\% (12/1200) & 0.0\% (0/1200) & 9.42\% (113/1200) & 5.50 & \$0.15 \\
\bottomrule
\end{tabularx}
\caption{Mode B on 1,200 generated instances per policy. The gateway blocks invalid candidates before physical proposal execution. Latency and cost means condition on valid admission.}
\label{tab:mode-b-summary}
\end{table*}

%% file: tables/tab3_ablations.tex
\begin{table*}[t]
\centering
\small
\begin{tabularx}{\textwidth}{L{0.23\textwidth}YYYYYY}
\toprule
\textbf{Ablation Variant} & \textbf{Valid (\%)} & \textbf{Stale (\%)} & \textbf{Corr (\%)} & \textbf{Refusal (\%)} & \textbf{Lat (s)} & \textbf{Cost (\$)} \\
\midrule
A: no EFD selection & 33.33\% (10/30) & 0.0\% (0/30) & 66.67\% (20/30) & 0.0\% (0/30) & 2.40 & \$0.11 \\
B: no JIT scheduling & 63.33\% (19/30) & 36.67\% (11/30) & 0.0\% (0/30) & 0.0\% (0/30) & 4.19 & \$0.14 \\
C: no adaptive repair & 63.33\% (19/30) & 36.67\% (11/30) & 0.0\% (0/30) & 0.0\% (0/30) & 4.19 & \$0.14 \\
D: no receipt reuse & 93.33\% (28/30) & 6.67\% (2/30) & 0.0\% (0/30) & 0.0\% (0/30) & 5.15 & \$0.15 \\
E: no refinement & 93.33\% (28/30) & 6.67\% (2/30) & 0.0\% (0/30) & 0.0\% (0/30) & 5.15 & \$0.15 \\
F: greedy forward & 33.33\% (10/30) & 0.0\% (0/30) & 66.67\% (20/30) & 0.0\% (0/30) & 7.70 & \$0.11 \\
G: full AAS & 93.33\% (28/30) & 6.67\% (2/30) & 0.0\% (0/30) & 0.0\% (0/30) & 5.15 & \$0.15 \\
\bottomrule
\end{tabularx}
\caption{General ablation study on the same 30 generated instances per policy. Latency and cost means condition on valid admission.}
\label{tab:ablations}
\end{table*}

%% file: sections/10-discussion-limitations.tex
\section{Discussion and Limitations}
\label{sec:discussion-limitations}

\paragraph{Scope of Empirical Evidence.}
The three main scenario families still use generated operations, costs, exposure maps, latencies, and faults. Policy and mode records share instances. The new sensitivity studies pair base instances, fault injections, and operation-keyed latency draws, but retain the same generator and small sample sizes. The 20-instance oracle comparison shares a bounded temporal-search component; the 20 cut-producing cases are constructed to activate the supported certificates. The corrected timeout study demonstrates recovery under selected-operation faults, but does not test production state shifts, distributed cancellation, or transactional rollback. Independent workloads and live integrations are required to assess transfer and worst-case solve behavior.

\paragraph{Clock Synchronization and Distributed Skew.}
Theorem~\ref{thm:freshness-intersection} defines evidence validity evaluated at the execution gateway clock $C_{\mathrm{gw}}$, deducting maximum gateway-observer skew $\epsilon_{\mathrm{skew}}$ ($|C_i(t) - C_{\mathrm{gw}}(t)| \le \epsilon_{\mathrm{skew}}$). In distributed architectures lacking hardware-synchronized clocks (such as GPS/atomic reference clocks in TrueTime~\cite{corbett2013spanner}), relative clock drift between two independent observers $i$ and $j$ can reach $2\epsilon_{\mathrm{skew}}$. If peer observers compare timestamps directly without mediation through an authoritative gateway, soundness requires deducting $2\epsilon_{\mathrm{skew}}$:
\begin{equation}
\Delta t_{\mathrm{effective}}^{\mathrm{peer}}(e) = \max \left( 0, \; \Delta t(e) - 2\epsilon_{\mathrm{skew}} \right).
\end{equation}
If clock drift exceeds evidence validity, short-lived telemetry cannot be safely scheduled across disparate clock domains without atomic local re-observation at the gateway.

\paragraph{Proactive Pre-fetching vs. Invalidation Risk.}
Pre-fetching long-lived evidence (e.g., reading cluster topology 30\,s prior to failover) amortizes scheduling latency. However, if concurrent background mutations modify the underlying resource versions, the pre-fetched witness becomes obsolete. The \cac admission gateway enforces state version guards $G_q = \{\nu_s = \nu_{\mathrm{live}}\}$. If any guarded version changes between pre-fetching and dispatch, the certificate CAS fails, forcing an abort. Thus, aggressive pre-fetching in high-concurrency environments increases the probability of optimistic concurrency aborts.

\paragraph{Incomplete Epistemic Dependency Maps.}
The structural diversity guarantees of Section~\ref{sec:epistemic-scheduling} depend on the fidelity of the exposure map $\mathcal{X}: \mathcal{A} \to 2^{\mathbb{F}}$. If two ostensibly independent services share an undisclosed transitive dependency---such as a common DNS resolver, power bus, or base model---the calculated cut $\kappa_E(W,\Gamma_\omega)$ can overstate diversity. Establishing complete epistemic independence remains an empirical challenge.

\paragraph{Human-in-the-Loop Latency Asymmetry.}
Human approval can dominate automated evidence-acquisition latency. The current simulator does not model an asynchronous human response or re-fire short-lived telemetry after approval. Such a workflow would need an explicit provisional state, a new dispatch-time freshness check, and another admission decision.

%% file: sections/11-related-work.tex
\section{Related Work}
\label{sec:related-work}

\paragraph{Admission Control and Mediated Agent Execution.}
Access-control models such as ABAC~\cite{hu2014abac} decide whether a principal may invoke an operation. Proof-Carrying Code~\cite{necula1997pcc} made consumer-side proof checks explicit. Recent agent-tool work, including ToolGuardian~\cite{toolguardian2026}, characterizes tool behavior and applies task-aware runtime authorization. \cac~\cite{he2026cac} specifies risk-conditioned assurance obligations. \aas addresses the intervening planning problem: which observations to obtain and when to obtain them before mediated execution.

\paragraph{Verification Portfolios and Model Routing.}
FrugalGPT~\cite{chen2023frugalgpt} and RouteLLM~\cite{ong2024routellm} trade inference cost against model capability. VP-CONTROL~\cite{vpcontrol2026} studies cost-aware verifier portfolios, common-mode evidence failures, and commit-time guards, including a live HTTP/SQLite study. Its portfolio problem motivates source-aware selection. Our formulation additionally schedules observations with finite validity intervals and treats consequential evidence acquisition as a recursively admitted operation. The evaluations address different workloads and should not be read as a direct performance comparison.

\paragraph{Real-Time and Dependency Scheduling.}
Real-time scheduling~\cite{liulayland1973,buttazzo2011hard} and resource-constrained project scheduling~\cite{graham1979optimization} address deadlines, precedence, and capacity. Assurance scheduling adds a policy-defined validity interval to each selected witness. A feasible task schedule can therefore yield an inadmissible dispatch if evidence expires before the protected execution window ends.

\paragraph{Fault Tolerance and Epistemic Diversity.}
N-Version Programming~\cite{avizienis1985nversion} and Byzantine fault-tolerant replication~\cite{schneider1990byzantine} use redundancy against specified failures. EFD theory~\cite{he2026efd} defines a structural cut relative to the authorization rule and an explicit fault basis. \aas imports that cut into witness selection and admission, then adds temporal freshness and budget constraints. Its guarantee remains relative to the declared exposure map; an undisclosed common dependency can invalidate it.

%% file: sections/12-conclusion.tex
\section{Conclusion}
\label{sec:conclusion}

\aas formulates evidence acquisition for consequential agent actions as a joint selection and scheduling problem. Its witness assignment enforces declared fault-domain cuts; its temporal scheduler places expiring observations near dispatch; and its controller admits consequential diagnostics and repairs plans after state changes. The formal guarantees depend on modeled dependencies, bounded latencies, and exact search where optimality is claimed.

On three generated workload families, \aas raises valid candidate admission from 647/1,200 for constraint-aware forward scheduling to 1,075/1,200 and reduces stale candidates from 440 to 12. Paired sensitivity studies hold the base instance and latency draws fixed while varying freshness and diversity. In 20 constructed cut-producing cases, certified refinement recovers an oracle-matching feasible alternative; in 20 selected-operation timeout trials, receipt-aware repair admits 18 and lowers committed cost relative to resynthesis. The bounded oracle shares temporal search with the solver, and neither constructed cuts nor simulator timeouts establish production reliability. Independent workloads and live gateway integrations remain the next tests.

%% file: appendices/a-formal-proofs.tex
\section{Formal Proofs}
\label{app:proofs}

This appendix provides complete formal proofs for the theoretical statements established in the main manuscript.

\subsection{Proof of Theorem~\ref{thm:freshness-intersection} (Dispatch Freshness Intersection)}
\label{app:proof-freshness}

\begin{theorem}[Restatement of Theorem~\ref{thm:freshness-intersection}]
Fix an executed assurance plan $\pi$ with completed operations $V_\pi$, actual completions $\{C_\pi(a)\}$, observation timestamps $\{t_{\mathrm{obs}}(e)\}$, and composite witness manifest $W_{\mathrm{all}}$. Let $V_{\mathrm{req}} \subseteq V_\pi$ denote mandatory operations prior to dispatch ($V_{\mathrm{manifest}} \cup \operatorname{Anc}_{D_{\mathrm{op}}}(V_{\mathrm{manifest}})$), and $\delta_{\mathrm{ready}} = \delta_{\mathrm{verify}} + \delta_{\mathrm{mint}} + \delta_{\mathrm{disp}}$.
A dispatch timestamp satisfying readiness and freshness throughout the protected execution window exists, before imposing the independent deadline and budget guards, if and only if condition~\eqref{eq:freshness-intersection-condition} holds:
\begin{equation}
\begin{aligned}
\max_{a \in V_{\mathrm{req}}} C_\pi(a) &+ \delta_{\mathrm{ready}} + \delta_{\mathrm{exec}} + \epsilon_{\mathrm{skew}} \\
&\le \min_{e \in W_{\mathrm{all}}} \left( t_{\mathrm{obs}}(e) + \Delta t_{\mathrm{eff}}(e) \right).
\end{aligned}
\label{eq:app-freshness-condition}
\end{equation}
which is equivalent to pairwise operational skew bound~\eqref{eq:pairwise-freshness-skew} and observation skew bound~\eqref{eq:pairwise-obs-skew}. Non-witness prerequisites $u \in V_{\mathrm{req}} \setminus V_{\mathrm{manifest}}$ participate directly in readiness bound~\eqref{eq:pairwise-freshness-skew}.
\end{theorem}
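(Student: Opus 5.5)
The plan is to characterize the set of admissible dispatch timestamps exactly, as an intersection of half-lines on the gateway clock, and then read the stated condition off as the non-emptiness of that intersection. I first collect the two families of constraints a dispatch timestamp $t$ must satisfy, deliberately setting aside the independent deadline and budget guards.

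For readiness, the chain in Equation~\eqref{eq:earliest-legal-dispatch} (verification, minting, gateway transit) applies to every $a \in V_{\mathrm{req}}$. It yields the lower bound $t \ge t_{\mathrm{lower}} := \max_{a\in V_{\mathrm{req}}} C_\pi(a) + \delta_{\mathrm{ready}}$. I will note explicitly that ancestors in $\operatorname{Anc}_{D_{\mathrm{op}}}(V_{\mathrm{manifest}})$ contribute here even though they emit no receipts.

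For freshness, I unfold Equation~\eqref{eq:execution-window-validity} using the interval $I(e)$ of Equation~\eqref{eq:validity-interval}. This requires both $t \ge t_{\mathrm{obs}}(e) - \epsilon_{\mathrm{skew}}$ and $t + \delta_{\mathrm{exec}} \le t_{\mathrm{obs}}(e) + \Delta t_{\mathrm{eff}}(e) - \epsilon_{\mathrm{skew}}$ for every $e \in W_{\mathrm{all}}$.

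The one step needing care is showing that the left endpoints of the $I(e)$ are redundant, so that the feasible set is exactly $[t_{\mathrm{lower}}, t_{\mathrm{upper}}]$ with $t_{\mathrm{upper}} := \min_e (t_{\mathrm{obs}}(e)+\Delta t_{\mathrm{eff}}(e)-\epsilon_{\mathrm{skew}}) - \delta_{\mathrm{exec}}$. I will argue this from $t_{\mathrm{obs}}(e) \le C_\pi(a)$ for the emitting $a \in V_{\mathrm{manifest}} \subseteq V_{\mathrm{req}}$, together with $\delta_{\mathrm{ready}},\epsilon_{\mathrm{skew}} \ge 0$. These give $t_{\mathrm{obs}}(e) - \epsilon_{\mathrm{skew}} \le t_{\mathrm{lower}}$, so every $t \ge t_{\mathrm{lower}}$ already lies above each left endpoint. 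This is the only place where the inclusion of manifest producers in $V_{\mathrm{req}}$ is used, so I will state it as a short lemma.

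With the interval identified, both directions of the equivalence follow at once. If $t_{\mathrm{lower}} \le t_{\mathrm{upper}}$, then $t = t_{\mathrm{lower}}$ is a witness. Conversely, any feasible $t$ sandwiches $t_{\mathrm{lower}} \le t \le t_{\mathrm{upper}}$. Rearranging $t_{\mathrm{lower}} \le t_{\mathrm{upper}}$ by moving $\delta_{\mathrm{exec}}$ and $\epsilon_{\mathrm{skew}}$ across gives \eqref{eq:app-freshness-condition}.

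The pairwise form \eqref{eq:pairwise-freshness-skew} uses the elementary fact that $\max_i x_i \le \min_j y_j$ if and only if $x_i \le y_j$ for all $i,j$. I apply it with $x_a = C_\pi(a)$ and $y_e = t_{\mathrm{obs}}(e)+\Delta t_{\mathrm{eff}}(e) - \delta_{\mathrm{ready}}-\delta_{\mathrm{exec}}-\epsilon_{\mathrm{skew}}$, then subtract $t_{\mathrm{obs}}(e_i)$. The observation form \eqref{eq:pairwise-obs-skew} comes from substituting $C_\pi(a) = t_{\mathrm{obs}}(e_j) + \ell_{\mathrm{post}}(a)$ for $a \in V_{\mathrm{manifest}}$.

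I will flag that \eqref{eq:pairwise-obs-skew} ranges only over manifest producers. It is therefore implied by the condition but does not by itself replace it, since the non-witness prerequisites must still be checked through \eqref{eq:pairwise-freshness-skew}. This is exactly the closing sentence of the statement.
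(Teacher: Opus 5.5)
Your proposal is correct and follows the same route as the paper's appendix proof. Both form the readiness lower bound $t_{\mathrm{lower}}$ and the expiry upper bound $t_{\mathrm{upper}}$, reduce existence of a dispatch time to $t_{\mathrm{lower}} \le t_{\mathrm{upper}}$, and get the pairwise and observation forms from the max--min lemma and the substitution $C_\pi(a) = t_{\mathrm{obs}}(e_j) + \ell_{\mathrm{post}}(a)$. The only difference is your lemma showing that the left endpoints $t_{\mathrm{obs}}(e) - \epsilon_{\mathrm{skew}}$ of $I(e)$ are redundant, via $t_{\mathrm{obs}}(e) \le C_\pi(a)$, $V_{\mathrm{manifest}} \subseteq V_{\mathrm{req}}$, and $\delta_{\mathrm{ready}}, \epsilon_{\mathrm{skew}} \ge 0$; the paper's proof uses only the right endpoints and skips this step, so your version is slightly more complete.
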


\begin{proof}
By Definition~\ref{def:prospective-admissibility} and Equation~\eqref{eq:execution-window-validity}, dispatch requires that for every witness receipt $e \in W_{\mathrm{all}}$, the entire protected execution window $[t_{\mathrm{dispatch}}, \; t_{\mathrm{dispatch}} + \delta_{\mathrm{exec}}]$ is contained within the conservative gateway-evaluated validity interval $I(e) = [t_{\mathrm{obs}}(e) - \epsilon_{\mathrm{skew}}, \; t_{\mathrm{obs}}(e) + \Delta t_{\mathrm{eff}}(e) - \epsilon_{\mathrm{skew}}]$.

Furthermore, proposal dispatch cannot legally occur until all mandatory operations in $V_{\mathrm{req}}$ (both witness producers and precedence prerequisites) have completed execution, receipts have been emitted and verified ($\delta_{\mathrm{verify}}$), the admission certificate has been minted by \cac ($\delta_{\mathrm{mint}}$), and gateway mediation delay ($\delta_{\mathrm{disp}}$) has elapsed:
\begin{equation}
t_{\mathrm{dispatch}} \ge \max_{a \in V_{\mathrm{req}}} C_\pi(a) + \delta_{\mathrm{ready}},
\label{eq:app-dispatch-lower}
\end{equation}
where $\delta_{\mathrm{ready}} = \delta_{\mathrm{verify}} + \delta_{\mathrm{mint}} + \delta_{\mathrm{disp}}$.

Simultaneously, the completion of the protected execution window cannot exceed the gateway-evaluated expiration timestamp of any witness receipt:
\begin{equation}
t_{\mathrm{dispatch}} + \delta_{\mathrm{exec}} \le \min_{e \in W_{\mathrm{all}}} \left( t_{\mathrm{obs}}(e) + \Delta t_{\mathrm{eff}}(e) - \epsilon_{\mathrm{skew}} \right).
\label{eq:app-dispatch-upper}
\end{equation}

Combining~\eqref{eq:app-dispatch-lower} and~\eqref{eq:app-dispatch-upper}, an admissible dispatch timestamp $t_{\mathrm{dispatch}}$ exists if and only if the feasible dispatch interval $[t_{\mathrm{lower}}, t_{\mathrm{upper}}]$ is non-empty, where:
\begin{align}
t_{\mathrm{lower}} &= \max_{a \in V_{\mathrm{req}}} C_\pi(a) + \delta_{\mathrm{ready}}, \\
t_{\mathrm{upper}} &= \min_{e \in W_{\mathrm{all}}} \left( t_{\mathrm{obs}}(e) + \Delta t_{\mathrm{eff}}(e) - \epsilon_{\mathrm{skew}} \right) - \delta_{\mathrm{exec}}.
\end{align}

The interval $[t_{\mathrm{lower}}, t_{\mathrm{upper}}]$ is non-empty if and only if $t_{\mathrm{lower}} \le t_{\mathrm{upper}}$:
\begin{equation}
\begin{aligned}
\max_{a \in V_{\mathrm{req}}} C_\pi(a) &+ \delta_{\mathrm{ready}} \le \min_{e \in W_{\mathrm{all}}} \Big( t_{\mathrm{obs}}(e) \\
&+ \Delta t_{\mathrm{eff}}(e) - \epsilon_{\mathrm{skew}} \Big) - \delta_{\mathrm{exec}}.
\end{aligned}
\end{equation}
Moving constants $\delta_{\mathrm{ready}}$, $\delta_{\mathrm{exec}}$, and $\epsilon_{\mathrm{skew}}$ directly establishes Equation~\eqref{eq:app-freshness-condition}.

To establish equivalence with pairwise condition~\eqref{eq:pairwise-freshness-skew}:
\begin{itemize}
    \item $(\implies)$ Suppose condition~\eqref{eq:app-freshness-condition} holds. For any operation $a \in V_{\mathrm{req}}$ and any witness $e_i \in W_{\mathrm{all}}$, we have $C_\pi(a) \le \max_{u \in V_{\mathrm{req}}} C_\pi(u)$ and $\min_{e \in W_{\mathrm{all}}} (t_{\mathrm{obs}}(e) + \Delta t_{\mathrm{eff}}(e)) \le t_{\mathrm{obs}}(e_i) + \Delta t_{\mathrm{eff}}(e_i)$. Substituting these bounds gives $C_\pi(a) + \delta_{\mathrm{ready}} + \delta_{\mathrm{exec}} + \epsilon_{\mathrm{skew}} \le t_{\mathrm{obs}}(e_i) + \Delta t_{\mathrm{eff}}(e_i)$, yielding $C_\pi(a) - t_{\mathrm{obs}}(e_i) \le \Delta t_{\mathrm{eff}}(e_i) - \delta_{\mathrm{ready}} - \delta_{\mathrm{exec}} - \epsilon_{\mathrm{skew}}$. For $a \in V_{\mathrm{manifest}}$, substituting $C_\pi(a) = t_{\mathrm{obs}}(e_j) + \ell_{\mathrm{post}}(a)$ gives Equation~\eqref{eq:pairwise-obs-skew}.
    \item $(\impliedby)$ Conversely, suppose~\eqref{eq:pairwise-freshness-skew} holds for all operations $a \in V_{\mathrm{req}}$ and witnesses $e_i \in W_{\mathrm{all}}$. Choosing $a^* = \arg\max_{a \in V_{\mathrm{req}}} C_\pi(a)$ and $e_i^* = \arg\min_{e \in W_{\mathrm{all}}} (t_{\mathrm{obs}}(e) + \Delta t_{\mathrm{eff}}(e))$ directly recovers condition~\eqref{eq:app-freshness-condition}.
\end{itemize}
This completes the proof.
\end{proof}

\subsection{Proof of Theorem~\ref{thm:np-hard} (NP-Hardness of ASP-DEC)}
\label{app:proof-nphard}

\begin{theorem}[Restatement of Theorem~\ref{thm:np-hard}]
The decision version of the Assurance Scheduling Problem (\textsc{ASP-DEC}) is NP-hard, even when operational precedence is empty ($D_{\mathrm{op}} = \emptyset$) and operation latencies are uniform.
\end{theorem}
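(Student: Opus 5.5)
We reduce from the decision version of \textsc{Minimum Weight Set Cover}, which is NP-complete (it contains unweighted \textsc{Set Cover}~\cite{karp1972reducibility}). Given universe $\mathcal{U}=\{u_1,\ldots,u_n\}$, family $\mathcal{S}=\{S_1,\ldots,S_m\}$, weights $w(S_j)\ge 0$ and bound $K$, we construct an instance $\mathcal{P}=\langle\Omega,\mathcal{A},B,\mathcal{D},\mathcal{T}\rangle$ with the following ingredients.

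\textbf{Obligations.} Each element $u_i$ yields an obligation $\omega_i$ with:
\begin{itemize}
\item a single evidence class $\varepsilon$;
\item quorum $k_{\omega_i}=1$ and diversity threshold $\kappa_E^{\min}(\omega_i)=1$;
\item generous lifetime $\Delta t_{\omega_i}=L$.
\end{itemize}

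\textbf{Operations.} Each set $S_j$ yields an operation $a_j$ with:
\begin{itemize}
\item cost $c(a_j)=w(S_j)$, zero tokens, and non-consequential risk ($\rho(a_j)\prec\tau_\Pi$, $\|\rho(a_j)\|=0$);
\item unit latency $\ell(a_j)=\hat\ell(a_j)=1$ and lifetime $\Delta t(a_j)=L$;
\item coverage $\operatorname{Cov}(a_j)=\{(\omega_i,\varepsilon): u_i\in S_j\}$;
\item only its own local root in $\mathcal{X}(a_j)$, and $\operatorname{Pre}(a_j)=\emptyset$.
\end{itemize}

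\textbf{Budgets and timing.} We set $D_{\mathrm{op}}=\emptyset$, $K_{\max}=\infty$, $t_0=0$, all $\delta$'s and $\epsilon_{\mathrm{skew}}$ equal to $0$, $T_{\mathrm{dead}}=B_{\mathrm{lat}}=1$, $L=2$, $B_{\$}=K$, and $B_{\mathrm{tok}}=B_{\mathrm{risk}}=0$. The construction is clearly polynomial.

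($\Rightarrow$) Given a cover $\mathcal{S}'$ of weight at most $K$, we build the plan $\pi$ with $V_\pi=\{a_j:S_j\in\mathcal{S}'\}$ and $E_\pi=\emptyset$. Every operation starts at $\sigma_\pi\equiv 0$ and $t_{\mathrm{dispatch}}=1$. The witness map is $\mathcal{M}_\pi(\omega_i)=\{a_j\}$ for some chosen $S_j\ni u_i$. We then check each invariant of Definition~\ref{def:prospective-admissibility} in turn:
\begin{itemize}
\item Precedence holds vacuously.
\item The dispatch bound~\eqref{eq:dispatch-time} holds, since $1\ge 0+1$, and the deadline holds, since $1\le 1$.
\item Cost is at most $K$; latency, tokens and risk are within their budgets; concurrency is unbounded.
\item Coverage and the quorum hold, since each $u_i$ is covered.
\item Freshness holds, since $1\le 0+\min(2,2)$.
\item The cut is $\kappa_E=1\ge 1$ by Definition~\ref{def:efd-cut}, because a single witness is exposed by its own root.
\item There are no consequential operations, so diagnostic admission invariance holds trivially.
\end{itemize}

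($\Leftarrow$) Given any admissible $\pi$ with $\operatorname{Cost}(\pi)\le K$, the quorum and coverage invariants force every $\omega_i$ to have some $a_j\in V_\pi$ with $u_i\in S_j$. Hence $\{S_j:a_j\in V_\pi\}$ is a cover, and its weight is $\sum_{a\in V_\pi}c(a)\le K$, because each operation is charged exactly once.

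The main obstacle is ensuring that the temporal and diversity side conditions are \emph{exactly} neutral, so that they neither reject a valid cover nor require extra operations. The delicate points are the following:
\begin{itemize}
\item The choice $T_{\mathrm{dead}}=1$ must admit the all-parallel schedule. This works because $K_{\max}=\infty$ and $D_{\mathrm{op}}=\emptyset$.
\item The lifetime $L$ must exceed the dispatch offset.
\item The diversity threshold $1$ must be met by a singleton witness set. The local-root convention guarantees $\kappa_E\ge 1$ once a quorum is assigned.
\end{itemize}
If positive budgets are preferred, setting $B_{\mathrm{tok}}=B_{\mathrm{risk}}=\infty$ works equally well. Membership of the decision problem in NP is not claimed; hardness alone suffices.
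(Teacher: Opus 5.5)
Your proposal is correct and uses essentially the same reduction from \textsc{Minimum Weight Set Cover} as the paper. Elements become unit-quorum, threshold-1 obligations; sets become unit-latency, non-consequential operations with cost $w(S_j)$ and a private fault root; the forward direction checks the admissibility invariants one by one, and the backward direction extracts the cover from quorum coverage. The only differences are parameter choices that are equally neutral: you take $T_{\mathrm{dead}}=B_{\mathrm{lat}}=1$, lifetime $2$ and $\delta_{\mathrm{exec}}=0$, where the paper takes $T_{\mathrm{dead}}=B_{\mathrm{lat}}=\Delta t=10$ and $\delta_{\mathrm{exec}}=1$.
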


\begin{proof}
We establish NP-hardness by constructing a polynomial-time reduction from the classical NP-complete problem \textsc{Minimum Weight Set Cover} (WSC)~\cite{karp1972reducibility} to \textsc{ASP-DEC}.

\paragraph{The WSC Decision Problem.}
An instance of \textsc{Minimum Weight Set Cover} is defined by a 4-tuple $\langle \mathcal{U}, \mathcal{S}, w, K \rangle$: a finite universe $\mathcal{U} = \{u_1, \ldots, u_n\}$, a collection of subsets $\mathcal{S} = \{S_1, \ldots, S_m\}$ with $S_j \subseteq \mathcal{U}$ and $\bigcup_{j=1}^m S_j = \mathcal{U}$, a non-negative cost function $w: \mathcal{S} \to \mathbb{R}_{\ge 0}$, and a target cost threshold $K \in \mathbb{R}_{\ge 0}$. The problem asks whether there exists a sub-collection $\mathcal{S}' \subseteq \mathcal{S}$ such that $\bigcup_{S_j \in \mathcal{S}'} S_j = \mathcal{U}$ and $\sum_{S_j \in \mathcal{S}'} w(S_j) \le K$.

\paragraph{Polynomial-Time Construction.}
Given an arbitrary instance $\langle \mathcal{U}, \mathcal{S}, w, K \rangle$ of WSC, we construct an instance of \textsc{ASP-DEC} $\mathcal{P} = \langle \Omega, \mathcal{A}, B, \mathcal{D}, \mathcal{T} \rangle$ with cost bound $K$ as follows:
\begin{enumerate}
    \item \textbf{Assurance Obligations $\Omega$:} For each $u_i \in \mathcal{U}$, create obligation $\omega_i = \langle \phi_{\omega_i}, \mathcal{E}_{\omega_i}, \Delta t_{\omega_i}, \kappa_E^{\min}(\omega_i), \mathcal{Q}_{\omega_i} \rangle$ with eligible evidence $\mathcal{E}_{\omega_i} = \{\varepsilon_{\mathrm{std}}\}$, freshness lifespan $\Delta t_{\omega_i} = 10$, diversity threshold $\kappa_E^{\min}(\omega_i) = 1$, and quorum threshold $k_{\omega_i} = 1$. Thus $|\Omega| = |\mathcal{U}| = n$.
    \item \textbf{Assurance Operations $\mathcal{A}$:} For each subset $S_j \in \mathcal{S}$, create an operation $a_j \in \mathcal{A}$ with financial cost $c(a_j) = w(S_j)$, latencies $\ell(a_j) = \hat{\ell}(a_j) = 1$, cognitive tokens $\operatorname{tok}(a_j) = 0$, risk $\rho(a_j) = \mathbf{0} \prec \tau_\Pi$ (non-consequential), potential coverage $\operatorname{Cov}(a_j) = \{ (\omega_i, \varepsilon_{\mathrm{std}}) \mid u_i \in S_j \}$, lifespan $\Delta t(a_j) = 10$, prerequisites $\operatorname{Pre}(a_j) = \emptyset$, and distinct epistemic exposure $\mathcal{X}(a_j) = \{f_j\}$ for distinct $f_j \in \mathbb{F} = \{f_1, \ldots, f_m\}$. Thus $|\mathcal{A}| = |\mathcal{S}| = m$.
    \item \textbf{Dependencies $\mathcal{D}$:} Set $D_{\mathrm{op}} = \emptyset$ and $D_{\mathrm{epi}} = \{ (a_j, f_j) \mid 1 \le j \le m \}$.
    \item \textbf{Budgets $B$ and Temporal Constraints $\mathcal{T}$:} Set $B_{\$} = K$, $B_{\mathrm{lat}} = 10$, $B_{\mathrm{tok}} = 0$, $B_{\mathrm{risk}} = 0$, and $K_{\max} = \infty$. Set $t_0 = 0$, $T_{\mathrm{dead}} = 10$, $\delta_{\mathrm{exec}} = 1$, $\delta_{\mathrm{verify}}=\delta_{\mathrm{mint}}=\delta_{\mathrm{disp}} = 0$, and $\epsilon_{\mathrm{skew}} = 0$.
\end{enumerate}
This construction maps universe elements to obligations and subsets to multi-obligation operations in time $O(|\mathcal{U}| \cdot |\mathcal{S}|)$, which is polynomial in the input size.

\paragraph{Equivalence Proof.}
We now prove that there exists a valid set cover $\mathcal{S}' \subseteq \mathcal{S}$ with $\sum_{S_j \in \mathcal{S}'} w(S_j) \le K$ if and only if there exists a prospectively admissible plan $\pi \models_{\mathrm{pros}} \mathcal{P}$ with $\operatorname{Cost}(\pi) \le K$:

\begin{itemize}
    \item $(\implies)$ Suppose $\mathcal{S}' \subseteq \mathcal{S}$ is a valid set cover with $\sum_{S_j \in \mathcal{S}'} w(S_j) \le K$.
    Construct assurance plan $\pi = \langle V_\pi, E_\pi, \sigma_\pi, \mathcal{M}_\pi \rangle$ as follows:
    \begin{itemize}
        \item Set of operations: $V_\pi = \{ a_j \in \mathcal{A} \mid S_j \in \mathcal{S}' \}$,
        \item Precedence edges: $E_\pi = \emptyset$,
        \item Scheduled start times: $\sigma_\pi(a_j) = 0$ for all $a_j \in V_\pi$,
        \item Witness mapping: For each $\omega_i \in \Omega$, since $\mathcal{S}'$ covers $\mathcal{U}$, there exists at least one $S_j \in \mathcal{S}'$ such that $u_i \in S_j$. Choose one such $a_j \in V_\pi$ and set $\mathcal{M}_\pi(\omega_i) = \{a_j\}$.
    \end{itemize}
    We verify prospective admissibility of $\pi$:
    \begin{enumerate}
        \item \emph{Precedence:} $E_\pi = \emptyset$, trivially satisfied.
        \item \emph{Latency \& Deadlines:} All operations start at 0 and have latency $\hat{\ell}=1$, so $\hat{C}_\pi(a_j) = 1$. With $\delta_{\mathrm{disp}} = 0$, target dispatch is $t_{\mathrm{dispatch}} = 1$. Then $t_{\mathrm{dispatch}} + \delta_{\mathrm{exec}} = 1 + 1 = 2 \le 10 = T_{\mathrm{dead}}$, satisfying deadline feasibility. Latency is $1 - 0 = 1 \le B_{\mathrm{lat}}$.
        \item \emph{Cost Budget:} $\operatorname{Cost}(\pi) = \sum_{a_j \in V_\pi} c(a_j) = \sum_{S_j \in \mathcal{S}'} w(S_j) \le K = B_{\$}$. Token and risk budgets are identically zero, satisfying all budget invariants.
        \item \emph{Quorum Coverage:} For each $\omega_i$, $(\omega_i, \varepsilon_{\mathrm{std}}) \in \operatorname{Cov}(a_j)$ because $u_i \in S_j$. The quorum requirement is $|\mathcal{M}_\pi(\omega_i)| = 1 \ge k_{\omega_i} = 1$.
        \item \emph{Freshness:} Effective lifespan is $\Delta t_{\mathrm{eff}} = \min(10, 10) = 10$. Then $t_{\mathrm{dispatch}} + \delta_{\mathrm{exec}} = 2 \le \sigma_\pi(a_j) + \Delta t_{\mathrm{eff}} = 0 + 10 = 10$.
        \item \emph{Epistemic Diversity:} Each obligation requires $\kappa_E^{\min}(\omega_i) = 1$. The singleton witness $\{a_j\}$ is decisive for its 1-of-1 rule and is exposed by $\{f_j\}$, so $\kappa_E(\mathcal{M}_\pi(\omega_i),\Gamma_{\omega_i}) = 1 \ge \kappa_E^{\min}(\omega_i)$.
        \item \emph{Diagnostics:} All operations have risk $\rho = \mathbf{0} \prec \tau_\Pi$, so no diagnostic sub-plans are required.
    \end{enumerate}
    Thus $\pi \models_{\mathrm{pros}} \mathcal{P}$ with $\operatorname{Cost}(\pi) \le K$.

    \item $(\impliedby)$ Conversely, suppose there exists a prospectively admissible plan $\pi = \langle V_\pi, E_\pi, \sigma_\pi, \mathcal{M}_\pi \rangle$ for $\mathcal{P}$ with $\operatorname{Cost}(\pi) \le K$.
    Define the sub-collection $\mathcal{S}' = \{ S_j \in \mathcal{S} \mid a_j \in V_\pi \}$.
    By the resource budget invariant~\eqref{eq:cost-budget}:
    \begin{equation*}
    \sum_{S_j \in \mathcal{S}'} w(S_j) = \sum_{a_j \in V_\pi} c(a_j) = \operatorname{Cost}(\pi) \le K.
    \end{equation*}
    Now consider any arbitrary universe element $u_i \in \mathcal{U}$. In the constructed ASP instance, $u_i$ corresponds to obligation $\omega_i \in \Omega$.
    By prospective quorum coverage (Definition~\ref{def:prospective-admissibility}), $\pi$ must assign candidate witness operations $\mathcal{M}_\pi(\omega_i) \subseteq V_\pi$ satisfying:
    \begin{align*}
    &|\mathcal{M}_\pi(\omega_i)| \ge k_{\omega_i} = 1 \quad \text{and} \\
    &\forall a \in \mathcal{M}_\pi(\omega_i), \; (\omega_i, \varepsilon_{\mathrm{std}}) \in \operatorname{Cov}(a).
    \end{align*}
    Thus, there exists at least one operation $a_j \in V_\pi$ such that $(\omega_i, \varepsilon_{\mathrm{std}}) \in \operatorname{Cov}(a_j)$.
    By our polynomial construction, $(\omega_i, \varepsilon_{\mathrm{std}}) \in \operatorname{Cov}(a_j)$ holds if and only if $u_i \in S_j$.
    Since $a_j \in V_\pi$, the corresponding subset $S_j$ belongs to $\mathcal{S}'$.
    Because this holds for every $u_i \in \mathcal{U}$, we conclude that:
    \begin{equation*}
    \bigcup_{S_j \in \mathcal{S}'} S_j = \mathcal{U}.
    \end{equation*}
    Hence, $\mathcal{S}'$ is a valid set cover of $\mathcal{U}$ with total weight at most $K$.
\end{itemize}

Since \textsc{Minimum Weight Set Cover} is NP-complete, the decision problem \textsc{ASP-DEC} is NP-hard.

\paragraph{Complexity Analysis: Multi-Obligation Coverage vs. EFD Cut Verification.}
We emphasize that Theorem~\ref{thm:np-hard} formally establishes that \textsc{ASP-DEC} is NP-hard due to multi-obligation witness coverage via reduction from Minimum Weight Set Cover.
Regarding epistemic diversity: (1)~\emph{Fixed thresholds:} To verify $\kappa_E(W,\Gamma_\omega)\ge h$, enumerate each fault coalition $C\subseteq\mathbb F$ with $|C|<h$ and check whether it exposes fewer than $k_\omega$ assigned witnesses. For fixed $h$, this costs $O(|W|\,h\,|\mathbb F|^{h-1})$. (2)~\emph{ILP separation:} At $h=2$, singleton-root cuts can be instantiated statically. At $h=3$, root pairs are also polynomial to enumerate. (3)~\emph{Decision-rule bound:} The completed local-root basis implies $\kappa_E(W,\Gamma_\omega)\le k_\omega$ once a quorum exists. Hence a threshold above the quorum cardinality is infeasible before scheduling. The NP-hardness reduction uses $h=k_\omega=1$ and is unaffected by this correction.
\end{proof}

\begin{table*}[t]
\centering
\footnotesize
\begin{tabularx}{\textwidth}{L{2.5cm}L{3.2cm}cccC{2.0cm}Y}
\toprule
\textbf{Operation Class} & \textbf{Example Implementation} & \textbf{Mean Cost} & \textbf{Nominal Latency} & \textbf{Lifespan $\Delta t$} & \textbf{Consequential? ($\rho \ge \tau_\Pi$)} & \textbf{Primary Epistemic Fault Domains ($\mathcal{X}$)} \\
\midrule
Passive Telemetry & Prometheus scrape / eBPF & \$0.0001 & 40\,ms & 5\,s & No & Local kernel, metrics daemon \\
Log Audit & Loki / OpenSearch query & \$0.002 & 250\,ms & 30\,s & No & Log forwarder, search index \\
Hardware Attestation & TPM quote / AWS Nitro quote & \$0.005 & 350\,ms & 300\,s & No & Hardware root-of-trust, CA \\
Static SMT Proof & Z3 / Dafny verification & \$0.02 & 1.8\,s & $\infty$ & No & Solver binary, CPU architecture \\
Redundant Model & Independent LLM judge & \$0.03 & 1.2\,s & 60\,s & No & Model weights, inference stack \\
Consensus Query & Raft / Paxos quorum read & \$0.001 & 80\,ms & 10\,s & No & Consensus network quorum \\
Filesystem Freeze & LVM / ZFS snapshot probe & \$0.01 & 1.5\,s & 15\,s & \textbf{Yes} & Storage controller, I/O bus \\
Network Probe & Synthetic BGP / ping burst & \$0.005 & 600\,ms & 8\,s & \textbf{Yes} & Network fabric, firewall \\
Failover Dry-Run & Staging replica promotion & \$0.25 & 6.5\,s & 45\,s & \textbf{Yes} & Database engine, replication stream \\
Human Sign-off & Slack / PagerDuty sign-off & \$5.00 & 180\,s & 600\,s & No & Human operator, IdP token \\
\bottomrule
\end{tabularx}
\caption{Taxonomy of heterogeneous assurance operations in \aas, detailing financial expense, observation latency, temporal decay, and operational consequence.}
\label{tab:operation-catalogue}
\end{table*}

\subsection{Proof of Theorem~\ref{thm:benders-optimality} (Convergence and Conditional Global Optimality)}
\label{app:proof-optimality}

\begin{theorem}[Restatement of Theorem~\ref{thm:benders-optimality}]
The decomposed Logic-Based Benders Decomposition (LBBD) algorithm terminates in a finite number of iterations. Furthermore, for any fully modeled or flattened problem instance $\mathcal{P}$, if the master solver executes to completion without timeout ($T_{\mathrm{ctrl}} = \infty$) and the subproblem is evaluated authoritatively, the returned plan $\pi^*$ is \textbf{globally cost-optimal} over all prospectively admissible plans:
$\operatorname{Cost}(\pi^*) = \min_{\pi \in \Pi_{\mathrm{adm}}(\mathcal{P})} \sum_{a \in V_\pi} c(a)$.
If the solver reaches computation timeout $T_{\mathrm{ctrl}} < \infty$ and returns the best incumbent integer solution, or falls back to the greedy heuristic (Section~\ref{sec:greedy-efd}), the resulting plan is \textbf{approximately optimized and prospectively feasible}, with no claim of global optimality. Consequential diagnostic sub-plans admitted dynamically at runtime are coordinated outside this static optimality theorem.
\end{theorem}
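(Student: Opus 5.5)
The proof has two parts: finite termination, then conditional optimality via a standard logic-based Benders argument.

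For termination, I will use the fact that the master ILP~\eqref{eq:ilp-obj}--\eqref{eq:ilp-binary} has a finite feasible set. It lies inside $\{0,1\}^{|\mathcal{A}|}\times\{0,1\}^{|\mathcal{A}|\times|\Omega|}$. At each iteration, either the temporal subproblem returns a schedule and the loop stops, or a cut is appended. The key step is to show that this cut excludes the current incumbent $(\mathbf{x}^*,\mathbf{y}^*)$:
\begin{itemize}
\item An assignment cut~\eqref{eq:ilp-assign-cut} with $A_{\mathrm{conflict}}\subseteq\mathbf{y}^*$ is violated by $\mathbf{y}^*$, since its left side equals $|A_{\mathrm{conflict}}|$.
\item The full-candidate no-good cut is violated by $\mathbf{x}^*$ by construction.
\item A chain cut~\eqref{eq:ilp-chain-cut} with $V_{\mathrm{chain}}\subseteq V^*$ is also violated.
\end{itemize}
I also need that the master stays finite and that cuts are never removed. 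Then no integer point is returned twice, so there are at most $2^{|\mathcal{A}|(1+|\Omega|)}$ iterations before the master becomes infeasible or a schedule is found. Each master solve and each subproblem call is finite, because the grid in Algorithm~\ref{alg:backward-scheduler} has $K_{\mathrm{grid}}$ points and the exact oracle enumerates finitely many orders.

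For optimality, I first show that the master, with every cut accumulated so far, is a relaxation of $\Pi_{\mathrm{adm}}(\mathcal{P})$. Every admissible plan $\pi$ maps to a vector $(\mathbf{x}^\pi,\mathbf{y}^\pi)$ with $x_a=[a\in V_\pi]$ and $y_{a,\omega}=[a\in\mathcal{M}_\pi(\omega)]$. This vector has the same objective value, $\sum_a c(a)x_a=\operatorname{Cost}(\pi)$, and satisfies~\eqref{eq:ilp-link}--\eqref{eq:ilp-risk}:
\begin{itemize}
\item Linking and prerequisite closure follow from $\mathcal{M}_\pi(\omega)\subseteq V_\pi$ and $D_{\mathrm{op}}\cap(V_\pi\times V_\pi)\subseteq E_\pi$. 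This needs $\operatorname{Prereq}$ to be included in $V_\pi$, which the flattened-model hypothesis supplies.
\item Eligibility and quorum come from invariant~4 of Definition~\ref{def:prospective-admissibility}.
\item Budgets come from invariant~3.
\item For the diversity rows~\eqref{eq:ilp-cut}: $\kappa_E\ge\kappa_E^{\min}$ means that no $C$ with $|C|<\kappa_E^{\min}$ covers a $k_\omega$-subset of the witnesses. Hence fewer than $k_\omega$ witnesses meet $C$, by Definition~\ref{def:efd-cut}.
\end{itemize}

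The remaining step is soundness of the cuts: I must show that no admissible $\pi$ violates a generated cut.
\begin{itemize}
\item \textbf{Chain cuts.} If $V_{\mathrm{chain}}\subseteq V_\pi$, then precedence~\eqref{eq:precedence-constraint}, the release bound $\sigma_\pi\ge t_0$, and readiness~\eqref{eq:dispatch-time} give $t_{\mathrm{dispatch}}\ge t_0+\sum\hat{\ell}+\delta_{\mathrm{ready}}$. This contradicts~\eqref{eq:deadline-constraint}. The $\epsilon_{\mathrm{guard}}$ terms need care, and the argument is cleanest if guard-padded latencies are treated as the modeled $\hat{\ell}$.
\item \textbf{Assignment cuts.} These are sound because, under the hypothesis, they are emitted only when the exact oracle certifies that every schedule is infeasible for any plan containing those assignment pairs. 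That certificate rests on freshness being monotone in the assignment set: adding assigned pairs only adds constraints~\eqref{eq:freshness-invariant}.
\item \textbf{Full-candidate no-good cuts.} These only exclude the exact vector the exact oracle has declared infeasible.
\end{itemize}

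With soundness in hand, I let $\pi^*$ be the first candidate certified feasible. For any admissible $\pi$, $\operatorname{Cost}(\pi^*)\le\operatorname{Cost}(\pi)$, because $\pi$ is feasible for the relaxed master that $\pi^*$ optimally solved. If the master instead becomes infeasible, then $\Pi_{\mathrm{adm}}=\emptyset$. The approximate and timeout clauses need only feasibility, which follows from Theorem~\ref{thm:temporal-correctness} for any returned schedule.

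I expect the main obstacle to be making the relaxation correspondence exact. Two points must be pinned down:
\begin{itemize}
\item The subproblem receives the whole $\mathbf{y}^*$, including assignments in excess of quorum, so it must be allowed to drop surplus witnesses. Otherwise, when feasibility is checked on $\mathbf{y}^*$ alone, an optimal $\mathbf{x}^*$ with a bad surplus assignment could be wrongly cut. My first attempt is to argue that the exact oracle checks some quorum-satisfying restriction of the witnesses, so the no-good on $\mathbf{y}^*$ excludes only vectors with no admissible realization.
\item The assignment-cut certificate must state exactly this infeasibility for every completion. I will read the exact-oracle hypothesis as supplying that.
\end{itemize}
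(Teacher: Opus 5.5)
Your proposal is correct and follows the paper's proof in structure. Termination comes from finiteness of the binary master space, with each cut excluding the incumbent; optimality comes from the cut-augmented master remaining an outer relaxation of $\Pi_{\mathrm{adm}}(\mathcal{P})$, and your plan-to-vector map, diversity-row derivation, and $\epsilon_{\mathrm{guard}}$ and certificate caveats make explicit what the paper merely asserts. One claim to drop: the $\mathbf{x}$-only full-candidate no-good is not sound relative to an oracle that checked only $\mathbf{y}^*$, since it also removes every reassignment $(\mathbf{x}^*,\mathbf{y}')$ of the same portfolio; it is reserved for \textsc{NotFound}, never arises under the exact oracle, and there certified assignment cuts forbid only the offending combination of pairs, so your surplus-witness worry does not materialize.
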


\begin{proof}
We divide the proof into two parts: finite termination and global optimality.

\paragraph{Part 1: Finite Termination.}
Let $|\mathcal{A}|$ denote candidate assurance operations and $|\Omega|$ denote mandatory obligations. The decision space of the Master ILP is bounded by binary assignments $\mathbf{x} \in \{0, 1\}^{|\mathcal{A}|}, \mathbf{y} \in \{0, 1\}^{|\mathcal{A}| \times |\Omega|}$. The total number of integer assignment configurations is at most $2^{|\mathcal{A}| \cdot (|\Omega| + 1)}$, which is strictly finite.
At each iteration $k$ where candidate solution $(\mathbf{x}^{(k)}, \mathbf{y}^{(k)})$ is rejected by the Authoritative Temporal Solver:
(1)~If rejection is caused by an unavoidable directed precedence chain whose critical path exceeds available lead time ($t_{\mathrm{CP}} + \delta_{\mathrm{ready}} > T_{\mathrm{dead}} - t_0 - \delta_{\mathrm{exec}}$), structural cut~\eqref{eq:ilp-chain-cut} ($\sum_{a \in V_{\mathrm{chain}}} x_a \le |V_{\mathrm{chain}}| - 1$) is added. This strictly forbids $V_{\mathrm{chain}}$ and all supersets from being generated in future iterations, pruning at least one point in $\{0, 1\}^{|\mathcal{A}|}$.
(2)~Otherwise, the failure is caused by freshness expiration or concurrency collision among assigned witnesses. When certified by an exact subproblem evaluation, the solver identifies active conflicting assignment pairs $A_{\mathrm{conflict}} = \{(a, \omega) \mid y_{a, \omega}^{(k)} = 1\}$, and appends assignment cut~\eqref{eq:ilp-assign-cut} ($\sum_{(a, \omega) \in A_{\mathrm{conflict}}} y_{a, \omega} \le |A_{\mathrm{conflict}}| - 1$). If heuristic evaluation exhausts without certified proof ($\textsc{NotFound}$), a safe full-candidate combinatorial no-good cut ($\sum_{a: x_a^{(k)}=1} (1 - x_a) + \sum_{a: x_a^{(k)}=0} x_a \ge 1$) is appended. Each cut strictly eliminates candidate assignment $(\mathbf{x}^{(k)}, \mathbf{y}^{(k)})$ without pruning unvisited alternatives.
Because each iteration strictly eliminates at least one previously unvisited integer assignment point and the total number of integer assignments is finite, the LBBD loop must terminate in a finite number of iterations (at most $2^{|\mathcal{A}| \cdot (|\Omega| + 1)}$).

\paragraph{Part 2: Soundness of Cuts and Global Optimality.}
To establish global cost optimality upon non-timeout termination:
\begin{enumerate}
    \item \emph{Outer Relaxation Property:} The initial Master ILP~\eqref{eq:ilp-obj}--\eqref{eq:ilp-binary} without conflict cuts enforces all necessary conditions of prospective admissibility: operational prerequisite closure ($x_a \le x_u$), multi-obligation quorum coverage, \efd diversity cuts, and resource budgets. Thus, the feasible region of the Master ILP is an outer relaxation of $\Pi_{\mathrm{adm}}(\mathcal{P})$.
    \item \emph{Soundness of Pruning:} A cut is valid if it prunes no prospectively admissible plan $\pi \in \Pi_{\mathrm{adm}}(\mathcal{P})$. Structural cuts are sound because if a subset of operations $V_{\mathrm{chain}}$ has a topological critical path exceeding $T_{\mathrm{dead}} - t_0 - \delta_{\mathrm{ready}} - \delta_{\mathrm{exec}}$, latency monotonicity guarantees that no schedule containing $V_{\mathrm{chain}}$ can complete before the hard deadline under conservative latencies. Assignment cuts are sound when subproblem infeasibility is certified authoritatively by an exact temporal oracle (evaluating all discrete topological serializations and continuous dispatch intervals via STN enumeration). Tier~1 and Tier~2 backward heuristics operate on a discrete grid and provide polynomial-time sound schedule admission, but heuristic exhaustion ($\textsc{NotFound}$) cannot mathematically prove continuous infeasibility; hence, heuristic failures use safe full-candidate combinatorial cuts to avoid unsound over-pruning.
    \item \emph{Optimality of the First Feasible Candidate:} The Master ILP minimizes the exact non-decreasing linear objective function $Z(\mathbf{x}) = \sum_{a \in \mathcal{A}} c(a) x_a$. When subproblems are evaluated with exact completeness, cuts prune only provably infeasible points, so the sequence of optimal objective values returned by the Master ILP is monotonically non-decreasing: $Z(\mathbf{x}^{(1)}) \le Z(\mathbf{x}^{(2)}) \le \cdots \le Z(\mathbf{x}^*)$. When candidate $\mathbf{x}^*$ is first verified as temporally feasible, it satisfies all constraints of $\Pi_{\mathrm{adm}}(\mathcal{P})$ and its cost is a valid lower bound on all unvisited feasible candidates. Hence, $\operatorname{Cost}(\pi^*) = \min_{\pi \in \Pi_{\mathrm{adm}}(\mathcal{P})} \sum_a c(a)$, proving conditional global optimality on the modeled instance $\mathcal{P}$.
\end{enumerate}
This completes the proof.
\end{proof}

%% file: appendices/b-operation-catalogue.tex
\section{Catalogue of Assurance Operations}
\label{app:operation-catalogue}

Table~\ref{tab:operation-catalogue} provides a representative taxonomy of assurance operations supported in the \aas reference implementation, categorized across operational risk, latency, cost, and typical validity windows.

%% file: appendices/c-open-problems.tex
\section{Open Problems and Roadmap}
\label{app:open-problems}

\noindent\textbf{Negative Knowledge Induction (The Hardknock Bridge).}
When an assurance plan fails to reach admission---or an admitted action triggers an unexpected incident---the execution trace can reveal omitted epistemic dependencies and correlated failures. A remaining research question is how to use such traces to revise future admission obligations $\Omega$ without overfitting to individual incidents.

\noindent\textbf{Human-Referential Fidelity and Intent Drift.}
While \aas models physical state freshness and epistemic diversity, it treats intent predicate $\phi_\omega$ as fixed. In long-running autonomous workflows, operator intent or business context may drift while evidence is gathered. Formulating formal fidelity metrics connecting physical evidence directly to human-referential intent remains an active research direction.

\noindent\textbf{Multi-Tenant Cross-Sovereignty Scheduling.}
When orchestrating operations across multiple cloud domains, assurance must navigate heterogeneous trust roots, non-comparable clocks, and data-residency boundaries. Extending \aas to federated epistemic markets is a compelling direction for future architectures.